\documentclass[11pt]{article}

\usepackage{amsmath}
\usepackage{amsthm}
\usepackage{amssymb}
\usepackage{graphicx}
\usepackage{color}
\usepackage{tikz}
\usepackage{wrapfig}
\usepackage{mathrsfs}
\usepackage{algorithmic}
\usepackage{xr} % cross-references
\usepackage{dsfont} % blackboard numerals

\usepackage[top=1in, bottom=1in, left=0.83in, right=0.8in]{geometry}
\usepackage{setspace}
\usepackage{titling}

\newcommand{\indep}{\rotatebox[origin=c]{90}{$\models$}}

\newcommand{\dx}[1]{\ \text{d} #1}
\newcommand{\E}{\mathbb{E}}

\newcommand{\indicator}[1]{\mathds{1}\{ #1 \}}

\newcommand{\x}{\mathbf{x}}
\newcommand{\T}{\mathbf{t}}

\newtheorem{lem}{Lemma}
\newtheorem{defn}{Definition}
\newtheorem*{defn*}{Definition}
\newtheorem{prop}{Proposition}

\usepackage[numbers,sort&compress]{natbib}
\setlength{\bibsep}{1.0pt}

\title{Risk ratios for contagious outcomes}

\author{Olga Morozova$^1$, Ted Cohen$^1$, Forrest W. Crawford$^{2,3,4}$ \\[1em]
\small 1. Department of Epidemiology of Microbial Diseases, Yale School of Public Health \\
\small 2. Department of Biostatistics, Yale School of Public Health \\
\small 3. Department of Ecology \& Evolutionary Biology, Yale University \\
\small 4. Yale School of Management}

\date{}

%%%%%%%%%%%%%%%%%%%%%%%%%%%%%%%

\begin{document}
\maketitle

%%%%%%%%%%%%%%%%%%%%%%%%%%%%%%

\begin{abstract}
%\doublespacing

\noindent 
The risk ratio is a popular tool for summarizing the relationship between a binary covariate and outcome, even when outcomes may be dependent. Investigations of infectious disease outcomes in cohort studies of individuals embedded within clusters -- households, villages, or small groups -- often report risk ratios.  Epidemiologists have warned that risk ratios may be misleading when outcomes are contagious, but the nature and severity of this error is not well understood. In this study, we assess the epidemiologic meaning of the risk ratio when outcomes are contagious. We first give a structural definition of infectious disease transmission within clusters, based on the canonical susceptible-infective epidemic model. From this standard characterization, we define the individual-level ratio of instantaneous risks (hazard ratio) as the inferential target, and evaluate the properties of the risk ratio as an estimate of this quantity.  We exhibit analytically and by simulation the circumstances under which the risk ratio implies an effect whose direction is opposite that of the true individual-level hazard ratio. In particular, the risk ratio can be greater than one even when the covariate of interest reduces both individual-level susceptibility to infection, and transmissibility once infected. We explain these findings in the epidemiologic language of confounding and relate the direction bias to Simpson's paradox.  \\[1em]
\textbf{Keywords:} 
confounding,
infectious disease,
Simpson's paradox,
transmission
\end{abstract}

%%%%%%%%%%%%%%%%%%%%%%%%%%%%%%%%
%%%%%%%%%%%%%%%%%%%%%%%%%%%%%%%%

%\doublespacing
%\newpage
\section{Introduction}

Risk ratios are often recommended for summarizing the relationship between a covariate and an outcome in epidemiology \citep{sinclair1994clinically,davies1998can,bracken1999avoidable,skov1998prevalence,jewell2003statistics,Greenland2004Model,liberman2005much,katz2006relative,rothman2008modern,lumley2006relative}.  Risk ratios, sometimes called ``prevalence ratios'' or ``prevalence proportion ratios'' are simple and easy to compute \citep{mcnutt2003estimating,zou2004modified,spiegelman2005easy}, either by aggregating individual-level outcomes, or obtained directly from population-level surveillance data. When outcomes may exhibit dependence within clusters, ``robust" standard errors are available \citep{zou2013extension}.  Many researchers report risk ratios in studies of infectious disease outcomes within clusters or single communities of interacting individuals \citep{jackson2011serologically, bower2016effects, dowell1999transmission, araujo2015risk, seward2004contagiousness, kim2012secondary}.

The risk ratio is known to have desirable robustness properties when outcomes are dependent \citep{zou2013extension,lumley2006relative}, but infectious disease epidemiologists have repeatedly warned that when outcomes are contagious, simplistic summaries of risk may be misleading \citep{longini1982estimating,longini1988statistical,koopman1991assessing,halloran1995causal,halloran1997study,chick2001bias,eisenberg2003bias,koopman2004modeling,pitzer2012linking,ohagan2014estimating,sharker2017estimation}.  Though they are often assumed to be time-invariant, risk ratios can vary over time in both observational \citep{goldstein2017temporally} and experimental \citep{scott2014timing} studies.  Error can arise when analytical methods do not separate the effects of a covariate on susceptibility to infection from infectiousness once infected \citep{halloran1995causal,kenah2015semiparametric}.  Several authors have suggested that epidemiologists must take exposure to infection into account when assessing risk factors for infectious disease outcomes \citep{longini1988statistical, rampey1992discrete, halloran1994exposure, rhodes1996counting, kenah2015semiparametric}.  However, to our knowledge, none have explained formally why the risk ratio may not be a satisfactory measure of association under contagion, and how its properties depend on the covariate of interest and the epidemiologic features of disease transmission.  

In this paper, we investigate the properties of the risk ratio when outcomes are contagious within clusters.  We first introduce a canonical definition of infectious disease contagion, based on the widely used susceptible-infective epidemic model \citep{anderson1992infectious,andersson2012stochastic}.  This structural description of disease transmission formalizes the epidemiologic intuition that a susceptible individual's risk of infection at a given time depends both on their own traits, and those of their infectious contacts \citep{rhodes1996counting,kenah2013non,kenah2015semiparametric}.  We define the inferential target as the ratio of instantaneous individual-level risks (hazards) of infection under a one-unit change in the value of a covariate \citep{halloran1997study,ohagan2014estimating}. Because the risk ratio is a measure of association between a covariate and outcome, investigators may expect that it provides a reasonable summary of the individual-level relationship between the covariate and susceptibility to infection, a property we call ``direction-unbiasedness''.  We show that this intuition is often correct when the covariate is jointly independent within clusters, the outcome is not contagious, or when the covariate does not affect transmissibility.  However, the risk ratio is in general not direction-unbiased when contagion is present.  

We characterize the epidemiologic features of infectious disease transmission that may lead investigators to report seriously misleading risk ratio estimates in the simplest setting of clusters of size two.  Further analytic results and simulations provide insight into the risk ratio under contagion in clusters of larger size and in randomized trials. Finally, we explain these results in the familiar epidemiologic context of bias induced by confounding.

%%%%%%%%%%%%%%%%%%%%%%%%%%%%%%%%

\section{Setting}

Consider a collection of clusters (e.g. households, workplaces, villages), with $n_i$ subjects in cluster $i$.  Let $Y_{ij}(t)$ be the binary indicator of infection for subject $j$ in cluster $i$ on or before time $t \ge 0$.  Let $T_i$ be the time at which outcomes in cluster $i$ are observed and recorded by researchers.  We consider a single time-invariant binary covariate $x_{ij}$ for subject $j$ in cluster $i$.  The risk ratio is defined as
\begin{equation}
  RR = \frac{\E[Y_{ij}(T_i)|x_{ij}=1]}{\E[Y_{ij}(T_i)|x_{ij}=0]} .
  \label{eq:rr}
\end{equation}
The risk ratio is implicitly a function of the observation time $T_i$ for each cluster $i$ \citep{smith1984assessment}.  

%%%%%%%%%%%%%%%%%%%%%

\subsection{Data-generating process}

\begin{figure}
  \centering
 \includegraphics[width=\textwidth]{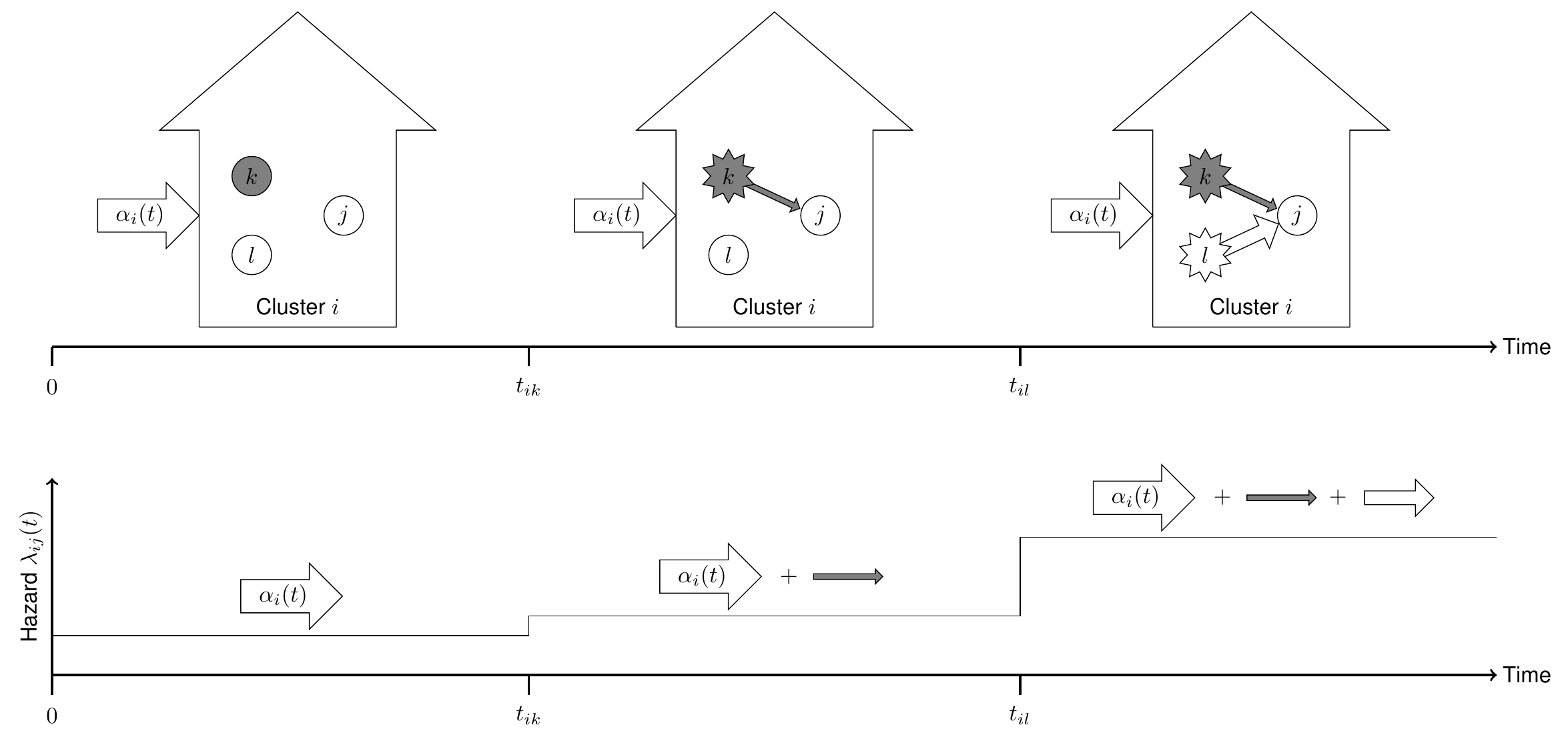}
 \caption{Schematic illustration of hazards in the data-generating process for a cluster of size three. Gray color indicates $x=1$.  Before the first infection, subject $j$ experiences only an exogenous (community) force of infection $\alpha_i(t)$, because neither $k$ nor $l$ is infected.  After $k$ is infected, the hazard to $j$ increases in proportion to the infectiousness of $k$, which is a function of $x_{ik}=1$. Likewise, after $l$ is infected, the hazard to $j$ increases in proportion to the infectiousness of $l$, with $x_{il}=0$. Below, the sum of hazards experienced by subject $j$ is shown over time.  }
  \label{fig:haz}
\end{figure}

We describe a data-generating process based on the canonical susceptible-infective model of infectious disease contagion within clusters \citep{anderson1992infectious,andersson2012stochastic,rhodes1996counting,kenah2013non,kenah2015semiparametric}, then characterize the hazard ratio corresponding to a one-unit change in a covariate associated with susceptibility to disease. The susceptible-infective model captures the intuition that the risk to a susceptible individual at a given time is given heuristically by 
\begin{equation}
  \text{risk of infection} = (\text{susceptibility})\times (\text{force of infection}) 
\label{eq:heuristic}
\end{equation}
where ``susceptibility'' is a function of the subject's own characteristics, and ``force of infection'' summarizes the risk transmitted by that subject's infectious contacts. 

To formalize this risk, let $t_{ij}$ be the minimum of the infection time of subject $j$ in cluster $i$ and the observation time $T_i$, so that $Y_{ij}(t)=0$ for $t \le t_{ij}$, and $Y_{ij}(t)=1$ for $t_{ij} < t \le T_i$. A subject $j$ in cluster $i$ is called susceptible at time $t$ if $Y_{ij}(t)=0$. 
Consider the possible sources of transmission to a susceptible subject $j$ in cluster $i$.  First, $j$ may be infected by exposure to an exogenous source of infection (sometimes called the community force of infection if clusters are households).  Let $\tau_{ij}^e$ be the waiting time for $j$ to be infected from this exogenous source, and let $\lambda_{ij}^e(t)$ be the hazard of this event at time $t$. Second, suppose another subject $k$ in cluster $i$ becomes infected at a time $t_{ik}\in [0,T_i)$, which is defined similarly to $t_{ij}$ as the minimum of the infection time of subject $k$ and cluster $i$ observation time $T_i$. Suppose subject $j$ is not infected at time $t_{ik}$, $Y_{ij}(t_{ik})=0$.  Let $\tau_{ij}^k$ be the waiting time (measured since $t_{ik})$ for $k$ to transmit the infection to $j$, and let $\lambda_{ij}^k(t)$ be the hazard of this event at time $t > t_{ik}$.  
For each cluster $i$ and susceptible subject $j$, 
the total hazard experienced by a susceptible individual $j$ 
is the sum of these hazards,
\begin{equation}
		\lambda_{ij}(t) = \lambda_{ij}^e(t) + \sum_{k=1}^{n_i} \lambda_{ij}^k(t) Y_{ik}(t).
		\label{eq:haz_short}
\end{equation}
The additive form of \eqref{eq:haz_short} arises because $j$ experiences \emph{competing risks} of infection: from the exogenous source, and from each of their infectious contacts. The transmission hazard from subject $k$, $\lambda_{ij}^k(t)$, is only present if $k$ is infected, that is, $Y_{ik}(t)=1$.   Under this simple generative process, subjects may not be re-infected. 

We assume for simplicity that the hazards $\lambda_{ij}^e(t)$ and $\lambda_{ij}^k(t)$ are Cox-type models: each decomposes into the product of a possibly time-varying force of infection and a function of covariates. Let $\lambda_{ij}^e(t) = \alpha_i(t)e^{x_{ij}\beta}$ where $\alpha_i(t)$ is the possibly time-varying exogenous force of infection to cluster $i$, and $\beta$ is a susceptibility parameter corresponding to the binary covariate $x$.  Likewise, when $t>t_{ik}$, let $\lambda_{ij}^k(t) = \omega_{ikj}(t-t_{ik}) e^{x_{ij}\beta + x_{ik}\gamma}$ where $\omega_{ikj}(t-t_{ik})$ is the possibly time-varying force of infection from subject $k$ to subject $j$ in cluster $i$, and $\gamma$ is an infectiousness parameter corresponding to the binary covariate $x$.  Then the total infection hazard to susceptible subject $j$ in cluster $i$ at time $t$ becomes 
\begin{equation}
  \lambda_{ij}(t) = e^{x_{ij}\beta} \left(\alpha_i(t) + \sum_{k=1}^{n_i} Y_{ik}(t) \omega_{ikj}(t-t_{ik}) e^{x_{ik}\gamma} \right).
  \label{eq:haz}
\end{equation}
The multiplicative relationship between susceptibility $e^{x_{ij}\beta}$ and the total force of infection in \eqref{eq:haz} mirrors the heuristic description of infection risk given by \eqref{eq:heuristic}. When $x_{ij}=x$ is constant across individuals, $\alpha_i(t)=0$, and $\omega_{ikj}(t-t_{ik})=\omega$, the process becomes the standard continuous-time Markov susceptible-infective model within clusters.  The formulation of the hazard of infection in \eqref{eq:haz} mirrors a transmission model proposed for semi-parametric relative risk regression \citep{kenah2015semiparametric}. The model captures temporal changes in post-infection transmission via the functional form of $\omega_{ikj}(t-t_{ik})$, which can accommodate latency or other changes in infectiousness over time.  

The hazard ratio ($HR$) is the ratio of these instantaneous risks under different values of the covariate $x$, holding individual-level force of infection constant:
\begin{equation}
 HR = \frac{\lambda_{ij}(t|x_{ij}=1)}{\lambda_{ij}(t|x_{ij}=0)} = e^{\beta}.
 \label{eq:hr}
 \end{equation}
 The hazard ratio summarizes the individual-level association between the covariate $x$ and susceptibility to infection at time $t$ \citep{smith1984assessment,halloran1997study,ohagan2014estimating}.  
 
 We emphasize that we do not treat the data-generating process characterized by \eqref{eq:haz} as an inferential model. We have not specified the possibly time-varying hazards $\alpha_i(t)$ and $\omega_{ikj}(t)$, nor showed that any feature of the process is identified by a particular observation scenario. Instead, \eqref{eq:haz} characterizes the transmission dynamics of infection by which the observable data are assumed to be generated.  Figure \ref{fig:haz} shows a schematic depiction of the data-generating process in a household of size three, and Table \ref{table:par} summarizes the parameters that define this process. 

\begin{table}
\caption{Summary of parameters in the data-generating process. }
\label{table:par} \centering%
\begin{tabular}{l l}
{} & {} \\ \hline
Notation & Definition \\ 
{} & {} \\ \hline
$n_i$ & Size of cluster $i$ \\
{} & {} \\ 
$T_i$ & Observation time for cluster $i$ \\
{} & {} \\ 
$x_{ij}$ & Covariate of interest, time-invariant \\
{} & {} \\ 
$Y_{ij}(t)$ & Binary indicator of infection by time $t$ \\
{} & {} \\ 
$\beta$ & Susceptibility parameter for covariate $x$ \\ 
{} & {} \\ 
$\gamma$ & Infectiousness parameter for covariate $x$ \\
{} & {} \\ 
$\alpha_i(t)$ & Exogenous force of infection, a function of time \\
{} & {} \\ 
$\omega_{ikj}(t)$ & Force of infection from infectious $k$ to susceptible $j$, \\
& a function of time since infection of $k$ \\
\hline
\end{tabular}
\end{table}

 It seems reasonable to expect the risk ratio \eqref{eq:rr} for the binary variable $x$, as a \emph{marginal} or \emph{population-level} measure of association, to be meaningful for assessment of the ratio of \emph{conditional} risks \eqref{eq:hr} experienced by an individual. 
 Since the hazard ratio \eqref{eq:hr} evaluated at a time $t$ is time-invariant, we might expect the risk ratio, as a cross-sectional measure of association at time $t$, to have similar properties. To make this notion more formal, we define a general property that we would like the risk ratio to satisfy.

\begin{defn}[Direction-unbiasedness of risk ratio]
If $HR<1$, then $RR<1$,
if $HR=1$, then $RR=1$, and
if $HR>1$, then $RR>1$.
  \label{defn:unbiased}
\end{defn}

\noindent When Definition \ref{defn:unbiased} holds, the ratio of marginal risks \eqref{eq:rr} may be regarded as a reasonable surrogate for the ratio of individual conditional risks \eqref{eq:hr}.  Definition \ref{defn:unbiased} is a relatively weak requirement: it does not imply monotonicity in the risk ratio as a function of the hazard ratio, nor any particular functional relationship between the two. We say that for a particular study design and values of parameters in \eqref{eq:haz}, the risk ratio is \emph{direction-unbiased} if Definition \ref{defn:unbiased} holds.

%%%%%%%%%%%%%%%%%%%%%%%%%%%%%%%%

\section{Results}

%%%%%%%%%%%%%%%

\subsection{Clusters of size two}

We first consider a simple parametric version of \eqref{eq:haz} with two-person clusters and balanced covariate values for which a variety of precise analytic results can be derived. This setting is based on a two-person infectious disease contagion model introduced previously \citep{vanderweele2011bounding,vanderweele2012components,ogburn2014vaccines}, and serves to illustrate the potential for the risk ratio \eqref{eq:rr} to give a misleading summary of association under contagion. Clusters of size two appear in empirical study designs, including HIV transmission in couples \citep{carpenter1999rates, biraro2013hiv}, and mother-to-child transmission of \textit{Staphylococcus aureus} \citep{regev2009parental, leshem2012transmission}.
Consider the data-generating process \eqref{eq:haz}, where each cluster $i$ consists of exactly two subjects: $n_i = 2$.  Assume also that the covariate is balanced within the cluster, subject 1 has $x_{i1} = 1$ and subject 2 has $x_{i2} = 0$; all subjects are uninfected at baseline, $Y_{ij}(0) = 0$; and follow-up time is constant, $T_i = T$ for all $i$.  Additionally, assume there is a constant exogenous force of infection $\alpha_i(t)=\alpha$, and constant within-cluster contagion $\omega_{ikj}(t-t_{ik}) = \omega$ per susceptible $j$ and infected $k$.  The hazards of infection experienced by subjects $1$ and $2$ in cluster $i$ become $\lambda_{i1} (t) = e^\beta (\alpha + \omega Y_{i2}(t))$ and $\lambda_{i2} (t) = \alpha + \omega e^\gamma Y_{i1}(t)$ respectively.  The following results establish the properties of the risk ratio as an approximation to the hazard ratio in several relevant special cases. Proofs of all results are given in the Supplement.

When there is no within-cluster contagion, as is the case in studies of non-transmissible outcomes, the risk ratio is direction-unbiased. 
\begin{prop}[No within-cluster contagion]
  Suppose $\alpha>0$ and $\omega=0$.  For any $T>0$, the risk ratio is direction-unbiased. 
\label{prop:biasnoclustercont}
\end{prop}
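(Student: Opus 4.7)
The plan is to exploit the fact that $\omega=0$ decouples the two subjects: the hazard to each individual in \eqref{eq:haz} loses its dependence on the infection status of the other, so the problem reduces to two independent competing-risks computations with constant hazards.

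First, I would substitute $\omega=0$ and the constant-$\alpha$ assumption into the cluster-of-size-two hazards, yielding $\lambda_{i1}(t)=\alpha e^{\beta}$ and $\lambda_{i2}(t)=\alpha$. Because these hazards are time-homogeneous and do not involve $Y_{i1}$ or $Y_{i2}$, the infection times are independent exponential variables, and the marginal infection probabilities by time $T$ are
\begin{equation*}
\E[Y_{i1}(T)\mid x_{i1}=1]=1-e^{-\alpha e^{\beta}T},\qquad \E[Y_{i2}(T)\mid x_{i2}=0]=1-e^{-\alpha T}.
\end{equation*}
Under the balanced-covariate design assumed here, the conditioning sets $\{x_{ij}=1\}$ and $\{x_{ij}=0\}$ in \eqref{eq:rr} pick out exactly subjects of type 1 and type 2 respectively, so the risk ratio is the quotient of the two expressions above.

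Next I would show that this quotient is a strictly increasing function of $\beta$. Writing $f(\beta)=1-e^{-\alpha e^{\beta}T}$, a single differentiation gives $f'(\beta)=\alpha T e^{\beta}e^{-\alpha e^{\beta}T}>0$ since $\alpha,T>0$. Hence $RR=f(\beta)/f(0)$ is strictly increasing in $\beta$, equals $1$ exactly at $\beta=0$, exceeds $1$ when $\beta>0$, and is strictly less than $1$ when $\beta<0$. Since $HR=e^{\beta}$ satisfies the same three-way dichotomy at $\beta=0$, the sign alignment required by Definition \ref{defn:unbiased} follows immediately.

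I do not anticipate any real obstacle: the heart of the argument is the observation that $\omega=0$ eliminates the coupling term $Y_{ik}(t)$ in \eqref{eq:haz}, which is precisely the mechanism by which contagion induces the pathologies the paper is about. The only small subtlety is justifying that the conditional expectations in \eqref{eq:rr} coincide with the per-subject marginal probabilities computed above, which is immediate from the balanced covariate assignment in this subsection and the independence of infection times when $\omega=0$.
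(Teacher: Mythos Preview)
Your proposal is correct and follows essentially the same approach as the paper: both reduce to comparing $1-e^{-\alpha e^{\beta}T}$ with $1-e^{-\alpha T}$ and read off the three-way sign conclusion. The only cosmetic difference is that the paper specializes its previously derived general risk-difference formula and inspects the sign of $\exp[-\alpha T]-\exp[-\alpha e^{\beta}T]$ directly, whereas you compute the two marginal infection probabilities from scratch and use a monotonicity-in-$\beta$ argument; your route is slightly more self-contained since it does not rely on the supplement's general $RD^{*}$ expression.
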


\noindent Define the ``null'' hypothesis under the data-generating process as $\beta = 0$, so that all subjects are equally susceptible to infection.  When outcomes are measured after enough time has elapsed, the direction of the risk ratio is entirely determined by the infectiousness coefficient $\gamma$. 

\begin{prop}[Under the null]
  Suppose $\beta=0$ and $T>0$. Then if $\gamma<0$, $RR>1$; if $\gamma>0$, $RR<1$; and if $\gamma=0$, $RR=1$. 
\label{prop:biasundernull}
\end{prop}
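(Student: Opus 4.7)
The plan is to compute both marginal survival probabilities $\Pr(Y_{ij}(T)=0)$ in closed form under the null and show that each is a strictly decreasing function of the transmission rate coming into subject $j$. Under $\beta=0$, subject $1$ experiences incoming rate $b_2 := \omega$ from subject $2$ (since $x_{i2}=0$), while subject $2$ experiences incoming rate $b_1 := \omega e^{\gamma}$ from subject $1$ (since $x_{i1}=1$); the exogenous hazard $\alpha$ is the same for both. The case $\gamma=0$ is immediate: the joint law of $(Y_{i1}(\cdot), Y_{i2}(\cdot))$ is invariant under relabeling the two subjects, so $\Pr(Y_{i1}(T)=1) = \Pr(Y_{i2}(T)=1)$ and $RR = 1$.

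For $\gamma \neq 0$, I would construct the process from four independent clocks: $E_1, E_2 \sim \mathrm{Exp}(\alpha)$ (exogenous) and $W_{21} \sim \mathrm{Exp}(b_2)$, $W_{12} \sim \mathrm{Exp}(b_1)$ (transmission, activated once the source is infected). Because subject $1$ cannot transmit while still susceptible,
\begin{equation*}
\{Y_{i1}(T)=0\} = \{E_1 > T\}\cap\bigl(\{E_2 > T\}\cup\{E_2 \le T,\ W_{21} > T - E_2\}\bigr),
\end{equation*}
and splitting on $\{E_2 \le T\}$ and integrating out $E_2$ gives
\begin{equation*}
f(b_2) := \Pr(Y_{i1}(T)=0) = e^{-2\alpha T} + \alpha e^{-\alpha T}\int_0^T e^{-\alpha s}\,e^{-b_2(T-s)}\,ds .
\end{equation*}
The symmetric calculation for subject $2$ yields $\Pr(Y_{i2}(T)=0) = f(b_1)$, the \emph{same} function $f$ evaluated at the opposite rate.

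Next, I would show $f$ is strictly decreasing. Writing the difference $e^{-2\alpha T}-e^{-(\alpha+b)T}$ as an integral of its derivative and applying the linear change of variable $r = 2\alpha + s(b-\alpha)$ produces the smooth representation
\begin{equation*}
f(b) = e^{-2\alpha T}\Bigl[1 + \alpha T\int_0^1 e^{-s(b-\alpha)T}\,ds\Bigr],
\end{equation*}
which is well-defined at $b=\alpha$ (where the elementary closed form $\alpha(e^{-2\alpha T}-e^{-(\alpha+b)T})/(b-\alpha)$ is formally $0/0$) and permits differentiation under the integral to give $f'(b) = -\alpha T^2 e^{-2\alpha T}\int_0^1 s\,e^{-s(b-\alpha)T}\,ds < 0$ for every $T>0$. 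Combining this monotonicity with the sign of $b_1 - b_2 = \omega(e^\gamma - 1)$ gives the three conclusions: $\gamma > 0 \Rightarrow b_1 > b_2 \Rightarrow f(b_1) < f(b_2) \Rightarrow \Pr(Y_{i2}(T)=1) > \Pr(Y_{i1}(T)=1) \Rightarrow RR < 1$, and symmetrically for $\gamma < 0$.

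The only real obstacle is the apparent singularity in the closed form for $f$ at $b=\alpha$; the integral representation above dissolves this and simultaneously makes the sign of $f'$ transparent. An alternative route would couple the four clocks across different values of $b$ and invoke a stochastic-domination argument to bypass explicit integrals, but the direct calculation is brief and produces closed-form expressions that should be reusable in later parts of the paper.
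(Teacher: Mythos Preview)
Your proof is correct and considerably more streamlined than the paper's. The paper proceeds by first deriving closed-form expressions for $\E[Y_1(t)]$ and $\E[Y_2(t)]$ (each a combination of three exponentials with coefficients that may vanish), forming the risk difference, and then performing an exhaustive case analysis---five sub-cases for $\gamma<0$ and five for $\gamma>0$, distinguished by the ordering of $\alpha$, $\omega$, and $\omega e^\gamma$---each of which is resolved via a mean-value-theorem inequality (their Lemma~1) comparing $(c-b)(e^{-a}-e^{-b})$ and $(b-a)(e^{-b}-e^{-c})$.

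Your route bypasses all of this. The key observation the paper does not exploit is that, under $\beta=0$, the two survival probabilities are values of the \emph{same} function $f$ at the two incoming transmission rates $b_1=\omega e^\gamma$ and $b_2=\omega$; once this is seen, strict monotonicity of $f$ (immediate from the integral representation, which also dissolves the $b=\alpha$ singularity that forces the paper's separate sub-cases) finishes the argument in one stroke. The paper's approach does have the side benefit of producing explicit closed forms for the risk difference that are reused in the proofs of Results~3 and~4, but for Result~2 alone your symmetry-plus-monotonicity argument is both shorter and more transparent.
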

\noindent If the covariate $x$ does not alter infectiousness given infection, direction-unbiasedness holds. 

\begin{prop}[Homogenous infectiousness]
  Suppose $\gamma=0$.  For any $T>0$, the risk ratio is direction-unbiased.  
\label{prop:biasnoinfect}
\end{prop}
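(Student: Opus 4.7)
The plan is to view $(Y_{i1}(t), Y_{i2}(t))$ as a four-state continuous-time Markov chain and reduce direction-unbiasedness to the sign of a single scalar inequality. With $\gamma = 0$ and $n_i = 2$, the only transitions are $(0,0)\to(1,0)$ at rate $e^{\beta}\alpha$, $(0,0)\to(0,1)$ at rate $\alpha$, $(1,0)\to(1,1)$ at rate $\alpha+\omega$, and $(0,1)\to(1,1)$ at rate $e^{\beta}(\alpha+\omega)$; crucially, because $\gamma = 0$, subject $2$'s hazards do not involve $\beta$. Let $p_{ab}(T) = P(Y_{i1}(T)=a,\,Y_{i2}(T)=b)$ and $p_j(T) = P(Y_{ij}(T)=1)$, so that $RR = p_1(T)/p_2(T)$. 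Since $p_1 = p_{10}+p_{11}$ and $p_2 = p_{01}+p_{11}$, the identity $p_1(T)-p_2(T) = p_{10}(T)-p_{01}(T)$ reduces the problem to comparing the two intermediate-state probabilities, and (as $p_2(T) > 0$ whenever $\alpha > 0$ and $T > 0$) the sign of $RR - 1$ matches that of $p_{10}(T) - p_{01}(T)$.

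Next I would condition on the first jump out of $(0,0)$, which occurs at a time $M \sim \mathrm{Exp}(\alpha(1+e^\beta))$ independent of its direction, and apply the Markov property after the jump to obtain
\begin{equation*}
p_{10}(T) - p_{01}(T) \;=\; \alpha \int_0^T e^{-\alpha(1+e^\beta)(T-v)} \bigl[\,e^\beta e^{-(\alpha+\omega)v} - e^{-e^\beta(\alpha+\omega)v}\,\bigr]\,dv .
\end{equation*}
A convenient feature of this representation is that the apparent singularities at $e^\beta\omega = \alpha$ or $\omega = e^\beta\alpha$, which appear in a direct partial-fraction solution of the Kolmogorov forward equations for $p_{10}$ and $p_{01}$ separately, cancel in the combination $p_{10}-p_{01}$, so no case splits are needed.

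The last step is a sign analysis of the bracketed integrand. With $b = e^\beta$ and $c = \alpha+\omega$, taking a logarithm gives
\begin{equation*}
b\,e^{-cv} - e^{-bcv} \;\gtreqless\; 0 \iff \beta + c\,(e^\beta - 1)\,v \;\gtreqless\; 0,
\end{equation*}
and the right-hand linear function of $v \ge 0$ has constant term $\beta$ and slope $c\,(e^\beta - 1)$, both of which share the sign of $\beta$. Hence the bracket has the sign of $\beta$ for every $v \ge 0$; multiplied by the strictly positive exponential kernel and integrated, the whole integral inherits that sign. Since $HR = e^\beta$, the three implications of Definition~\ref{defn:unbiased} follow directly.

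I expect the main obstacle to be recognizing the cancellation that yields the clean integral representation: the naive partial-fraction formulas for $p_{10}(T)$ and $p_{01}(T)$ look messy and appear to demand separate treatment of the regimes $\omega \gtrless e^\beta\alpha$. Once the integral form is in hand, the sign of the integrand collapses to the one-line linear inequality $\beta + c(e^\beta-1)v \gtrless 0$—in which the constant term and the slope share a sign—and the conclusion is immediate.
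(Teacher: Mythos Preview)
Your proof is correct, and it takes a genuinely different route from the paper's argument.

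The paper works with closed-form expressions for $\E[Y_1(t)]$ and $\E[Y_2(t)]$ obtained by integrating the exponential densities directly. Because those formulas involve denominators $\alpha - \omega e^\beta$ and $\alpha e^\beta - \omega$, the resulting expression $RD^*_{\gamma=0}$ splits into four cases depending on whether these vanish. In the generic case the paper then performs a further six-way sub-case analysis (three orderings for $\beta<0$, three for $\beta>0$) based on the relative sizes of $\alpha e^\beta$, $\omega$, and $\omega e^{2\beta}$, bounding the numerator by dropping a term, rearranging, and invoking a mean-value-theorem inequality (their Lemma~1) in each sub-case; the degenerate cases are handled separately via $e^a>1+a$.

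Your approach sidesteps all of this. By writing $p_1-p_2=p_{10}-p_{01}$ and conditioning on the first jump, you obtain the single integral
\[
p_{10}(T)-p_{01}(T)=\alpha\int_0^T e^{-\alpha(1+e^\beta)(T-v)}\bigl[e^\beta e^{-(\alpha+\omega)v}-e^{-e^\beta(\alpha+\omega)v}\bigr]\,dv,
\]
valid for \emph{all} parameter values, with no singular denominators to resolve. The sign of the bracket reduces to the linear inequality $\beta+(\alpha+\omega)(e^\beta-1)v\gtreqless 0$, whose constant term and slope share the sign of $\beta$, so no case split on parameter orderings is needed. What your method buys is brevity and uniformity; what the paper's method buys is an explicit formula for $RR$ (equation \eqref{eq:exactrr}) that is reused for the other Results and for the numerical plots, so the case machinery is amortized across several propositions.
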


\noindent However, when susceptibility $\beta$ and infectiousness $\gamma$ have the same sign, $\gamma$ is sufficiently large in absolute value, and the follow-up time $T$ is large enough, direction bias may occur. 

\begin{prop}[Direction bias] 
    Suppose 
    either  $\beta<0$ and $e^\gamma < \min\{e^{2\beta},e^\beta + \frac{\alpha}{\omega} (e^\beta -1)\}$, or $\beta>0$ and $e^\gamma > \max\{e^{2\beta},e^\beta + \frac{\alpha}{\omega} (e^\beta -1)\}$.  Then there exists $t^*> 0$ such that for all $T>t^*$, the risk ratio is not direction-unbiased. 
\label{prop:biasacrossnull}
\end{prop}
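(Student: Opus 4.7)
My plan is to solve the four-state continuous-time Markov chain for a single two-person cluster, obtain closed-form expressions for the marginal infection probabilities, and study the long-time behavior of $P_1(T) - P_2(T)$. Writing $p_{kl}(T)$ for the probability of being in state $(Y_{i1},Y_{i2}) = (k,l)$ at time $T$, we have $P_1(T) - P_2(T) = p_{10}(T) - p_{01}(T)$, so $\operatorname{sgn}(RR - 1) = \operatorname{sgn}(p_{10}(T) - p_{01}(T))$. Setting $a = \alpha(1+e^\beta)$, $b = \alpha + \omega e^\gamma$, and $c = e^\beta(\alpha+\omega)$, the Kolmogorov forward equations (solved under distinct rates; the equality cases follow by continuity) yield
\[
p_{10}(T) = \frac{\alpha e^\beta}{b-a}\bigl(e^{-aT} - e^{-bT}\bigr), \qquad p_{01}(T) = \frac{\alpha}{c-a}\bigl(e^{-aT} - e^{-cT}\bigr).
\]

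Next I would expand $p_{10}(T) - p_{01}(T) = A_a e^{-aT} + A_b e^{-bT} + A_c e^{-cT}$ and translate the two hypotheses into sign conditions on the coefficients. A direct computation gives $A_a = \alpha\omega(e^{2\beta} - e^\gamma)/[(b-a)(c-a)]$, $A_b = \alpha e^\beta/(a-b)$, and $A_c = \alpha/(c-a)$. The first bound $e^\gamma < e^{2\beta}$ (for $\beta < 0$) is exactly positivity of the numerator of $A_a$, while rearranging $e^\gamma < e^\beta + (\alpha/\omega)(e^\beta - 1)$ yields the inequality $b < c$ between the two second-stage hazards. The long-time sign of $p_{10}(T) - p_{01}(T)$ is then controlled by the coefficient of its slowest-decaying exponential.

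The main obstacle is a case analysis on the ordering of $\{a,b,c\}$, because negative values of $b-a$ or $c-a$ re-shuffle which exponential dominates and flip the signs of the $A_k$. Under $b < c$ the configuration $b \ge a > c$ is impossible, so for $\beta < 0$ only three sub-cases remain: (i) $a \le b < c$, where $e^{-aT}$ is the slowest exponential in both probabilities and $A_a$ is strictly positive because $e^{2\beta} - e^\gamma > 0$ and $(b-a)(c-a) \ge 0$; (ii) $b < a \le c$, where $p_{10}(T) \sim \alpha e^\beta/(a-b) \cdot e^{-bT}$ while $p_{01}(T)$ decays at the strictly faster rate $a$; and (iii) $b < c < a$, where $p_{10}(T) \sim \alpha e^\beta/(a-b) \cdot e^{-bT}$ dominates $p_{01}(T) \sim \alpha/(a-c) \cdot e^{-cT}$ because $b < c$. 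In every sub-case $p_{10}(T) > p_{01}(T)$ for all $T$ beyond some threshold $t^*$, giving $RR > 1 > e^\beta = HR$. The $\beta > 0$ case is entirely symmetric: the hypotheses become $e^\gamma > e^{2\beta}$ and $b > c$, the same three-part decomposition applies with the roles of $b$ and $c$ swapped, and one obtains $p_{10}(T) < p_{01}(T)$ for large $T$, hence $RR < 1 < HR$. A quantitative value of $t^*$ can then be read off from the three exponential rates and coefficients.
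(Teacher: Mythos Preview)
Your proof is correct and follows the same overall strategy as the paper: derive closed-form expressions for the two marginal infection probabilities as sums of exponentials, identify the three decay rates, and determine the eventual sign of the risk difference by finding which exponential dominates. The paper obtains $\E[Y_1(t)]$ and $\E[Y_2(t)]$ by conditioning on the time and identity of the first infection, then works with a rescaled risk difference $RD^*$ that is exactly your $p_{10}(T)-p_{01}(T)$ up to a positive constant; its three exponentials $\exp[-\alpha(e^\beta+1)t]$, $\exp[-(\alpha+\omega e^\gamma)t]$, $\exp[-e^\beta(\alpha+\omega)t]$ are your $e^{-aT}$, $e^{-bT}$, $e^{-cT}$. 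Where you differ is in organization: by introducing the abbreviations $a,b,c$ and noting that the second hypothesis is precisely $b<c$ (resp.\ $b>c$), you reduce the case analysis to the three orderings of $\{a,b,c\}$ consistent with $b<c$, which is considerably tidier than the paper's four sub-cases per sign of $\beta$ phrased in terms of $\alpha-\omega e^\beta$ and $\alpha e^\beta-\omega e^\gamma$. Your derivation via the Kolmogorov forward equations for the four-state chain is also slightly more direct than the paper's conditioning argument, though it yields the same formulas. One small point: in case~(i) you write $(b-a)(c-a)\ge 0$, but the equality $a=b$ makes $A_a$ undefined; the paper handles the boundary cases $\alpha=\omega e^\beta$ and $\alpha e^\beta=\omega e^\gamma$ explicitly, so your appeal to continuity should be accompanied by a remark that at these boundaries the dominant term acquires a polynomial factor in $T$ and the sign conclusion still holds.
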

\noindent Figure \ref{fig:hazbias} illustrates Result \ref{prop:biasacrossnull}.

\begin{figure}
\centering
\includegraphics[width=\textwidth]{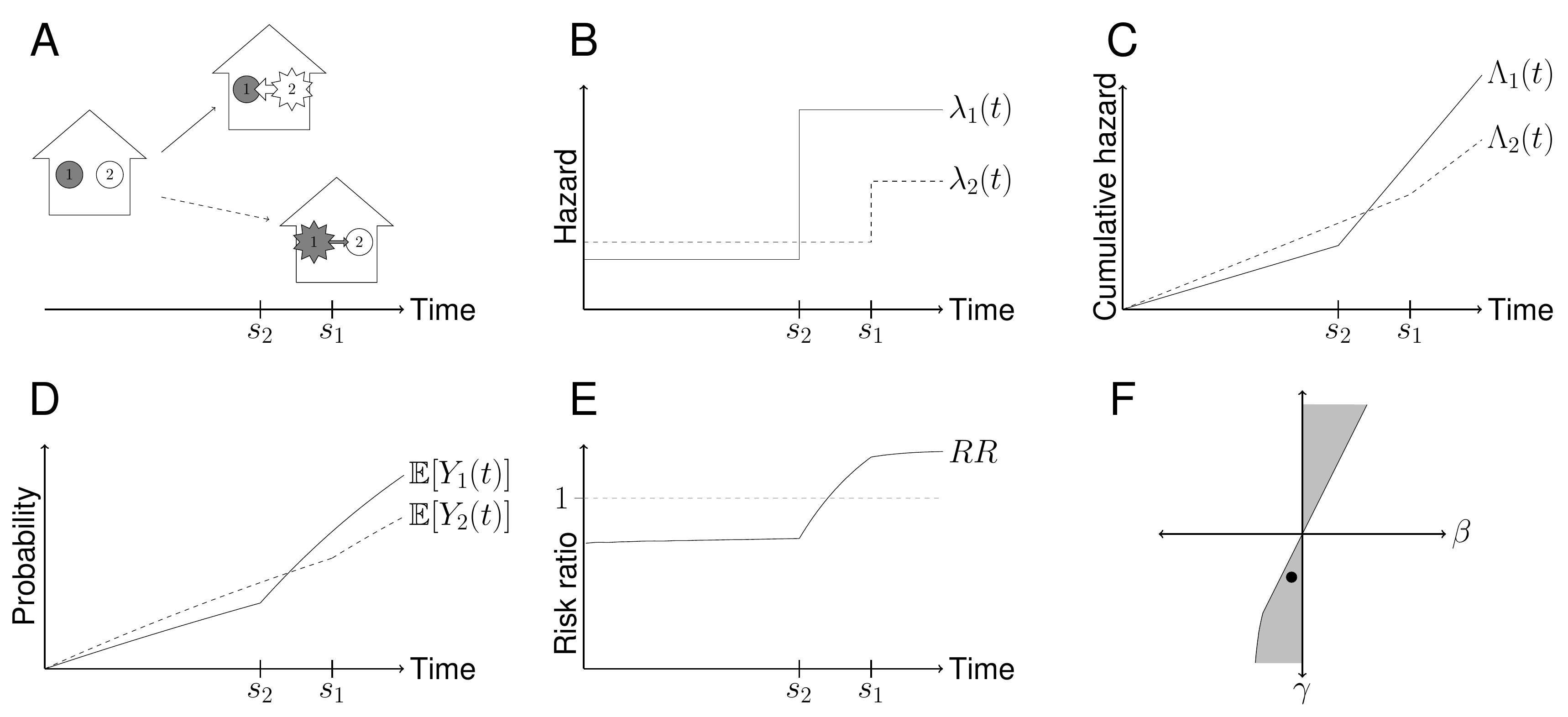} 
\caption{Illustration of how the risk ratio can give misleading results under contagion in a cluster of size two. Suppose the conditions of Result \ref{prop:biasacrossnull} hold with $\beta<0$, subject 1 (gray) has $x_1=1$, subject 2 (white) has $x_2=0$.  (A) Either subject 2 becomes infected first (at time $s_2$), or subject 1 is infected first (at time $s_1$); depending on which is infected first, the other experiences a change in their hazard of infection. (B) If 1 is infected first, then 2 experiences only a small increase in hazard, because $\gamma<0$.  Alternatively, if 2 is infected first, then 1 experiences a large increase in hazard because $x_2=0$. (C) The relationship between the cumulative hazards in these scenarios, and hence the relationship between the expected infection outcomes (D), is eventually reversed at some time after $s_2$.  Therefore the risk ratio (E) eventually rises above one.  Panel (F) shows the region of $(\beta,\gamma)$ parameter space in which direction bias may occur, where $\beta$ and $\gamma$ are plotted on the same scale. A black dot indicates the values of $\beta$ and $\gamma$ used in this illustration. }
\label{fig:hazbias}
\end{figure}

Direction-unbiasedness under Definition \ref{defn:unbiased} does not imply zero bias.  Figure \ref{fig:exact_main} shows the expected value of $\log[RR]$ across values $\beta$ and $\gamma$ for several values of $\omega / \alpha$. To make results comparable in every sub-figure, the observation time $T$ is selected so that cumulative incidence at time $T$ when $\beta = 0$ and $\gamma = 0$ is held constant at approximately 0.15.  
The Supplement provides an exact expression for the bias of $\log[RR]$ and similar plots for a wider range of parameters $\alpha$ and $\omega$.  As an approximation to the hazard ratio, the risk ratio is always biased unless $\beta = 0$ and either $\omega = 0$ or $\gamma=0$ holds. For all other combinations of parameters, whenever the risk ratio is direction-unbiased, it is biased towards the null of $\beta=0$. 

\begin{figure}
\centering
\includegraphics[scale=0.88]{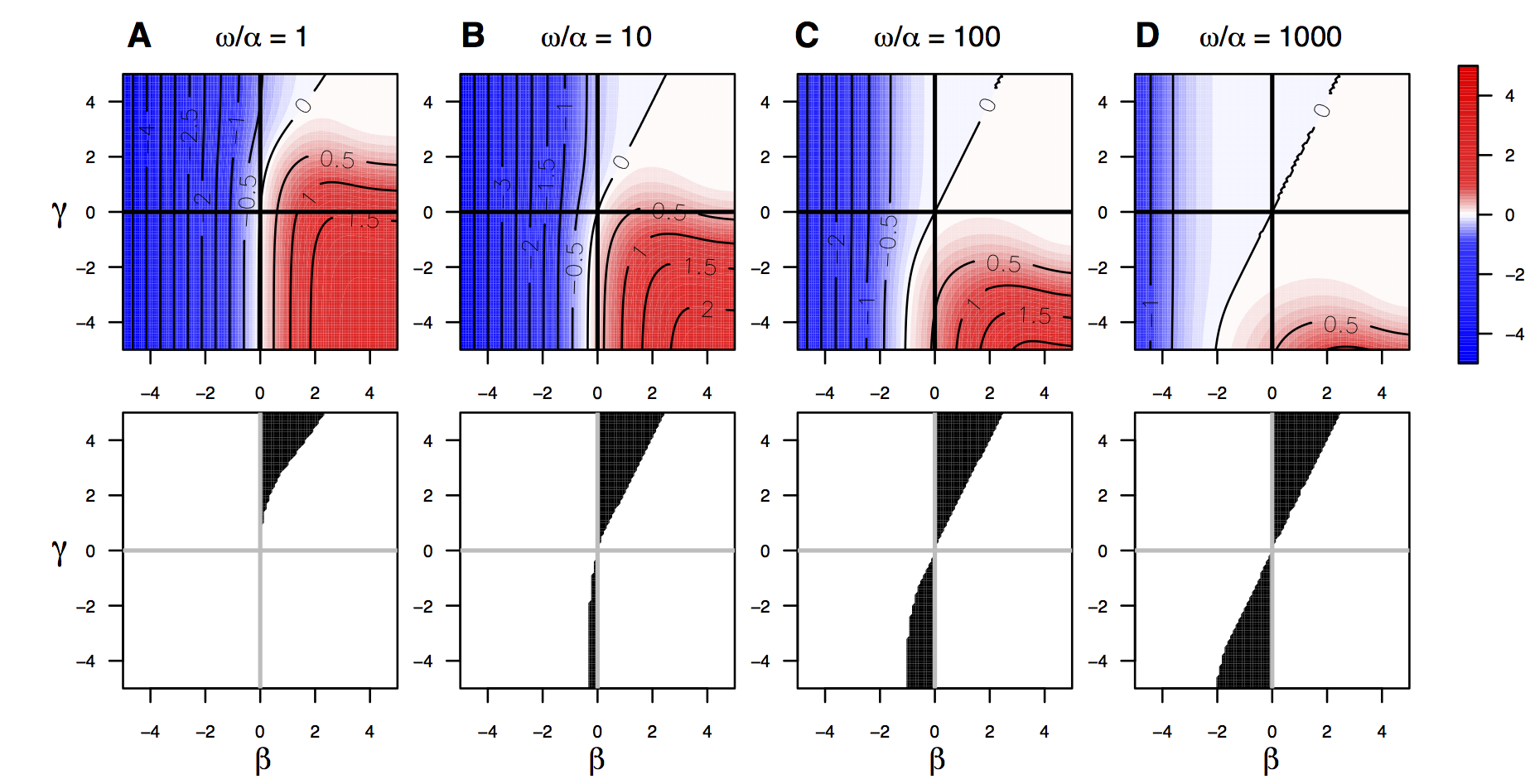}
\caption{Computed $\log[RR]$ (top row) and region of direction bias (bottom row) as a function of $\beta$ and $\gamma$ in clusters of size two with exactly one subject per cluster with $x=1$. For any given ratio $\omega / \alpha$ observation time $T$ was chosen such that cumulative incidence when $\beta=0$ and $\gamma=0$ is approximately 0.15. } 
\label{fig:exact_main}
\end{figure}

%%%%%%%%%%%%%%%%

\subsection{General clusters}

Most cluster cohort studies of infectious diseases involve variable cluster sizes and a more complex design. Several factors may influence the behavior of the risk ratio in empirical studies, such as epidemiologic features like $\alpha_i(t)$ and $\omega_{ikj}(t)$, and aspects of study design such as experimental assignment of the covariate $x$, the duration and variability of observation time $T_i$, cluster size distribution, or selection of clusters with or without infected individuals at baseline. 

When there is no within-cluster contagion and the covariate is independent of the exogenous force of infection and observation time, the risk ratio is direction-unbiased.  Let $\x_i=(x_{i1},\ldots,x_{in_i})$ be the vector of covariate values in cluster $i$.

\begin{prop}[No within-cluster contagion]
  Suppose $\omega_{ikj}(t)=0$ for all $t$ and $\x_i \indep \{\alpha_i(t), n_i, T_i\}$.  Then the risk ratio is direction-unbiased. 
\label{prop:biasnoclustercontgen}
\end{prop}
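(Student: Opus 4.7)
The plan is to reduce the no-contagion setting to a standard competing-risks survival calculation and then invoke monotonicity of the survival function in the hazard multiplier. First, I would substitute $\omega_{ikj}(t) \equiv 0$ into \eqref{eq:haz}; the inner sum vanishes and the hazard collapses to $\lambda_{ij}(t) = e^{x_{ij}\beta}\,\alpha_i(t)$, so subject $j$'s infection time depends only on $x_{ij}$, $\alpha_i(\cdot)$, and $T_i$, not on the outcomes of the other cluster members.

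Next, using the standard hazard-to-survival relationship, I would write the conditional infection probability as
\begin{equation*}
  P\bigl(Y_{ij}(T_i) = 1 \,\big|\, x_{ij}, \alpha_i(\cdot), T_i\bigr)
  = 1 - \exp\!\left(-e^{x_{ij}\beta} \int_0^{T_i} \alpha_i(s)\,ds\right)
  = 1 - \exp\!\left(-e^{x_{ij}\beta} A_i\right),
\end{equation*}
where $A_i = \int_0^{T_i} \alpha_i(s)\,ds \ge 0$. Here I would use the joint independence $\x_i \indep \{\alpha_i(\cdot), n_i, T_i\}$: since $x_{ij}$ is a coordinate of $\x_i$, conditioning on $x_{ij} = x$ leaves the distribution of $A_i$ intact, so marginalizing yields
\begin{equation*}
  \E\bigl[Y_{ij}(T_i) \,\big|\, x_{ij} = x\bigr] = 1 - \E\bigl[\exp(-e^{x\beta} A_i)\bigr],
\end{equation*}
with the outer expectation taken over the unconditional law of $A_i$.

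Finally, I would compare the numerator and denominator of $RR$ by monotonicity. For each realization with $A_i > 0$, the map $c \mapsto e^{-c A_i}$ is strictly decreasing in $c \ge 0$. Assuming $P(A_i > 0) > 0$ (otherwise no one is ever infected and $RR$ is undefined), this gives $\E[e^{-e^\beta A_i}] < \E[e^{-A_i}]$ when $\beta > 0$, equality when $\beta = 0$, and the reverse inequality when $\beta < 0$. Since $HR = e^\beta$, these three cases translate respectively to $RR > 1$, $RR = 1$, and $RR < 1$, which is exactly Definition \ref{defn:unbiased}.

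The argument is essentially a one-line computation once the hazard is simplified, so I do not anticipate a real obstacle. The only step that warrants care is invoking the joint independence assumption at the right moment to decouple $x_{ij}$ from $A_i$ when marginalizing; if the independence were weakened to, say, $x_{ij} \indep \alpha_i(\cdot)$ only marginally (but not jointly with $T_i$), selection effects on observation time could in principle reintroduce a dependence between $x$ and $A_i$ and break the clean monotone comparison.
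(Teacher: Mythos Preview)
Your proposal is correct and follows essentially the same route as the paper: substitute $\omega_{ikj}\equiv 0$ to collapse the hazard to $e^{x_{ij}\beta}\alpha_i(t)$, use the independence $\x_i\indep\{\alpha_i(\cdot),n_i,T_i\}$ to ensure the distribution of $A_i=\int_0^{T_i}\alpha_i(s)\,ds$ is unaffected by conditioning on $x_{ij}$, and then compare $\E[1-e^{-e^\beta A_i}]$ with $\E[1-e^{-A_i}]$ by pointwise monotonicity. The paper packages the last step as a separate lemma and writes out the iterated-expectation tower more explicitly, but the argument is the same; your observation that $P(A_i>0)>0$ is needed for strictness matches the paper's hypothesis that the integrated hazard take some positive values.
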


\noindent Joint independence of within-cluster covariates guarantees direction-unbiasedness for any parameter values.  

\begin{prop}[Independent $\x$]
  Suppose the covariates $\x_i=(x_{i1},\ldots,x_{in_i})$ are jointly independent and $\x_i \indep \{\alpha_i(t),\omega_{ikj}(t), n_i, T_i\}$. Then the risk ratio is direction-unbiased. 
\label{prop:biasindependentx}
\end{prop}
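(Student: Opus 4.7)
The plan is to construct a coupling that reduces the proposition to the monotone dependence of subject $j$'s infection time on the single scalar $e^{x_{ij}\beta}$. Writing $E_j = \bigl((x_{ik})_{k\neq j}, \alpha_i(\cdot), (\omega_{ikj}(\cdot))_k, n_i, T_i\bigr)$ for the ``environment'' of subject $j$, the joint-independence hypothesis gives $E_j \indep x_{ij}$. I will show that $\E[Y_{ij}(T_i)\mid x_{ij}=x]$ is monotone in $x$ with sign $\operatorname{sign}(\beta)$, which is precisely Definition~\ref{defn:unbiased}.

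First I would realize the full hazard-based process on cluster $i$ by the standard graphical/thinning construction: draw an independent $\operatorname{Exp}(1)$ threshold $E^*_{il}$ for each subject $l$ and define $t_{il}$ as the first time its cumulative integrated hazard reaches $E^*_{il}$, running the process forward in time. Alongside this I would define an auxiliary sub-process $Y^{-j}$ on $\{1,\ldots,n_i\}\setminus\{j\}$ using the same thresholds, in which subject $j$ is simply absent. The key observation is that for $t\le t_{ij}$ subject $j$ is susceptible in the full process and therefore contributes nothing to the hazard of any other subject; hence on $[0,t_{ij}]$ the coupled processes satisfy $Y_{ik}(t)=Y^{-j}_{ik}(t)$ for every $k\neq j$. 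Consequently, for $t\le t_{ij}$, $j$'s hazard equals $e^{x_{ij}\beta}\,\tilde H_j(t)$, where
\[
\tilde H_j(t) = \alpha_i(t) + \sum_{k\neq j} Y^{-j}_{ik}(t)\,\omega_{ikj}\bigl(t-t^{-j}_{ik}\bigr)\,e^{x_{ik}\gamma}
\]
is a function of $E_j$ and the thresholds $\{E^*_{ik}\}_{k\neq j}$ only, not of $x_{ij}$. Solving the implicit definition of $t_{ij}$ then yields the representation $t_{ij} = \inf\bigl\{t : e^{x_{ij}\beta}\int_0^t \tilde H_j(s)\,ds \ge E^*_{ij}\bigr\}$.

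From this representation, $\{t_{ij}\le T_i\}$ equals $\bigl\{e^{x_{ij}\beta}\int_0^{T_i}\tilde H_j(s)\,ds \ge E^*_{ij}\bigr\}$, which is pointwise monotone in $x_{ij}$ with direction $\operatorname{sign}(\beta)$. Averaging this indicator over $E_j$ and the thresholds---all independent of $x_{ij}$ by hypothesis---transfers the monotonicity to $\E[Y_{ij}(T_i)\mid x_{ij}=x]$, with strictness arising from realizations in which $\int_0^{T_i}\tilde H_j(s)\,ds$ separates $e^{-\beta}E^*_{ij}$ and $E^*_{ij}$ (non-degenerate whenever there is positive probability of infection in the cluster, which is also the condition that makes $RR$ well-defined). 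Direction-unbiasedness of $RR$ then follows at once.

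The main obstacle is making the coupling between the full and $j$-removed processes rigorous: the hazards are time-varying and jump at infection epochs, so the identification of $Y$ and $Y^{-j}$ on $[0,t_{ij}]$ must be proved by a simultaneous graphical construction and a short induction on successive infection events rather than by manipulating compensators abstractly. Once that step is secure, the independence hypothesis does the remaining work, and in particular no further structural properties of $\alpha_i(t)$ or $\omega_{ikj}(t)$ are required.
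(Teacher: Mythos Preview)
Your argument is correct, and it is in fact a cleaner justification of the one step the paper leaves informal. The paper's proof proceeds by iterated expectation: it writes $\E[Y_{ij}(T_i)\mid x_{ij}=x]$ as $\E\bigl[1-\exp\bigl(-e^{x\beta}\int_0^{T_i}(1-Y_{ij}(t))\xi_i(t)\,dt\bigr)\bigr]$ with $\xi_i(t)=\alpha_i(t)+\sum_k Y_{ik}(t)\omega_{ikj}(t-t_{ik})e^{x_{ik}\gamma}$, asserts that under joint independence the distribution of the inner random integral does not depend on $x_{ij}$, and then applies an elementary lemma comparing $\E[1-e^{-aX}]$ to $\E[1-e^{-X}]$. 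Your coupling of the full process with the $j$-removed process is exactly what makes that assertion rigorous: the paper's claim that ``the distribution of $\T_{i,-j}$ is invariant to conditioning on $x_{ij}$ when subject $j$ is susceptible'' is precisely your observation that $Y_{ik}=Y^{-j}_{ik}$ on $[0,t_{ij}]$. Where the paper then routes the conclusion through a ratio-of-expectations lemma, you instead obtain pointwise monotonicity of the indicator $\{e^{x\beta}\int_0^{T_i}\tilde H_j\ge E^*_{ij}\}$ and average; the two finishes are equivalent, but yours handles the $\beta=0$ equality case and the strictness for $\beta\ne 0$ in one stroke without a separate lemma.
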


The risk ratio is not generally direction-unbiased when the joint distribution of the covariate $x$ is dependent.  For example, direction unbiasedness may not hold under two common randomization schemes: ``block randomization'' within clusters, when a fixed number of subjects per cluster have $x=1$ with $\Pr(\x_i)=\binom{n_i}{k_i}^{-1}\indicator{\sum_j x_{ij}=k_i}$, and ``cluster randomization'' with $x_{ij}=1$ for all $j$ in some subset of clusters, and $x_{ij}=0$ for all $j$ in the remaining subset. In general, when the joint distribution of $\x_i$ is not independent, or when there is heterogeneity in $n_i$, $\alpha_i(t)$, or $\omega_{ikj}(t)$ across clusters, the risk ratio need not be direction unbiased, even when $\gamma=0$.
Dependence in $\x_i$ may occur in observational studies, where $\x_i$ may be dependent due to shared environment, genetic factors, or other forms of dependence within clusters.

\subsection{Simulation results}

Analytical expressions for the bias of the risk ratio as an approximation to the hazard ratio \eqref{eq:hr} are intractable in general. However, simulations can provide further insight under particular epidemiologic and study design parameters.  In simulations, we vary the distribution of covariates $\x_i$, cluster size $n_i$, observation time $T_i$, infected cluster members at baseline, and values of force of infection parameters $\alpha$ and $\omega$, which are assumed to be constant over time and clusters. 
A comprehensive set of simulation results and R code \citep{R2017} for replicating the simulations appear in the Supplement.

\begin{figure}
\centering
\includegraphics[scale=0.6]{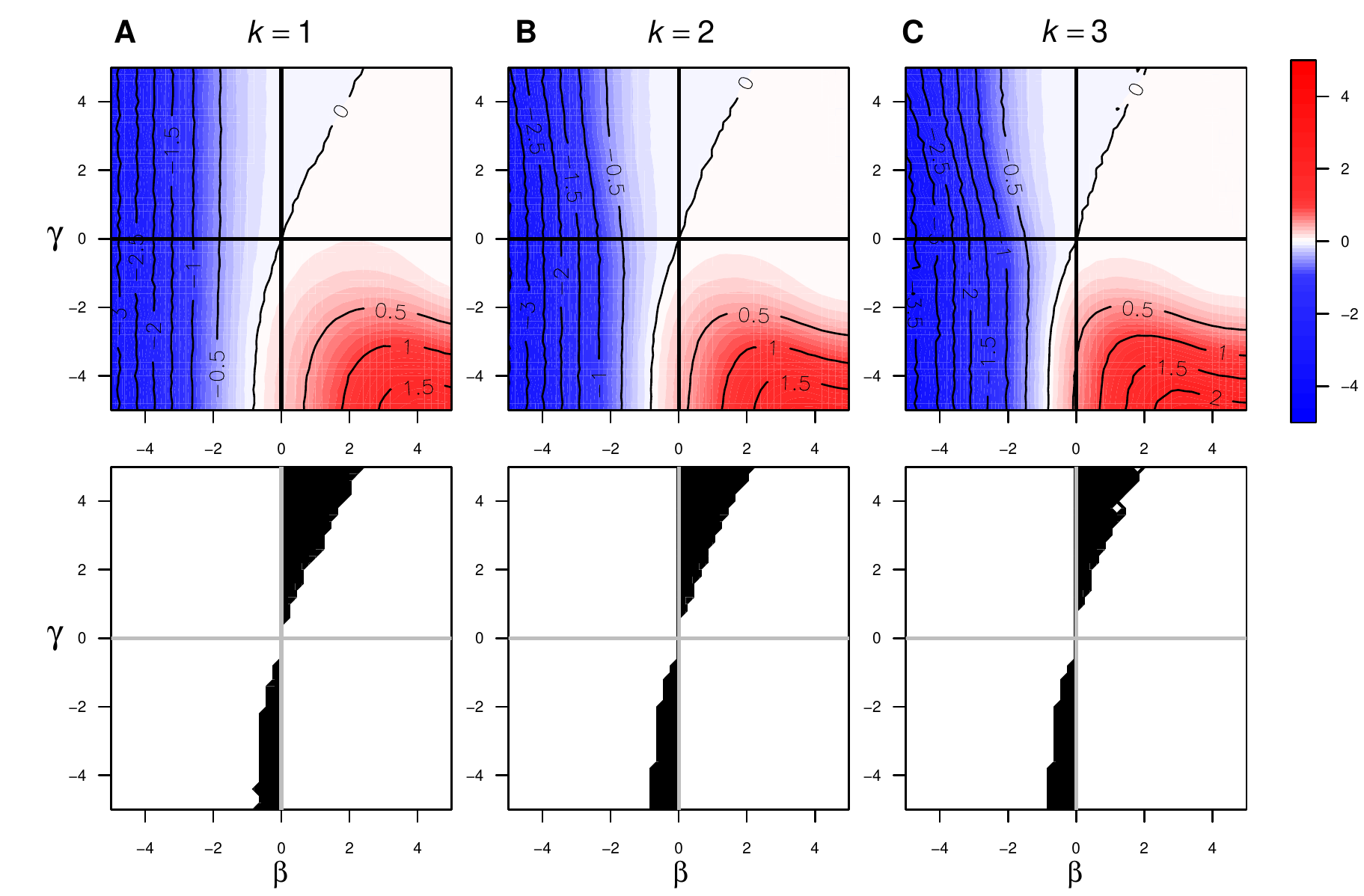}
\caption{$\log[RR]$ (top row) and region of direction bias (bottom row) as a function of $\beta$ and $\gamma$, when cluster size is constant ($n_i=4$ for all $i$) and block randomized distribution of $x$: $\sum_{j=1}^{n_i} x_{ij} = k$. In all plots $\alpha = 0.0001$, $\omega = 0.01$, $N=500$, $T_i = 450$, and no subjects are infected at time zero.}
\label{fig:sim_main_set1}
\end{figure}

Some properties of the two person-cluster case hold in more complex scenarios. Figure \ref{fig:sim_main_set1} shows results for clusters of size four and block randomized distribution of $x$, such that each cluster has exactly $k$ subjects with $x=1$, $k = 1, 2, 3$. The behavior of the bias in Figure \ref{fig:sim_main_set1} mimics that of the two-person cluster case. We demonstrated analytically in Result \ref{prop:biasnoclustercontgen} that direction-unbiasedness under no within-cluster contagion holds under independence of $\x_i$ and cluster level parameters $\alpha_i(t)$, $n_i$ and $T_i$. The simulation shows that direction bias results under constant cluster size and block randomized $x$ are similar to that of two-person cluster case for sufficiently large observation times $T_i$. 

It follows from Result \ref{prop:biasindependentx} that the risk ratio is direction-unbiased under independent Bernoulli assignment of $\x_i$.  In practical intervention trials, many studies in small clusters employ block or cluster randomization. Simulation results show that both of these methods result in the risk ratio having direction bias in some region of $(\beta, \gamma)$ space.  Figure \ref{fig:sim_main_set1} shows that block randomized distribution of $\x_i$ results in direction bias in regions where $\beta$ and $\gamma$ have the same sign and $\gamma$ is more extreme than $\beta$. Figure \ref{fig:sim_main_set2} illustrates cluster randomized distribution of $\x_i$, showing direction bias in regions where $\beta$ and $\gamma$ have opposite sign.

\begin{figure}
\centering
\includegraphics[scale=0.6]{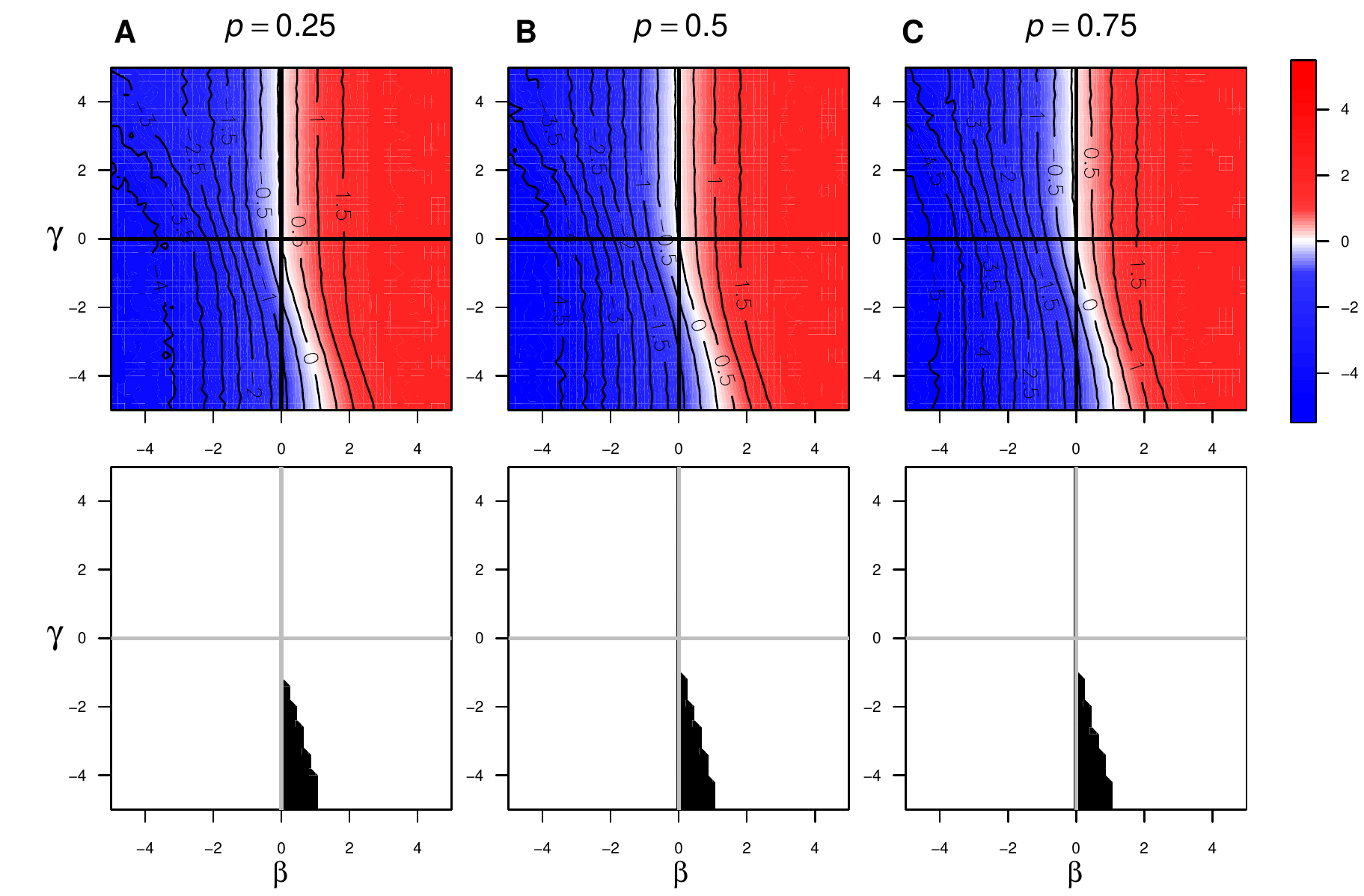}
\caption{$\log[RR]$ (top row) and region of direction bias (bottom row) as a function of $\beta$ and $\gamma$ when cluster size is constant and $x$ is cluster randomized: proportion $p$ of clusters have $\sum_{j=1}^{n_i} x_{ij}= 4$, and remaining $1-p$ have $\sum_{j=1}^{n_i} x_{ij}= 0$. In all plots $\alpha = 0.0001$, $\omega = 0.01$, $N=500$, $n_i = 4$, $T_i = 450$, and no subjects are infected at time zero.}
\label{fig:sim_main_set2}
\end{figure}

When cluster size $n_i$ varies, bias patterns can change substantially with the nature of dependence in the distribution of $\x_i$.  Even under block randomization, the direction bias pattern differs depending on allocation proportion, and generally worsens with imbalance between $\Pr(x_{ij} = 1)$ and $\Pr(x_{ij} = 0)$. 
Figure \ref{fig:sim_main_set3} illustrates direction bias under variable cluster sizes with exactly one subject per cluster having $x=1$, and Figure \ref{fig:sim_main_set4} shows balanced block randomized $x$ under variable cluster sizes.  It is not necessary for $\gamma$ to be more extreme than $\beta$, nor must these parameters have the same sign, to observe direction bias. While regions where the risk ratio exhibits direction bias become smaller in Figure \ref{fig:sim_main_set4} compared to Figure \ref{fig:sim_main_set3}, in both cases direction bias is present when $\gamma=0$. Thus the desirable property of direction-unbiasedness under homogenous infectiousness disappears when cluster sizes vary. 

\begin{figure}
\centering
\includegraphics[scale=0.8]{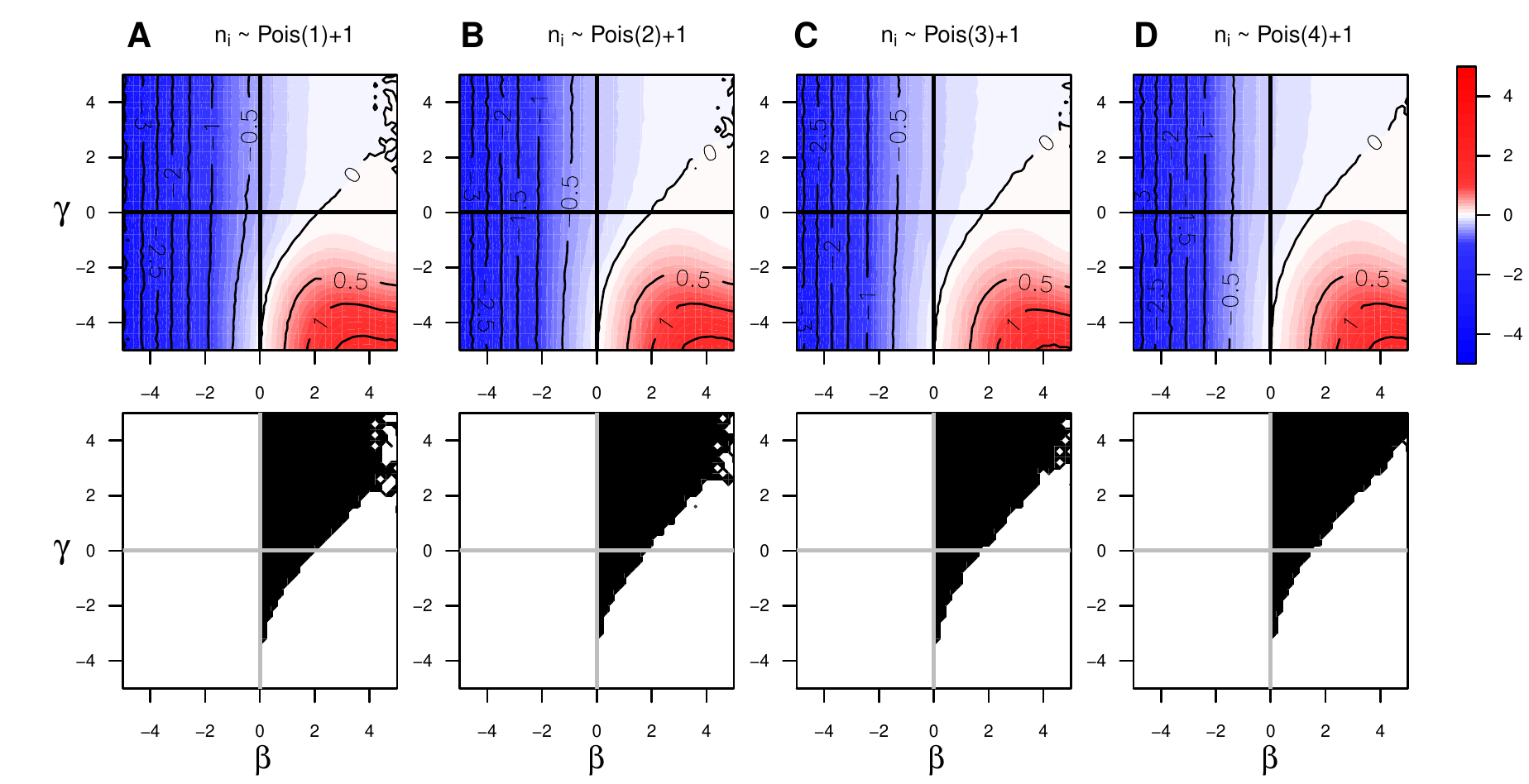}
\caption{$\log[RR]$ (top row) and region of direction bias (bottom row) as a function of $\beta$ and $\gamma$ when cluster size $n_i \sim \text{Pois}(\mu)+1$ and $x$ is block randomized such that $\sum_{j=1}^{n_i} x_{ij}=1$ for all $i$. In all plots $\alpha = 0.0001$, $\omega = 0.01$, $N=500$, no subjects are infected at time zero, and observation time $T_i$ is chosen such that cumulative incidence when $\beta=0$ and $\gamma=0$ is approximately 0.15.}
\label{fig:sim_main_set3}
\end{figure}

\begin{figure}
\centering
\includegraphics[scale=0.8]{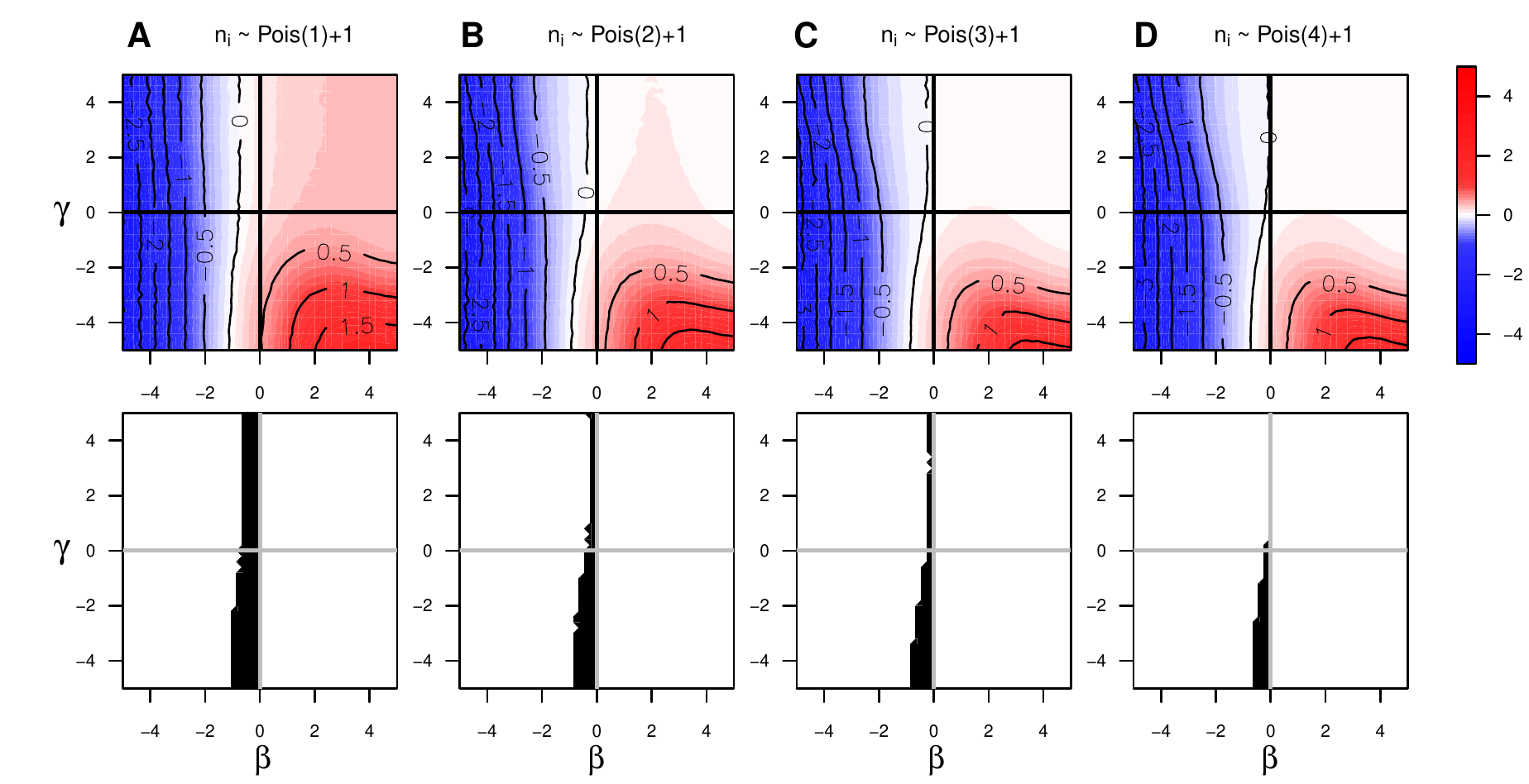}
\caption{$\log[RR]$ (top row) and region of direction bias (bottom row) as a function of $\beta$ and $\gamma$ when cluster size $n_i \sim \text{Pois}(\mu)+1$ and $x$ is block randomized such that $\sum_{j=1}^{n_i} x_{ij}=\lfloor n_i/2 \rfloor$ for all $i$. In all plots $\alpha = 0.0001$, $\omega = 0.01$, $N=500$, no subjects are infected at time zero, and observation time $T_i$ is chosen such that cumulative incidence when $\beta=0$ and $\gamma=0$ is approximately 0.15.}
\label{fig:sim_main_set4}
\end{figure}

The duration of observation influences the size of the region in $(\beta, \gamma)$ parameter space where the risk ratio exhibits direction bias. Longer observation times increase the region of direction bias under block randomized distribution of $x$ and reduce the size of this region under cluster randomized distribution of $x$. When the distribution of $x$ is jointly independent, the risk ratio is always direction unbiased; however, increasing the observation time increases the absolute value of the bias. Variable compared with fixed observation time does not meaningfully influence the behavior of the bias under any study design (see Supplement for details).

In real-world cohort studies of infectious disease outcomes, researchers often select clusters (e.g. households) based on infection outcomes detected at baseline (sometimes called ``index'' cases), especially for diseases with low overall prevalence or community force of infection, and risk ratios are computed at follow-up for cluster members susceptible at baseline.  Simulation results, given in the Supplement, show that when subjects are infected at baseline, resulting direction bias depends on the distribution of $x$ among infected and uninfected subjects at baseline.

%%%%%%%%%%%%%%%%%%%%%%%%%%%%%%%%

\section{Discussion}

We have applied a standard and widely accepted measure of association to outcomes generated by a canonical susceptible-infective model of infectious disease contagion.  When the distribution of a covariate is dependent within clusters and associated with both susceptibility to infection and transmissibility once infected, the risk ratio for that covariate may imply an aggregate effect whose direction is opposite that of its individual-level effect on susceptibility to infection.  
  Several other measures of association -- the risk difference, odds ratio, attributable risk, secondary attack rate, and some measures of vaccine efficacy -- may suffer from direction bias in the sense of Definition \ref{defn:unbiased} under similar conditions. Statisticians and epidemiologists have warned that na\"ive summaries of association may be biased under contagion \citep{koopman1991assessing,halloran1997study,eisenberg2003bias,ohagan2014estimating,staples2015incorporating}.  Our finding of direction bias in the risk ratio can be readily understood in terms of concepts already familiar to epidemiologists:
\begin{itemize}
\item Confounding and omitted variable bias \citep{rothman2008modern}: When the covariate $x$ is dependent within clusters, other subjects' covariate values can be regarded as a common cause of both a given subject's covariate value (via dependence), and that subject's outcome (via contagion). Omitting or failing to condition on this common cause can result in bias. 
\item Simpson's paradox \citep{greenland1999confounding,arah2008role,pearl2009causality,pearl2014comment}: When the joint distribution of $x$ in the cluster is dependent, conditioning on $x$ alone induces a non-causal association between the covariate and infection, via exposure to the outcomes of other subjects.  Since this spurious association may be different for subjects with $x=1$ and $x=0$, reversal in the apparent direction of the effect of $x$ on the outcome may occur.  
\item Ecological bias \citep{greenland1989ecological,koopman1994ecological}: Aggregating subjects by their covariate $x$ obscures individual-level differences in their exposure to infection. A causal interpretation of the risk ratio attributes outcomes in the group of subjects with $x=1$ (or $x=0$) to the value of their covariate, when those outcomes may actually be partly due to individual-level exposure to a cluster member with $x=0$ (or $x=1$). 
\item Endogeneity and measurement error \citep{manski1993identification,zohoori1997econometric,jurek2006exposure}: the right-hand side of \eqref{eq:haz} shows that the hazard experienced by a subject is a function of both their covariate value $x$, and the covariates and outcomes of other subjects in the same cluster. These other outcomes are themselves functions of those subjects' covariates $x$. Therefore the residual error in a prediction of a given subject's outcome from their own covariate value $x$ alone is correlated with the covariates $x$ of other cluster members, and $x$ is therefore an endogenous variable. 
\end{itemize}

The risk ratio is a valid statistical estimand: it summarizes the marginal association between $x$ and infection.  However, if investigators are interested in the \emph{causal} direct (i.e. susceptibility) effect of treatment on the person who receives it \citep{halloran1995causal,halloran1997study}, the risk ratio may give a very misleading estimate of this quantity.  
One striking consequence of Result \ref{prop:biasacrossnull} is that direction bias can occur even when subjects are exchangeable and treatment (i.e. $x=1$) is randomized and balanced within each cluster.  The primary factor driving these results is contagion; direction bias can occur even in the absence of unmodeled within-cluster heterogeneity, imbalance in covariate values, or heterogeneity in contact patterns \citep{koopman1991assessing}.

Whether direction bias occurs in a particular empirical investigation depends on the epidemiologic features (i.e. $\alpha_i(t)$, $\omega_{ikj}(t)$) of the disease under study, the distribution of cluster size $n_i$ and observation time $T_i$, and the distribution of $x$ within clusters.  
When a disease is only weakly contagious within clusters
or when within-cluster transmissibility is less than the exogenous force of infection,
direction bias may be less likely to occur.  This may be the situation in many cohort studies of infectious diseases. Result \ref{prop:biasindependentx} may justify the use of risk ratio in experimental studies with simple (Bernoulli) randomization of $x$, or possibly observational studies in which covariates of interest are independent or only weakly dependent within clusters.  In a wide variety of empirical dependence settings in which infection is only weakly contagious, the risk ratio may be a reasonable estimator of the ratio of instantaneous risks \eqref{eq:hr}.  However, studies of highly contagious diseases with weakly effective interventions (e.g. Ebola) may benefit from more careful analysis. 

In this paper the data-generating process 
is represented by a standard susceptible-infective model with subject-specific covariates and an exogenous force of infection. This setting provides a simple generative model that 
incorporates features of infectious disease contagion relevant to the properties of the risk ratio. However, this model does not capture several important aspects of infectious disease dynamics, including recovery, removal, re-infection, and multiple infections. Addition of further realistic features to the data-generating model seems unlikely to reduce the bias in the risk ratio as an estimate of the hazard ratio, and instead may exacerbate its undesirable properties. 

Remedies for the pathologies of the risk ratio under contagion are within reach. Epidemiologists have developed deterministic and stochastic models of infectious disease transmission in groups that take exposure to infection into account \citep{anderson1992infectious,andersson2012stochastic,rampey1992discrete}. Several researchers have developed inferential approaches that capture infectious disease transmission dynamics and permit adjustment for individual-level factors \citep{longini1982household,haber1988models,rampey1992discrete,kenah2015semiparametric}.  Some analyses of contagious outcomes adjust for variables that may be correlated with exposure to infectiousness \citep{fine1997household,huang2014effect,martinez2016infectiousness,staples2016leveraging}.  It remains an open question whether standard regression adjustment using a summary of infection outcomes of other individuals can deliver risk ratio estimates that are direction-unbiased. \\

%%%%%%%%%%%%%%%%%%%
\textbf{Sources of financial support:} This work was supported by grants R36 DA042643 from NIDA, R01 DA015612 from NIDA, DP2 OD022614 from NICHD, R01 AI112438-03 from NIAID, the Yale Center for Clinical Investigation, and the Center for Interdisciplinary Research on AIDS. Computing support was provided by the Yale Center for Research Computing and the W. M. Keck Biotechnology Laboratory, as well as grants RR19895 and RR029676-01 from NIH.

\textbf{Acknowledgements:} We are grateful to
Peter M. Aronow, 
Xiaoxuan Cai,
Edward H. Kaplan,
Joseph Lewnard,
Marc Lipsitch,
A. David Paltiel, 
Harvey Risch,
Daniel Weinberger,
and
Jon Zelner
for helpful discussion and comments. 

\textbf{Replication of results:} This paper uses only simulated data.  Replication code is available from the authors upon request.

%%%%%%%%%%%%%%%%%%%%%%%%%%%%%%
% References

\bibliographystyle{spbasic}

\bibliography{epibias}

%%%%%%%%%%%% SUPPLEMENT %%%%%%%%%%%%%%%%%%

\title{Supplement}
\author{}
\date{}

\maketitle

%%%%%%  Clusters of size two  %%%%%%%%%%
\section*{Risk ratio in clusters of size two}

Consider a cluster of two subjects, both uninfected at baseline, with $x_{1} = 1, x_{2} = 0$. The hazard functions for these subjects are
\[ \lambda_{1} (t) = e^\beta [\alpha + \omega Y_{2}(t)] \]
\[ \lambda_{2} (t) = \alpha + \omega e^\gamma Y_{1}(t) \]
and we are interested in understanding the properties of the risk ratio evaluated at time $t$, 
\[ RR = \frac{\E[Y_{1}(t)]}{\E[Y_{2}(t)]}.  \]
First, let $T_{1}$ and $T_{2}$ be the infection times of subjects 1 and 2, and let $S=\min\{T_{1},T_{2}\}$ be the time of first infection.  Let $I$ be the identity of the first infected subject.  The random variables $S$ has density 
\[ f(s) = \alpha (e^\beta + 1) \exp[-\alpha(e^\beta + 1) s] \]
and 
\[ \Pr(I=1) = e^\beta/(1+e^\beta) \]
Furthermore $S$ and $I$ are independent. By the law of iterated expectations, we expand 
\begin{equation*}
\begin{split}
 \E[Y_1(t)] &= \E_S [ \E_I [ Y_1(t) | S ]] \\
            &= \E_S \left[ \sum_{j\in\{1,2\}} \E[Y_1(t) | I=j, S=s] \Pr(I=j | S=s) \right] = \E_S \left[ \sum_{j\in\{1,2\}} \E[Y_1(t) | I=j, S=s] \Pr(I=j) \right] \\
            &= \E_S \left[ \Pr(I=1) + \E[Y_1(t) | I=2, S=s] \Pr(I=2) \right] \\
            &= \E_S \left[ \frac{e^\beta}{1+e^\beta} + \E[Y_1(t) | I=2, S=s] \frac{1}{1+e^\beta} \right] \\
\end{split}
\end{equation*}

In the above expectation with respect to $S$, it is implicit that $s<t$. The remaining inner expectation is 
\begin{equation*}
\begin{split}
 \E[Y_1(t) | I=2, S=s] &= \Pr(T_1<t | I=2, S=s) \\
                       &= \Pr(T_1<t | T_1>s) \\
                       &= 1 - \exp[-e^\beta (\alpha + \omega)(t - s)] 
 \end{split}
 \end{equation*}
 by the memoryless property of the exponential distribution. 
Putting these pieces together, 
\begin{equation*}
\begin{split}
   \E[Y_{1}(t)] &= \int_0^\infty \indicator {s<t} \left[ \frac {e^\beta}{1 + e^\beta} + \frac {1}{1 + e^\beta} (1 - \exp[-e^\beta (\alpha + \omega)(t - s)]) \right] \alpha (e^\beta + 1) \exp[-\alpha(e^\beta + 1) s ] \dx{s} \\
   & = \alpha \int_0^t \left[ e^\beta + 1 - \exp[-e^\beta (\alpha + \omega)(t - s)] \right] \exp[-\alpha(e^\beta + 1) s ] \dx{s} \\
   & = \alpha (e^\beta + 1)\int_0^t \exp[-\alpha(e^\beta + 1)s] \dx{s} - \alpha \exp[-e^\beta(\alpha+\omega)t] \int_0^{t} \exp[(e^\beta \omega - \alpha) s] \dx{s}  
   \end{split}
   \end{equation*}
 When $e^\beta \omega \neq \alpha$, 
   \begin{equation*}
   \begin{split}
  \E[Y_1(t)]  & = \frac{\alpha (e^\beta + 1)}{-\alpha(e^\beta + 1)} \Big[\exp[-\alpha(e^\beta + 1) t ] - 1 \Big] - \frac{\alpha}{e^\beta \omega - \alpha} \exp[-e^\beta (\alpha + \omega) t] \Big[\exp[(e^\beta \omega - \alpha) t ] - 1 \Big] \\
   &= 1 - \exp[-\alpha(e^\beta + 1) t ] - \frac{\alpha}{e^\beta \omega - \alpha} \exp[-\alpha(e^\beta + 1) t ] + \frac{\alpha}{e^\beta \omega - \alpha} \exp[-e^\beta (\alpha + \omega) t] \\
   & = \frac {e^\beta \omega}{\alpha - e^\beta \omega} \exp[-\alpha(e^\beta + 1) t ] - \frac {\alpha}{\alpha - e^\beta \omega} \exp[-e^\beta (\alpha + \omega) t] +1\\
  \end{split}
\end{equation*}
and when $e^\beta \omega = \alpha$, 
   \begin{equation*}
   \begin{split}
  \E[Y_1(t)]  & = 1 - \exp[-\alpha(e^\beta + 1) t ] - \alpha t \exp[-e^\beta (\alpha + \omega) t] \\
  & = 1 - \exp[-\alpha(e^\beta + 1) t ] (1+\alpha t) \\
  \end{split}
\end{equation*}
Similarly for $\E[Y_2(t)]$, if $\alpha e^\beta \neq \omega e^{\gamma}$, 
\begin{equation*}
  \E[Y_{2}(t)] = \frac {\omega e^\gamma}{\alpha e^\beta - \omega e^\gamma} \exp[-\alpha(e^\beta + 1) t ] -  \frac {\alpha e^\beta}{\alpha e^\beta - \omega e^\gamma} \exp[- (\alpha + \omega e^\gamma) t ] + 1 
   \end{equation*}
and if $\alpha e^\beta = \omega e^{\gamma}$, 
   \begin{equation*}
  \E[Y_2(t)]  = 1 - \exp[-\alpha(e^\beta + 1) t ] (1+\alpha e^\beta t) \\
\end{equation*}
Therefore the ratio of expectations is: 
\begin{equation}
 RR = 
 \begin{cases}
   \frac{\frac {e^\beta \omega}{\alpha - e^\beta \omega} \exp[-\alpha(e^\beta + 1) t ] - \frac {\alpha}{\alpha - e^\beta \omega} \exp[-e^\beta (\alpha + \omega) t] +1}{\frac {\omega e^\gamma}{\alpha e^\beta - \omega e^\gamma} \exp[-\alpha(e^\beta + 1) t ] -  \frac {\alpha e^\beta}{\alpha e^\beta - \omega e^\gamma} \exp[- (\alpha + \omega e^\gamma) t ] + 1},  & e^\beta \omega \neq \alpha,  \alpha e^\beta \neq \omega e^{\gamma} \\[1em]
 \frac{1 - \exp[-\alpha(e^\beta + 1) t ] (1+\alpha t)}{\frac {\omega e^\gamma}{\alpha e^\beta - \omega e^\gamma} \exp[-\alpha(e^\beta + 1) t ] -  \frac {\alpha e^\beta}{\alpha e^\beta - \omega e^\gamma} \exp[- (\alpha + \omega e^\gamma) t ] + 1},  &  e^\beta \omega = \alpha,  \alpha e^\beta \neq \omega e^{\gamma} \\[1em]
 \frac{\frac {e^\beta \omega}{\alpha - e^\beta \omega} \exp[-\alpha(e^\beta + 1) t ] - \frac {\alpha}{\alpha - e^\beta \omega} \exp[-e^\beta (\alpha + \omega) t] +1}{1 - \exp[-\alpha(e^\beta + 1) t ] (1+\alpha e^\beta t) },  & e^\beta \omega \neq \alpha,  \alpha e^\beta = \omega e^{\gamma} \\[1em]
 \frac{1 - \exp[-\alpha(e^\beta + 1) t ] (1+\alpha t)}{1 - \exp[-\alpha(e^\beta + 1) t ] (1+\alpha e^\beta t) },  & e^\beta \omega = \alpha,  \alpha e^\beta = \omega e^{\gamma} .
 \end{cases}
\label{eq:exactrr}
\end{equation}
In some of the proofs that follow, it will be useful to consider the risk difference $\E[Y_1]-\E[Y_2]$, whose sign is the same as that of the $RD^*$, where
\small
\noindent \begin{equation}
 RD^* = 
 \begin{cases}
   {\frac{\omega(e^{2 \beta} - e^\gamma) \exp[-\alpha(e^\beta+1)t] + (\omega e^\gamma - \alpha e^\beta) \exp[-e^\beta(\alpha + \omega)t] + e^\beta (\alpha - \omega e^\beta) \exp[-(\alpha + \omega e^\gamma)t] }{(\alpha - \omega e^\beta)(\alpha e^\beta - \omega e^\gamma)}}, & e^\beta \omega \neq \alpha,  \alpha e^\beta \neq \omega e^{\gamma} \\[1em]
 {\frac{e^\beta \exp[-(\alpha + \omega e^\gamma)t]  - (e^\beta + t (\alpha e^\beta - \omega e^\gamma)) \exp[-\alpha(e^\beta+1)t] }{\alpha e^\beta - \omega e^\gamma}}, & e^\beta \omega = \alpha,  \alpha e^\beta \neq \omega e^{\gamma} \\[1em] 
 { \frac{(1 + t e^\beta (\alpha - \omega e^\beta)) \exp[-\alpha(e^\beta+1)t] - \exp[-e^\beta(\alpha + \omega)t]  }{\alpha - \omega e^\beta}}, &  e^\beta \omega \neq \alpha,  \alpha e^\beta = \omega e^{\gamma} \\[1em]
 {t (e^\beta - 1)}, & e^\beta \omega = \alpha,  \alpha e^\beta = \omega e^{\gamma} . \\
 \end{cases}
\label{eq:rd}
\end{equation}

\section*{Proofs}  

\subsection*{Households of size two}

We first state and prove a simple Lemma that will ease exposition in what follows. 
\begin{lem}
  Suppose $0<a<b<c$.  Then 
  \[ (c-b) (e^{-a}-e^{-b}) - (b-a) (e^{-b}-e^{-c}) > 0  . \]
  \label{lem1}
\end{lem}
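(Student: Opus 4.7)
The plan is to interpret the two differences on the left-hand side as secant slopes of the function $f(x) = e^{-x}$, and then invoke its strict convexity. After dividing the stated inequality by the positive quantity $(b-a)(c-b)$, it is equivalent to
\[
\frac{e^{-a} - e^{-b}}{b - a} \;>\; \frac{e^{-b} - e^{-c}}{c - b},
\]
or, multiplying through by $-1$,
\[
\frac{f(b) - f(a)}{b - a} \;<\; \frac{f(c) - f(b)}{c - b}.
\]
So I would first rewrite the lemma in this symmetric ``secant slope'' form.

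The second step is to apply strict convexity of $f(x) = e^{-x}$, which holds since $f''(x) = e^{-x} > 0$. For a strictly convex function, the chord slope over $[a,b]$ is strictly less than the chord slope over $[b,c]$ whenever $a < b < c$; this is the standard three-chord (or ``increasing slopes'') lemma. Equivalently, one can invoke the mean value theorem: there exist $\xi_1 \in (a,b)$ and $\xi_2 \in (b,c)$ with $\frac{f(b)-f(a)}{b-a} = f'(\xi_1) = -e^{-\xi_1}$ and $\frac{f(c)-f(b)}{c-b} = f'(\xi_2) = -e^{-\xi_2}$, and since $\xi_1 < \xi_2$ and $-e^{-x}$ is strictly increasing, the first is strictly less than the second. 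Reversing the earlier sign flip delivers the desired inequality.

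I do not anticipate a real obstacle here — the content of the lemma is essentially that the exponential function is strictly convex. The only care required is bookkeeping of signs (the factor $e^{-a} - e^{-b}$ is positive because $e^{-x}$ is decreasing, which is why the claimed strict inequality is nontrivial in the stated form), and ensuring strict rather than weak inequality, which is guaranteed by $f'' > 0$ everywhere on $\mathbb{R}$ rather than merely $f'' \ge 0$.
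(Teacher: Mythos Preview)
Your proposal is correct and matches the paper's proof essentially line for line: the paper also applies the mean value theorem to $f(x)=e^{-x}$ on $[a,b]$ and $[b,c]$ to obtain points $x_1<x_2$ with $f'(x_1)$ and $f'(x_2)$ equal to the two secant slopes, then uses monotonicity of $-e^{-x}$ to conclude. Your additional framing via strict convexity and the three-chord lemma is a nice way to see why the argument works, but it is not a different route.
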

\begin{proof}
  Let $f(x)=e^{-x}$, so $f'(x)={\rm d}f(x)/{\rm d}x=-e^{-x}$. By the mean value theorem, there exist $x_1\in(a,b)$ and $x_2\in(b,c)$ such that 
  \[ f'(x_1) = -e^{-x_1} = \frac{e^{-b}-e^{-a}}{b-a} \quad\text{and}\quad  f'(x_2) = -e^{-x_2} = \frac{e^{-c}-e^{-b}}{c-b}. \]
  But since $x_1<x_2$, it follows that $-e^{-x_1}<-e^{-x_2}$ and so $f'(x_1)<f'(x_2)$.  Therefore 
  \[ \frac{e^{-b}-e^{-a}}{b-a} < \frac{e^{-c}-e^{-b}}{c-b}, \]
  and rearranging this inequality gives $(c-b) (e^{-a}-e^{-b}) - (b-a) (e^{-b}-e^{-c}) > 0$, as claimed.
\end{proof}

% RESULT 1
\subsubsection*{Result \ref{prop:biasnoclustercont}: No within-cluster contagion}

\begin{proof}
  Suppose $\alpha>0$ and $\omega=0$. We only need to consider the first case in \eqref{eq:rd}, and the sign of this expression at time $T$ is the same as that of
  \[ \exp[-\alpha T] - \exp[-\alpha e^\beta T] . \]
Since $\alpha$ and $T$ are non-negative, the risk ratio is less than one for every $t\in(0,T]$ when $\beta<0$, one when $\beta=0$, and greater than one when $\beta>0$. Therefore the risk ratio is direction-unbiased. 
\end{proof}

% RESULT 2
\subsubsection*{Result \ref{prop:biasundernull}: Under the null}

\begin{proof}
  Suppose $\beta=0$. The sign of \eqref{eq:rd} is the same as the sign of $RD_{\beta=0}^*$, where 
\noindent \begin{equation}
 RD_{\beta=0}^* = 
 \begin{cases}
   {\frac{\omega(1 - e^\gamma) \exp[-\alpha t] + (\omega e^\gamma - \alpha) \exp[- \omega t] + (\alpha - \omega) \exp[- \omega e^\gamma t] }{(\alpha - \omega)(\alpha - \omega e^\gamma)}}, & \alpha \neq \omega,  \alpha \neq \omega e^{\gamma} \\[1em]
   {\frac{\exp[- \alpha e^\gamma t]  - (1 + t \alpha (1 - e^\gamma)) \exp[-\alpha t] }{\alpha(1 - e^\gamma)}}, & \alpha = \omega,  \alpha \neq \omega e^{\gamma} \\[1em]
   { \frac{(1 + t (\omega e^\gamma - \omega)) \exp[-\omega e^\gamma t] - \exp[- \omega t]  }{\omega (e^\gamma - 1)}}, &  \alpha \neq \omega,  \alpha = \omega e^{\gamma} \\[1em]
   {0}, & \alpha = \omega,  \alpha = \omega e^{\gamma} . \\[1em]
 \end{cases}
\label{eq:rdnull}
\end{equation}
First, note that when $\gamma=0$, $RD_{\beta=0}^*=0$, so $RR=1$.  

Now suppose $\gamma \neq 0$.  The proof is divided into cases for $\gamma < 0$ and $\gamma > 0$. These cases are further divided into several sub-cases defined by the relationship between the parameters of the model.\\[1em] 

\noindent \textbf{Case 1:} Let $\gamma < 0$.  We will show that for any $t>0$, expression in \eqref{eq:rdnull} is positive, and hence $RR>1$.\\
% Case 1.1 $(\alpha - \omega e^\gamma) < 0$
\noindent\textbf{Sub-case 1.1:} Suppose $0 < \alpha < \omega e^\gamma < \omega$. The denominator of \eqref{eq:rdnull} is positive, and the expressions in the numerator have the following signs:
\[  \omega(1 - e^\gamma) > 0, \quad \omega e^\gamma - \alpha > 0, \quad \text{and} \quad \alpha - \omega < 0 . \]
Multiplying the numerator of \eqref{eq:rdnull} by $t>0$ gives the following expression:
\[ (\omega t - \omega e^\gamma t) \exp[-\alpha t] + (\omega e^\gamma t - \alpha t) \exp[-\omega t] - (\omega t - \alpha t) \exp[-\omega e^\gamma t] .\]
Splitting $\omega t - \alpha t$ into $(\omega t - \omega e^\gamma t) + (\omega e^\gamma t - \alpha t)$ and rearranging, the numerator of \eqref{eq:rdnull} equals:
\[ (\omega t - \omega e^\gamma t) \left( \exp[-\alpha t] - \exp[-\omega e^\gamma t]\right) - (\omega e^\gamma t - \alpha t) \left(\exp[-\omega e^\gamma t] - \exp[-\omega t] \right) \]
Let $a=\alpha t$, $b=\omega e^\gamma t$, and $c=\omega t$. By Lemma \ref{lem1}, the numerator of \eqref{eq:rdnull} is positive for any $t>0$, so $RR>1$. \\[1em]

% Case 1.2 $(\alpha - \omega e^\gamma) > 0$ and $(\alpha - \omega) < 0$
\noindent\textbf{Sub-case 1.2:} Suppose $0 < \omega e^\gamma < \alpha < \omega$. The denominator of \eqref{eq:rdnull} is negative, and the expressions in the numerator have the following signs:
\[  \omega(1 - e^\gamma) > 0, \quad \omega e^\gamma - \alpha < 0, \quad \text{and} \quad \alpha - \omega < 0 . \]
Multiplying the numerator of \eqref{eq:rdnull} by $t>0$ and rearranging gives the following expression:
\[ (\alpha t - \omega e^\gamma t) \left( \exp[-\alpha t] - \exp[-\omega t]\right) - (\omega t - \alpha t) \left(\exp[-\omega e^\gamma t] - \exp[-\alpha t] \right) \]
By Lemma \ref{lem1}, the numerator of \eqref{eq:rdnull} is negative for any $t>0$, so $RR>1$.\\

% Case 1.3 $(\alpha - \omega) > 0$
\noindent\textbf{Sub-case 1.3:} Suppose $0 < \omega e^\gamma < \omega < \alpha$. The denominator of \eqref{eq:rdnull} is positive, and the expressions in the numerator have the following signs:
\[  \omega(1 - e^\gamma) > 0, \quad \omega e^\gamma - \alpha < 0, \quad \text{and} \quad \alpha - \omega > 0 . \]
Multiplying the numerator of \eqref{eq:rdnull} by $t>0$ and rearranging gives the following expression:
\[ (\alpha t - \omega t) (\exp[-\omega e^\gamma t] - \exp[-\omega t]) - (\omega t - \omega e^\gamma t) (\exp[-\omega t] - \exp[-\alpha t] ) \]
By Lemma \ref{lem1}, numerator of \eqref{eq:rdnull} is positive for any $t>0$, so $RR>1$.\\ 

% Case 1.4
\noindent\textbf{Sub-case 1.4:} Suppose $\alpha - \omega = 0$. Since $e^\gamma < 1$, the denominator of \eqref{eq:rdnull} is positive. Dividing the numerator by $\exp[-\alpha t]$ and rearranging gives: 
\[ \exp[\alpha t (1-e^\gamma)] - (1+ \alpha t (1-e^\gamma)),\]
which is positive for any $t>0$, since $\exp[a] > 1+a$ for $a \ne 0$, so $RR>1$.\\

% Case 1.5
\noindent\textbf{Sub-case 1.5:} Suppose $\alpha - \omega e^\gamma = 0$. Since $e^\gamma < 1$, the denominator of \eqref{eq:rdnull} is negative. Dividing the numerator by $\exp[-\omega e^\gamma t]$ and rearranging gives: 
\[ (1+ \omega t (e^\gamma - 1)) - \exp[\omega t (e^\gamma - 1)] ,\]
which is negative for any $t>0$, since $\exp[a] > 1+a$ for $a \ne 0$, so $RR>1$. \\ [1em]

\noindent \textbf{Case 2:} Let $\gamma > 0$. We will show that for any $t>0$, the expression in \eqref{eq:rdnull} is negative, and hence $RR<1$. 

% Case 2.1
\noindent\textbf{Sub-case 2.1:} Suppose $0 < \alpha < \omega < \omega e^\gamma$. The denominator of \eqref{eq:rdnull} is positive, and the expressions in the numerator have the following signs:
\[  \omega(1 - e^\gamma) < 0, \quad \omega e^\gamma - \alpha > 0, \quad \text{and} \quad \alpha - \omega < 0 . \]
Multiplying the numerator of \eqref{eq:rdnull} by $t>0$ and rearranging gives the following expression:
\[ (\omega t - \alpha t) (\exp[-\omega t] - \exp[-\omega e^\gamma t]) - (\omega e^\gamma t - \omega t) (\exp[-\alpha t] - \exp[-\omega t] ) \]
By Lemma \ref{lem1}, the numerator of \eqref{eq:rdnull} is negative for any $t>0$, so $RR<1$.\\ 

% Case 2.2
\noindent\textbf{Sub-case 2.2:} Suppose $0 < \omega < \alpha < \omega e^\gamma$. The denominator of \eqref{eq:rdnull} is negative, and the expressions in the numerator have the following signs:
\[  \omega(1 - e^\gamma) < 0, \quad \omega e^\gamma - \alpha > 0, \quad \text{and} \quad \alpha - \omega > 0 . \]
Multiplying the numerator of \eqref{eq:rdnull} by $t>0$ and rearranging gives the following expression:
\[ (\omega e^\gamma t - \alpha t) (\exp[-\omega t] - \exp[-\alpha t]) - (\alpha t - \omega t) (\exp[-\alpha t] - \exp[-\omega e^\gamma t] ) \]
By Lemma \ref{lem1}, the numerator of \eqref{eq:rdnull} is positive for any $t>0$, so $RR<1$. \\ 

% Case 2.3
\noindent\textbf{Sub-case 2.3:} Suppose $0 < \omega < \omega e^\gamma < \alpha$. The denominator of \eqref{eq:rdnull} is positive, and the expressions in the numerator have the following signs:
\[  \omega(1 - e^\gamma) < 0, \quad \omega e^\gamma - \alpha < 0, \quad \text{and} \quad \alpha - \omega > 0 . \]
Multiplying the numerator of \eqref{eq:rdnull} by $t>0$ and rearranging gives the following expression:
\[ (\omega e^\gamma t - \omega t) (\exp[-\omega e^\gamma t] - \exp[-\alpha t]) - (\alpha t - \omega e^\gamma t) (\exp[-\omega t] - \exp[-\omega e^\gamma t] ) \]
By Lemma \ref{lem1}, the numerator of \eqref{eq:rdnull} is negative for any $t>0$, so $RR<1$.\\ 

% Case 2.4
\noindent\textbf{Sub-case 2.4:} Suppose $\alpha - \omega = 0$. Since $e^\gamma > 1$, the denominator of \eqref{eq:rdnull} is negative. Dividing the numerator by $\exp[-\alpha t]$ and rearranging gives: 
\[ \exp[\alpha t (1-e^\gamma)] - (1+ \alpha t (1-e^\gamma)),\]
which is positive for any $t>0$, since $\exp[a] > 1+a$ for $a \ne 0$, so $RR<1$.\\

% Case 2.5
\noindent\textbf{Sub-case 2.5:} Suppose $\alpha - \omega e^\gamma = 0$. Since $e^\gamma > 1$, the denominator of \eqref{eq:rdnull} is positive. Dividing the numerator by $\exp[-\omega e^\gamma t]$ and rearranging gives: 
\[ (1+ \omega t (e^\gamma - 1)) - \exp[\omega t (e^\gamma - 1)] ,\]
which is negative for any $t>0$, since $\exp[a] > 1+a$ for $a \ne 0$, so $RR<1$. 
\end{proof}

% RESULT 3
\subsubsection*{Result \ref{prop:biasnoinfect}: Homogeneous infectiousness}
\small
\begin{proof}
  Suppose $\gamma=0$. The sign of \eqref{eq:rd} is the same as the sign of $RD_{\gamma=0}^*$, where 
\begin{equation}
 RD_{\gamma=0}^* = 
 \begin{cases}
   {\frac{\omega(e^{2 \beta} - 1) \exp[-\alpha(e^\beta+1)t] + (\omega - \alpha e^\beta) \exp[-e^\beta(\alpha + \omega)t] + e^\beta (\alpha - \omega e^\beta) \exp[-(\alpha + \omega)t] }{(\alpha - \omega e^\beta)(\alpha e^\beta - \omega)}}, & e^\beta \omega \neq \alpha,  \alpha e^\beta \neq \omega \\[1em]
  {\frac{e^\beta \exp[- \omega t]  - (e^\beta + t \omega (e^{2 \beta} - 1)) \exp[- \omega e^{2\beta} t] }{\omega (e^{2\beta} - 1)}}, & e^\beta \omega = \alpha,  \alpha e^\beta \neq \omega \\[1em] 
  { \frac{(1 + t \alpha e^\beta (1 - e^{2\beta})) \exp[-\alpha t] - \exp[-\alpha e^{2\beta} t]  }{\alpha (1 - e^{2\beta})}}, & e^\beta \omega \neq \alpha,  \alpha e^\beta = \omega \\[1em]
 {t (e^\beta - 1)}, & e^\beta \omega = \alpha,  \alpha e^\beta = \omega . \\
 \end{cases}
\label{eq:rdnoinf}
\end{equation}
\normalsize
First note that when $\gamma=0$, $RD_{\beta=0}^*=0$, so $RR=1$.  

Now suppose $\beta \neq 0$.
The proof is divided into cases for $\beta < 0$ and $\beta > 0$. These cases are further divided into several sub-cases defined by the relationship between the parameters of the model.\\[1em] 

\noindent \textbf{Case 1:} Suppose $\beta < 0$. We will show that for any $t>0$, expression in \eqref{eq:rdnoinf} is negative.\\
% Case 1.1 
\noindent\textbf{Sub-case 1.1:} Suppose $0 < \alpha < \omega e^\beta$. It follows from this condition that $\alpha e^\beta < \omega e^{2\beta} < \omega$, $\alpha e^\beta < \omega$ and $\exp[-e^\beta(\alpha + \omega)t] < \exp[-(\alpha + \omega e^{2\beta})t]$. The denominator of \eqref{eq:rdnoinf} is positive, and the expressions in the numerator have the following signs:
\[  \omega(e^{2 \beta} - 1) < 0, \quad \omega - \alpha e^\beta > 0, \quad \text{and} \quad e^\beta(\alpha - \omega e^\beta) < 0 . \]
The numerator of \eqref{eq:rdnoinf} is less than
	\begin{equation}
	\omega(e^{2 \beta} - 1) \exp[-\alpha(e^\beta+1)t] + (\omega - \alpha e^\beta) \exp[-(\alpha + \omega e^{2\beta})t] + e^\beta (\alpha - \omega e^\beta) \exp[-(\alpha + \omega)t]
	\label{eq:ni1}
	\end{equation}
which has the same sign as 
	\begin{equation}
	\omega(e^{2 \beta} - 1) \exp[-\alpha e^\beta t] + (\omega - \alpha e^\beta) \exp[- \omega e^{2\beta} t] + e^\beta (\alpha - \omega e^\beta) \exp[- \omega t]. 
	\label{eq:ni2}
	\end{equation}
Multiplying \eqref{eq:ni2} by $t>0$ and rearranging gives the following expression:
\[ (\omega e^{2\beta} t - \alpha e^\beta t) (\exp[-\omega e^{2\beta} t] - \exp[-\omega t] ) - (\omega t - \omega e^{2\beta} t) (\exp[-\alpha e^\beta t] - \exp[-\omega e^{2\beta} t] ) . \]
By Lemma \ref{lem1}, \eqref{eq:ni2} is negative for any $t>0$, so $RR<1$.\\

% Case 1.2 
\noindent\textbf{Sub-case 1.2:} Suppose $0 < \omega e^{2\beta} < \alpha e^\beta < \omega$. It follows from this condition that $\alpha e^\beta < \omega$, $\omega e^\beta < \alpha$, and $\exp[-e^\beta(\alpha + \omega)t] > \exp[-(\alpha + \omega e^{2\beta})t]$. The denominator of \eqref{eq:rdnoinf} is negative, and the expressions in the numerator have the following signs:
\[  \omega(e^{2 \beta} - 1) < 0, \quad \omega - \alpha e^\beta > 0, \quad \text{and} \quad e^\beta(\alpha - \omega e^\beta) > 0 . \]
The numerator of \eqref{eq:rdnoinf} is greater than \eqref{eq:ni1}, which has the same sign as \eqref{eq:ni2}. Multiplying \eqref{eq:ni2} by $t>0$ and rearranging gives the following expression:
\[ (\omega t - \alpha e^\beta t) (\exp[-\omega e^{2\beta} t] - \exp[-\alpha e^\beta t] ) - (\alpha e^\beta t - \omega e^{2\beta} t) (\exp[-\alpha e^\beta t] - \exp[-\omega t] ) . \]
By Lemma \ref{lem1}, \eqref{eq:ni2} is positive for any $t>0$, so $RR<1$.\\

% Case 1.3  
\noindent\textbf{Sub-case 1.3:} Suppose $0 < \omega < \alpha e^\beta$. It follows from this condition that $\omega e^{2\beta} < \omega < \alpha e^\beta$, $\omega e^\beta < \alpha$ and $\exp[-e^\beta(\alpha + \omega)t] > \exp[-(\alpha + \omega e^{2\beta})t]$. The denominator of \eqref{eq:rdnoinf} is positive, and the expressions in the numerator have the following signs:
\[  \omega(e^{2 \beta} - 1) < 0, \quad \omega - \alpha e^\beta < 0, \quad \text{and} \quad e^\beta(\alpha - \omega e^\beta) > 0 . \]
The numerator of \eqref{eq:rdnoinf} is less than \eqref{eq:ni1}, which has the same sign as \eqref{eq:ni2}. Multiplying \eqref{eq:ni2} by $t>0$ and rearranging gives the following expression:
\[ (\omega t - \omega e^{2\beta} t) (\exp[-\omega t] - \exp[-\alpha e^\beta t] ) - (\alpha e^\beta t - \omega t) (\exp[-\omega e^{2\beta} t] - \exp[-\omega t] ) . \]
By Lemma \ref{lem1}, \eqref{eq:ni2} is negative for any $t>0$, so $RR<1$.\\

% Case 1.4
\noindent\textbf{Sub-case 1.4:} Suppose $\alpha - \omega e^\beta = 0$ and $\alpha e^\beta - \omega \ne 0$. Since $e^\beta < 1$, the denominator of \eqref{eq:rdnoinf} is negative.  Dividing the numerator by $e^\beta \exp[- \omega e^{2\beta} t]$ and rearranging gives:
\[ \exp[\omega t (e^{2\beta}-1) ] -(1 + \omega \frac{t}{e^\beta} (e^{2\beta}-1)) > \exp[\omega t (e^{2\beta}-1) ] -(1 + \omega t (e^{2\beta}-1)) .\]
The right hand side of this expression is positive for any $t>0$, since $\exp[a] > 1+a$ for $a \ne 0$, so $RR<1$.\\

% Case 1.5
\noindent\textbf{Sub-case 1.5:} Suppose $\alpha - \omega e^\beta \ne 0$ and $\alpha e^\beta - \omega = 0$. Since $e^\beta < 1$, the denominator of \eqref{eq:rdnoinf} is positive. 
Dividing the numerator by $\exp[- \alpha t]$ and rearranging gives:
\[ 1 + \alpha e^\beta t (1 - e^{2\beta}) -\exp[\alpha t (1 - e^{2\beta})] < 1 + \alpha t (1 - e^{2\beta}) -\exp[\alpha t (1 - e^{2\beta})]  .\]
The right hand side of this expression is negative for any $t>0$, since $\exp[a] > 1+a$ for $a \ne 0$, so $RR<1$.\\

% Case 1.6
\noindent\textbf{Sub-case 1.6:} Suppose $\alpha - \omega e^\beta = 0$ and $\alpha e^\beta - \omega = 0$. Since $e^\beta < 1$, \eqref{eq:rdnoinf} is negative for any $t>0$, so $RR<1$. \\[1em]

\noindent \textbf{Case 2:} Suppose $\beta > 0$.  We will show that for any $t>0$, expression in \eqref{eq:rdnoinf} is positive, and hence $RR>1$.\\
% Case 2.1 
\noindent\textbf{Sub-case 2.1:} Suppose $0 < \alpha e^\beta < \omega$. It follows from this condition that $\alpha e^\beta < \omega < \omega e^{2\beta}$, $\alpha < \omega e^\beta$ and $\exp[-e^\beta(\alpha + \omega)t] > \exp[-(\alpha + \omega e^{2\beta})t]$. The denominator of \eqref{eq:rdnoinf} is positive, and the expressions in the numerator have the following signs:
\[  \omega(e^{2 \beta} - 1) > 0, \quad \omega - \alpha e^\beta > 0, \quad \text{and} \quad e^\beta(\alpha - \omega e^\beta) < 0 . \]
The numerator of \eqref{eq:rdnoinf} is greater than \eqref{eq:ni1}, which has the same sign as \eqref{eq:ni2}. Multiplying \eqref{eq:ni2} by $t>0$ and rearranging gives the following expression:
\[ (\omega e^{2\beta} t - \omega t) (\exp[-\alpha e^\beta t] - \exp[-\omega t] ) - (\omega t - \alpha e^\beta t) (\exp[-\omega t] - \exp[-\omega e^{2\beta} t] ) . \]
By Lemma \ref{lem1}, \eqref{eq:ni2} is positive for any $t>0$, so $RR>1$.\\

% Case 2.2 
\noindent\textbf{Sub-case 2.2:} Suppose $0 < \omega < \alpha e^\beta < \omega e^{2\beta}$. It follows from this condition that $\alpha < \omega e^\beta$, $\alpha e^\beta > \omega$ and $\exp[-e^\beta(\alpha + \omega)t] > \exp[-(\alpha + \omega e^{2\beta})t]$. The denominator of \eqref{eq:rdnoinf} is negative, and the expressions in the numerator have the following signs:
\[  \omega(e^{2 \beta} - 1) > 0, \quad \omega - \alpha e^\beta < 0, \quad \text{and} \quad e^\beta(\alpha - \omega e^\beta) < 0 . \]
The numerator of \eqref{eq:rdnoinf} is less than \eqref{eq:ni1}, which has the same sign as \eqref{eq:ni2}. Multiplying \eqref{eq:ni2} by $t>0$ and rearranging gives the following expression:
\[ (\alpha e^\beta t - \omega t) (\exp[-\alpha e^\beta t] - \exp[-\omega e^{2\beta} t] ) - (\omega e^{2\beta}  t - \alpha e^\beta t) (\exp[-\omega t] - \exp[-\alpha e^\beta t] ) . \]
By Lemma \ref{lem1}, \eqref{eq:ni2} is negative for any $t>0$, so $RR>1$.\\

% Case 2.3 
\noindent\textbf{Sub-case 2.3:} Suppose $0 < \omega e^\beta < \alpha$. It follows from this condition that $\omega < \omega e^{2\beta} < \alpha e^\beta$, $\alpha e^\beta > \omega$ and $\exp[-e^\beta(\alpha + \omega)t] < \exp[-(\alpha + \omega e^{2\beta})t]$. The denominator of \eqref{eq:rdnoinf} is positive, and the expressions in the numerator have the following signs:
\[  \omega(e^{2 \beta} - 1) > 0, \quad \omega - \alpha e^\beta < 0, \quad \text{and} \quad e^\beta(\alpha - \omega e^\beta) > 0 . \]
The numerator of \eqref{eq:rdnoinf} is greater than \eqref{eq:ni1}, which has the same sign as \eqref{eq:ni2}. Multiplying \eqref{eq:ni2} by $t>0$ and rearranging gives the following expression:
\[ (\alpha e^\beta t - \omega e^{2\beta} t) (\exp[-\omega t] - \exp[-\omega e^{2\beta} t] ) - (\omega e^{2\beta} t - \omega t) (\exp[-\omega e^{2\beta} t] - \exp[-\alpha e^\beta t] ) . \]
By Lemma \ref{lem1}, \eqref{eq:ni2} is positive for any $t>0$, so $RR<1$.\\

% Case 2.4
\noindent\textbf{Sub-case 2.4:} Suppose $\alpha - \omega e^\beta = 0$ and $\alpha e^\beta - \omega \ne 0$. Since $e^\beta > 1$, the denominator of \eqref{eq:rdnoinf} is positive.  Dividing the numerator by $e^\beta \exp[- \omega e^{2\beta} t]$ and rearranging gives:
\[ \exp[\omega t (e^{2\beta}-1) ] -(1 + \omega \frac{t}{e^\beta} (e^{2\beta}-1)) > \exp[\omega t (e^{2\beta}-1) ] -(1 + \omega t (e^{2\beta}-1)) .\]
The right hand side of this expression is positive for any $t>0$, since $\exp[a] > 1+a$ for $a \ne 0$, so $RR<1$.\\

% Case 2.5
\noindent\textbf{Sub-case 2.5:} Suppose $\alpha - \omega e^\beta \ne 0$ and $\alpha e^\beta - \omega = 0$. Since $e^\beta > 1$, the denominator of \eqref{eq:rdnoinf} is negative.  Dividing the numerator by $\exp[- \alpha t]$ and rearranging gives:
\[ 1 + \alpha e^\beta t (1 - e^{2\beta}) -\exp[\alpha t (1 - e^{2\beta})] < 1 + \alpha t (1 - e^{2\beta}) -\exp[\alpha t (1 - e^{2\beta})]  .\]
The right hand side of this expression is negative for any $t>0$, since $\exp[a] > 1+a$ for $a \ne 0$, so $RR<1$.\\

% Case 2.6
\noindent\textbf{Sub-case 2.6:} Suppose $\alpha - \omega e^\beta = 0$ and $\alpha e^\beta - \omega = 0$. Since $e^\beta > 1$, \eqref{eq:rdnoinf} is positive for any $t>0$, so $RR<1$.  
\end{proof}

% RESULT 4
\subsubsection*{Result \ref{prop:biasacrossnull}: Direction bias}
\begin{proof}
The proof is divided into cases for $\beta < 0$ and $\beta > 0$. These cases are further divided into several sub-cases defined by the relationship between the parameters of the model. \\

\noindent \textbf{Case 1:} Suppose $\beta<0$ and $e^\gamma < \min\{e^{2\beta} , e^\beta + \frac{\alpha}{\omega}(e^\beta - 1)\}$. It follows that $e^\beta > \alpha / (\alpha+\omega)$, and that the equalities $e^\beta \omega = \alpha$ and $\alpha e^\beta = \omega e^\gamma$ cannot hold simultaneously. 
We will show that for every combination of other parameters there exists $t^* > 0$, such that for all $t > t^*$, the corresponding expression in \eqref{eq:rd} is positive, and hence $RR>1$.

\noindent\textbf{Sub-case 1.1:} Suppose $\alpha / \omega < e^\beta < 1$, 
$\alpha e^\beta \ne \omega e^\gamma$ and consider the first line of \eqref{eq:rd}.  
When $\alpha e^\beta - \omega e^\gamma < 0$, the denominator 
is positive, and the expressions in the numerator have the following signs:
\[  \omega(e^{2 \beta} - e^\gamma) > 0, \quad \omega e^\gamma - \alpha e^\beta > 0, \quad \text{and} \quad e^\beta (\alpha - \omega e^\beta) < 0 . \]
Therefore for any $t>0$ the numerator of \eqref{eq:rd} is greater than 
\begin{equation}
 \omega(e^{2 \beta} - e^\gamma) \exp[-\alpha(e^\beta+1)t] + e^\beta (\alpha - \omega e^\beta) \exp[-(\alpha + \omega e^\gamma)t] 
 \label{eq:case111a}
\end{equation}
which has positive sign whenever
\begin{equation}
 \omega(e^{2 \beta} - e^\gamma) \exp[-\alpha(e^\beta+1)t] > - e^\beta (\alpha - \omega e^\beta) \exp[-(\alpha + \omega e^\gamma)t].
 \label{eq:case111b}
\end{equation}
This inequality holds for any $t > \frac{\log[e^\beta(\omega e^\beta - \alpha)] - \log[\omega(e^{2 \beta} - e^\gamma)]}{\omega e^\gamma - \alpha e^\beta}$. 
Note that this threshold for $t$ is positive and finite.

When $\alpha e^\beta - \omega e^\gamma > 0$, the denominator of \eqref{eq:rd} is negative, and the coefficients in the numerator have the following signs:
\[ \omega(e^{2 \beta} - e^\gamma) > 0, \quad \omega e^\gamma - \alpha e^\beta < 0, \quad \text{and}\quad e^\beta (\alpha - \omega e^\beta) < 0 . \]
Therefore for any $t>0$ the numerator of \eqref{eq:rd} is less than 
\begin{equation}
  \omega(e^{2 \beta} - e^\gamma) \exp[-\alpha(e^\beta+1)t] + e^\beta (\alpha - \omega e^\beta) \exp[-(\alpha + \omega e^\gamma)t] 
 \label{eq:case112a}
\end{equation}
which is negative whenever
\begin{equation}
 \omega(e^{2 \beta} - e^\gamma) \exp[-\alpha(e^\beta+1)t]  < - e^\beta (\alpha - \omega e^\beta) \exp[-(\alpha + \omega e^\gamma)t].
 \label{eq:case112b}
\end{equation}
This inequality holds for any $t > \frac{\log[\omega(e^{2 \beta} - e^\gamma)] - \log[e^\beta(\omega e^\beta - \alpha)]}{\alpha e^\beta - \omega e^\gamma}$. Note that this threshold for $t$ is positive and finite.

\noindent \textbf{Sub-case 1.2:} Suppose $e^\beta < \alpha / \omega$
and $\alpha e^\beta \ne \omega e^\gamma$. It follows that $\alpha e^\beta - \omega e^\gamma > 0$, the denominator of \eqref{eq:rd} is positive, and the expressions in the numerator have the following signs:
\[ \omega(e^{2 \beta} - e^\gamma) > 0,  \quad \omega e^\gamma - \alpha e^\beta < 0, \quad\text{and}\quad e^\beta (\alpha - \omega e^\beta) > 0 .\]
Therefore for any $t>0$, the numerator of \eqref{eq:rd} is greater than 
\begin{equation}
 (\omega e^\gamma - \alpha e^\beta) \exp[-e^\beta(\alpha + \omega)t] + e^\beta (\alpha - \omega e^\beta) \exp[-(\alpha + \omega e^\gamma)t] ,
 \label{eq:case12a}
\end{equation}
which is positive whenever
\begin{equation}
 (\omega e^\gamma - \alpha e^\beta) \exp[-e^\beta(\alpha + \omega)t] > - e^\beta (\alpha - \omega e^\beta) \exp[-(\alpha + \omega e^\gamma)t].
 \label{eq:case12b}
\end{equation}
This inequality holds for any $t > \frac{\log[\alpha e^\beta - \omega e^\gamma] - \log[e^\beta(\alpha - \omega e^ \beta)]}{e^\beta(\alpha + \omega) - (\alpha + \omega e^\gamma) }$. Note that this threshold for $t$ is positive and finite.

\noindent \textbf{Sub-case 1.3:}
Suppose that $\alpha = \omega e^\beta$ and $\alpha e^\beta \ne \omega e^\gamma$.  It follows that $\alpha / \omega <1$ and $\alpha e^\beta - \omega e^\gamma > 0$.  The denominator of \eqref{eq:rd} is positive, and the expressions in the numerator have the following signs:
\[ e^\beta > 0 \quad\text{and}\quad e^\beta + t (\alpha e^\beta - \omega e^\gamma) > 0. \]
Therefore \eqref{eq:rd} has positive sign whenever
\begin{equation}
 e^\beta \exp[-(\alpha + \omega e^\gamma)t]  > (e^\beta + t (\alpha e^\beta - \omega e^\gamma)) \exp[-\alpha(e^\beta+1)t].
 \label{eq:case13}
\end{equation}
Since $t > 0$ and $a>0$, $\log(1+at)$ is a monotonic function of $t$ that grows more slowly than $t$. Therefore there exists $t^* > 0$ such that for any $t > t^*$, $t > \frac{\log(1+\frac{t}{e^\beta} (\alpha e^\beta - \omega e^\gamma))}{\alpha e^\beta - \omega e^\gamma}$. Therefore \eqref{eq:case13} holds for $t > t^*$.

\noindent \textbf{Sub-case 1.4:} Suppose $\alpha \ne \omega e^\beta$ and $\alpha e^\beta = \omega e^\gamma$. It follows that $\alpha / \omega <1$, and $\alpha - \omega e^\beta < 0$.
The denominator of \eqref{eq:rd} is negative, and \eqref{eq:rd} is positive when 
$1 + t e^\beta (\alpha - \omega e^\beta) < 0$. 
This inequality holds for any $t > [e^\beta (\omega e^\beta - \alpha)]^{-1}$. Note that this threshold value for $t$ is positive and finite. \\[1em]

\noindent \textbf{Case 2:} Suppose $\beta > 0$ and $e^\gamma > \max \{e^{2\beta}, e^\beta + \frac{\alpha}{\omega}(e^\beta - 1)\}$. It follows that the equalities $e^\beta \omega = \alpha$ and $\alpha e^\beta = \omega e^\gamma$ cannot hold simultaneously. We will show that for every combination of other parameters there exists $t^* > 0$, such that for all $t > t^*$, the corresponding expression in \eqref{eq:rd} is negative.

\noindent \textbf{Sub-case 2.1:} Suppose $1 < e^\beta < \alpha / \omega$ 
and $\alpha e^\beta \ne \omega e^\gamma$.  
When $\alpha e^\beta - \omega e^\gamma < 0$, the denominator of \eqref{eq:rd} is negative, and the coefficients in the numerator have the following signs:
\[ \omega(e^{2 \beta} - e^\gamma) < 0, \quad \omega e^\gamma - \alpha e^\beta > 0, \quad\text{and}\quad e^\beta (\alpha - \omega e^\beta) > 0 . \]
Therefore for any $t>0$ the numerator of \eqref{eq:rd} is greater than 
\begin{equation}
 \omega(e^{2 \beta} - e^\gamma) \exp[-\alpha(e^\beta+1)t] + (\omega e^\gamma - \alpha e^\beta) \exp[-e^\beta(\alpha + \omega)t] ,
 \label{eq:case211a}
\end{equation}
which has positive sign whenever
\begin{equation}
 (\omega e^\gamma - \alpha e^\beta) \exp[-e^\beta(\alpha + \omega)t] > - \omega(e^{2 \beta} - e^\gamma) \exp[-\alpha(e^\beta+1)t].
 \label{eq:case211b}
\end{equation}
This inequality holds for any $t > \frac{\log[\omega(e^\gamma - e^{2 \beta})] - \log[\omega e^\gamma - \alpha e^\beta]}{\alpha - \omega e^\beta}$. Note that this threshold for $t$ is positive and finite.

When $\alpha e^\beta - \omega e^\gamma > 0$, the denominator of \eqref{eq:rd} is positive, and the coefficients in the numerator have the following signs:
\[ \omega(e^{2 \beta} - e^\gamma) < 0, \quad \omega e^\gamma - \alpha e^\beta < 0, \quad\text{and}\quad e^\beta (\alpha - \omega e^\beta) > 0. \]
Therefore for any $t>0$ the numerator of \eqref{eq:rd} is less than 
\begin{equation}
(\omega e^\gamma - \alpha e^\beta) \exp[-e^\beta(\alpha + \omega)t] + e^\beta (\alpha - \omega e^\beta) \exp[-(\alpha + \omega e^\gamma)t]
 \label{eq:case212a}
\end{equation}
which is negative whenever
\begin{equation}
 e^\beta (\alpha - \omega e^\beta) \exp[-(\alpha + \omega e^\gamma)t]  < - (\omega e^\gamma - \alpha e^\beta) \exp[-e^\beta(\alpha + \omega)t].
 \label{eq:case212b}
\end{equation}
This inequality holds for any $t > \frac{\log[e^\beta(\alpha - \omega e^\beta)] - \log[\alpha e^\beta - \omega e^\gamma]}{(\alpha - \omega e^\beta) - (\alpha e^\beta - \omega e^\gamma)}$. Note that this threshold for $t$ is positive and finite. \\

\noindent \textbf{Sub-case 2.2:} Suppose $e^\beta > \alpha / \omega$
and $\alpha e^\beta \ne \omega e^\gamma$. It follows that $\alpha e^\beta - \omega e^\gamma < 0$.  The denominator of \eqref{eq:rd} is positive, and the expressions in the numerator have the following signs:
\[ \omega(e^{2 \beta} - e^\gamma) < 0, \quad \omega e^\gamma - \alpha e^\beta > 0, \quad\text{and}\quad e^\beta (\alpha - \omega e^\beta) < 0. \]
Therefore for any $t>0$ the numerator of \eqref{eq:rd} is less than 
\begin{equation}
 \omega(e^{2 \beta} - e^\gamma) \exp[-\alpha(e^\beta+1)t] + (\omega e^\gamma - \alpha e^\beta) \exp[-e^\beta(\alpha + \omega)t] 
 \label{eq:case22a}
\end{equation}
which is negative whenever
\begin{equation}
 (\omega e^\gamma - \alpha e^\beta) \exp[-e^\beta(\alpha + \omega)t] < - \omega(e^{2 \beta} - e^\gamma) \exp[-\alpha(e^\beta+1)t].
 \label{eq:case22b}
\end{equation}
This inequality holds for any $t > \frac{\log[\omega e^\gamma - \alpha e^\beta] - \log[\omega (e^\gamma - e^{2 \beta})]}{\omega e^\beta - \alpha }$. Note that this threshold for $t$ is positive and finite. \\

\noindent \textbf{Sub-case 2.3:} Suppose $\alpha = \omega e^\beta$ and $\alpha e^\beta \ne \omega e^\gamma$.  It follows that $\alpha / \omega >1$ and $(\alpha e^\beta - \omega e^\gamma) < 0$.  The denominator of \eqref{eq:rd} is negative, and \eqref{eq:rd} is negative when 
$ e^\beta + t (\alpha e^\beta - \omega e^\gamma) < 0 $. 
This inequality holds for any $t > \frac{e^\beta}{\omega e^\gamma - \alpha e^\beta}$. Note that this threshold for $t$ is positive and finite.

\noindent \textbf{Sub-case 2.4:} Suppose that $\alpha \ne \omega e^\beta$ and $\alpha e^\beta = \omega e^\gamma$. It follows from this condition that $\alpha / \omega >1$, and $\alpha - \omega e^\beta > 0$.  The denominator of \eqref{eq:rd} is positive, and the expression in the numerator has the following sign:
\[ 1 + t e^\beta (\alpha - \omega e^\beta) > 0\]
Therefore \eqref{eq:rd} has negative sign whenever
\begin{equation}
 (1+ t e^\beta (\alpha - \omega e^\beta)) \exp[-\alpha(e^\beta+1)t] < \exp[-e^\beta(\alpha + \omega)t].
 \label{eq:case24}
\end{equation}
Since $t > 0$ and $a>0$, $\log(1+at)$ is a monotonic function of $t$ that grows more slowly than $t$. Therefore there exists $t^* > 0$ such that for any $t > t^*$, $t > \frac{\log(1+ t e^\beta (\alpha - \omega e^\beta))}{\alpha - \omega e^\beta}$. Therefore \eqref{eq:case24} holds for $t > t^*$.
\end{proof}

%%%%%%%%% General case results %%%%%%%
\subsection*{General clusters}

  We begin with notation that will simplify exposition.  Let $H_i=(\alpha_i(t), \omega_{ikj}(t), n_i, T_i)$ represent cluster-level variables. 
  Let $\E_{t_{ij}}[\cdot]$ denote expectation with respect to the infection time of $j$, and let $\E_{\T_{i,-j}}[\cdot]$ denote expectation with respect to infection times $t_{ik}$ for $k\neq j$ (and implicitly, outcomes $Y_{ik}(T_i))$. 
  Since $Y_{ij}(t) = \indicator{t_{ij}<t}$, we will employ expectation with respect to $Y_{ij}(t)$ and $t_{ij}$ interchangeably, so $\E_{t_{ij}}[Y_{ij}(t)] = \E_{Y_{ij}(t)}[Y_{ij}(t)]$.  
  Let $\x_i = (x_{i1}, \ldots, x_{in_i})$ be the vector of covariate $x$ in cluster $i$, and $\E_{\x_{i,-j}}[\cdot]$ denote expectation with respect to $x_{ik}$ for $k\neq j$. 
  By iterating expectations, we can decompose the conditional expectations that comprise the risk ratio as follows,
  \small \[ \E[Y_{ij}(T_i)|x_{ij}=x] = \E_{H_i}\Bigg[ \E_{\x_{i,-j}} \Big[ \E_{\T_{i,-j}}\big[ \E_{t_{ij}}[ Y_{ij}(T_i) \mid x_{ij}=x, \T_{i,-j}, \x_{i,-j}, H_i] \mid x_{ij}=x, \x_{i,-j}, H_i \big] \mid x_{ij}=x, H_i \Big] \mid x_{ij}=x \Bigg] .\]
  
  \normalsize
  \noindent At time $T_i$, the innermost expectation is 
  \[ \E_{t_{ij}}[Y_{ij}(T_i) \mid x_{ij}=x,\x_{i,-j}, \T_{i,-j},H_i] = 1 - \exp\left(-e^{x\beta} \int_0^{T_i} \left( \alpha_i(t) + \sum_{k=1}^{n_i} \indicator{t_{ik}<t} \omega_{ikj}(t-t_{ik}) e^{x_{ik}\gamma}\right)\dx{t} \right) . \]

\begin{lem}
Let $X$ be a non-negative random variable that takes at least some positive values, and let $a$ be a non-negative constant. Then
\[ \frac{\E_X [1 - \exp(-aX)]}{\E_X [1 - \exp(-X)]} < 1 \text{  iff    } a<1 \]
\[ \frac{\E_X [1 - \exp(-aX)]}{\E_X [1 - \exp(-X)]} > 1 \text{  iff    } a>1 \]
\[ \frac{\E_X [1 - \exp(-aX)]}{\E_X [1 - \exp(-X)]} = 1 \text{  iff    } a=1 \]
\begin{proof}
Let $a<1$. 
\[ \frac{\E_X [1 - \exp(-aX)]}{\E_X [1 - \exp(-X)]} < 1 \Leftrightarrow \E_X [\exp(-aX)] - \E_X [\exp(-X)] > 0 \]
\[ \E_X [\exp(-aX)] - \E_X [\exp(-X)] = \int_{0} ^{\infty} [\exp(-ax) - \exp(-x)] f(x) \dx{x}  >0 \Leftrightarrow a< 1.\]
The proof for $a>1$ and $a=1$ is similar.
\end{proof}
\label{lem:gen}
\end{lem}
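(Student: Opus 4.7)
The plan is to reduce the three ratio comparisons with $1$ to pointwise sign analysis of an exponential difference, and then promote that to a statement about expectations. First I would observe that since $X$ is non-negative and takes positive values with positive probability, the denominator $\E_X[1-\exp(-X)]$ is strictly positive: the integrand is non-negative everywhere and strictly positive on a set of positive probability. This lets me rewrite the ratio inequality as a comparison of $\E_X[1-\exp(-aX)]$ with $\E_X[1-\exp(-X)]$, which after cancelling the constants is equivalent to comparing $\E_X[\exp(-X)]$ with $\E_X[\exp(-aX)]$, i.e., studying the sign of $\E_X[\exp(-X) - \exp(-aX)]$.

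Next I would examine the pointwise difference $g(x) = e^{-x} - e^{-ax}$ for $x \ge 0$. Because $u \mapsto e^{-u}$ is strictly decreasing, for any $x > 0$ the sign of $g(x)$ is determined by comparing $ax$ to $x$: $g(x) < 0$ iff $a < 1$, $g(x) = 0$ iff $a = 1$, and $g(x) > 0$ iff $a > 1$. At $x = 0$ the difference vanishes. Integrating against the distribution of $X$ and using that $X$ places positive mass on $(0,\infty)$ preserves the strict inequalities in each of the non-trivial cases, yielding the three implications from "$a$ vs.\ 1" to "ratio vs.\ 1."

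To obtain both the "if" and "only if" directions simultaneously, I would argue by trichotomy: the three cases $a<1$, $a=1$, $a>1$ are exhaustive and mutually exclusive, and so are the three comparisons of the ratio with $1$; since the three forward implications have been established, the converses follow automatically by elimination.

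I do not expect any real obstacle; the only subtlety worth flagging is that the hypothesis "$X$ takes at least some positive values" is used precisely to promote the pointwise strict inequality for $x>0$ to a strict inequality in expectation, and it is the same hypothesis that ensures the denominator does not vanish. The argument is a straightforward pointwise-monotonicity-plus-integration proof.
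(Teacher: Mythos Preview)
Your proposal is correct and follows essentially the same approach as the paper: reduce the ratio comparison to the sign of $\E_X[\exp(-aX)-\exp(-X)]$ and decide that sign by pointwise monotonicity of $u\mapsto e^{-u}$. If anything, you are more careful than the paper, which tacitly assumes a density and does not spell out the positivity of the denominator or the trichotomy argument for the converses.
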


\subsubsection*{Result \ref{prop:biasnoclustercontgen}: No within-cluster contagion}
\begin{proof}
  Suppose $\omega_{ikj}(t)=0$ for all $t$ and $\x_i \indep \{\alpha_i(t), n_i, T_i\}$.  Then 
  \begin{equation*}
    \begin{split}
      RR &= \frac{\E[Y_{ij}(T_i)\mid x_{ij}=1]}{\E[Y_{ij}(T_i)\mid x_{ij}=0]}  \\
      &= \frac{\E_{H_i}[\E_{\x_{i,-j}}[\E_{\T_{i,-j}}[1-\exp(-e^\beta \int_0^{T_i} \alpha_i(t)\dx{t}) \mid x_{ij}=1, \x_{i,-j}, H_i ] \mid x_{ij}=1, H_i ] \mid x_{ij}=1 ] }{\E_{H_i}[\E_{\x_{i,-j}}[\E_{\T_{i,-j}}[1-\exp(-\int_0^{T_i} \alpha_i(t)\dx{t}) \mid x_{ij}=0, \x_{i,-j},  H_i ] \mid x_{ij}=0, H_i ] \mid x_{ij}=0] }   \\
      &= \frac{\E_{H_i}[\E_{\x_{i,-j}}[\E_{\T_{i,-j}}[1-\exp(-e^\beta \int_0^{T_i} \alpha_i(t)\dx{t}) \mid  H_i ] \mid x_{ij}=1, H_i ] \mid x_{ij}=1 ] }{\E_{H_i}[\E_{\x_{i,-j}}[\E_{\T_{i,-j}}[1-\exp(-\int_0^{T_i} \alpha_i(t)\dx{t}) \mid  H_i ] \mid x_{ij}=0, H_i ] \mid x_{ij}=0] }   \\
      &= \frac{\E_{H_i}[\E_{\x_{i,-j}}[1-\exp(-e^\beta \int_0^{T_i} \alpha_i(t)\dx{t}) \mid x_{ij}=1, H_i] \mid x_{ij}=1] }{\E_{H_i}[\E_{\x_{i,-j}}[1-\exp(-\int_0^{T_i} \alpha_i(t)\dx{t}) \mid x_{ij}=0, H_i ] \mid x_{ij}=0 ] }   \\
      &= \frac{\E_{H_i}[1-\exp(-e^\beta \int_0^{T_i} \alpha_i(t)\dx{t}) \mid x_{ij}=1 ]}{\E_{H_i}[1-\exp(-\int_0^{T_i} \alpha_i(t)\dx{t}) \mid x_{ij}=0]} \\
      & = \frac{\E_{H_i}[1-\exp(-e^\beta \int_0^{T_i} \alpha_i(t)\dx{t})]}{\E_{H_i}[1-\exp(-\int_0^{T_i} \alpha_i(t)\dx{t})]},
    \end{split}
  \end{equation*}
  where the third line follows because the distribution of $Y_{ij}(T_i)$ does not depend on $\x_{i,-j}$, and only depends on $x_{ij}$ via multiplicative constant $e^\beta$; the fourth line follows because $Y_{ij}(T_i)$ does not depend on $\T_{i,-j}$, and the fifth line follows because $Y_{ij}(T_i)$ is independent of $\x_{i,-j}$ and $\x_i$ is independent of $H_i$.  Since the only difference between the numerator and denominator is the multiplicative constant $e^\beta$, by Lemma \ref{lem:gen} the risk ratio is direction-unbiased. 
\end{proof}

\subsubsection*{Result \ref{prop:biasindependentx}: Independent $\x$}
  \begin{proof}[Proof of Result \ref{prop:biasindependentx}]
  Suppose the covariates $\x_i$ are jointly independent and $\x_i \indep \{\alpha_i(t), \omega_{ikj}(t), n_i, T_i\}$. 
For any time $t>0$, we can write the cumulative hazard to subject $j$ in cluster $i$ as
\[ \Lambda_{ij}(t) = e^{x_{ij}\beta} \int_0^{T_i} (1-Y_{ij}(s)) \left( \alpha_i(s) + \sum_{k=1}^{n_i} Y_{ik}(s) \omega_{ikj}(s-t_{ik}) e^{x_{ik}\gamma}\right) \dx{s}  \]
For ease of exposition, let 
\[   \xi_{i}(t) = \alpha_i(t) + \sum_{k=1}^{n_i} Y_{ik}(t) \omega_{ikj}(t-t_{ik}) e^{x_{ik}\gamma}. \] 
Then
  \begin{equation*}
    \begin{split}
      RR & = \frac{\E[Y_{ij}(T_i)\mid x_{ij}=1]}{\E[Y_{ij}(T_i)\mid x_{ij}=0]} \\
      & = \frac{\E_{H_i}[\E_{\x_{i,-j}}[\E_{\T_{i,-j}}[1-\exp(-e^\beta \int_0^{T_i} (1-Y_{ij}(t)) \xi_i(t) \dx{t}) \mid \x_{i,-j}, x_{ij}=1, H_i ] \mid x_{ij}=1, H_i ] \mid x_{ij}=1 ] }{\E_{H_i}[\E_{\x_{i,-j}}[\E_{\T_{i,-j}}[1-\exp(-\int_0^{T_i} (1-Y_{ij}(t)) \xi_i(t) \dx{t}) \mid  \x_{i,-j},  x_{ij}=0, H_i ] \mid x_{ij}=0, H_i ] \mid x_{ij}=0 ] }   \\
      &= \frac{\E_{H_i}[\E_{\x_{i,-j}}[\E_{\T_{i,-j}}[1-\exp(-e^\beta \int_0^{T_i} (1-Y_{ij}(t)) \xi_i(t) \dx{t}) \mid  \x_{i,-j}, H_i ] \mid  H_i ] ] }{\E_{H_i}[\E_{\x_{i,-j}}[\E_{\T_{i,-j}}[1-\exp(-\int_0^{T_i} (1-Y_{ij}(t)) \xi_i(t) \dx{t}) \mid  \x_{i,-j},  H_i ] \mid  H_i ] ] }, 
    \end{split}
  \end{equation*}
  because the distribution of $\T_{i,-j}$ is invariant to conditioning on $x_{ij}=1$ or $x_{ij}=0$, when subject $j$ is susceptible, and because by joint independence of $\x_i$, the expectation $\E_{\x_{i,-j}}[\cdot]$ is also invariant to conditioning on $x_{ij}=1$ or $x_{ij}=0$, and $x_{ij}$ is independent of $H_i$. By Lemma \ref{lem:gen}
  the risk ratio is direction-unbiased.
\end{proof}

\section*{Risk ratio maps}

% Exact bias plots
\subsection *{Exact risk ratio maps for clusters of size two}

Figures \ref{supfig:exact_est} and \ref{supfig:exact_dbias} provide the plots that illustrate analytic result \eqref{eq:exactrr} for different combinations of force of infection parameters $\alpha$ and $\omega$ as a function of susceptibility ($\beta$) and infectiousness ($\gamma$) parameters.  Figure \ref{supfig:exact_est} shows the exact expected value of the $log[RR]$, and Figure \ref{supfig:exact_dbias} shows the regions of the directional bias of the risk ratio as an approximation of the hazard ratio for the same combinations of parameters $\alpha$, $\omega$ and observation time $T_i$. We have demonstrated in Result \ref{prop:biasacrossnull} that for a given combination of $(\beta, \gamma)$ directional bias depends on the observation time and the ratio of $\omega / \alpha$. In Figures \ref{supfig:exact_est} and \ref{supfig:exact_dbias} observation time is chosen such that cumulative incidence when $\beta=0$ and $\gamma=0$ is kept constant around 0.15 for a given ratio of $\omega / \alpha$. With observation time chosen this way, the behavior of the bias is exactly the same for the same ratio of $\omega / \alpha$ regardless of the absolute values of these two parameters. 

\begin{figure}
\centering
\includegraphics[scale=0.85]{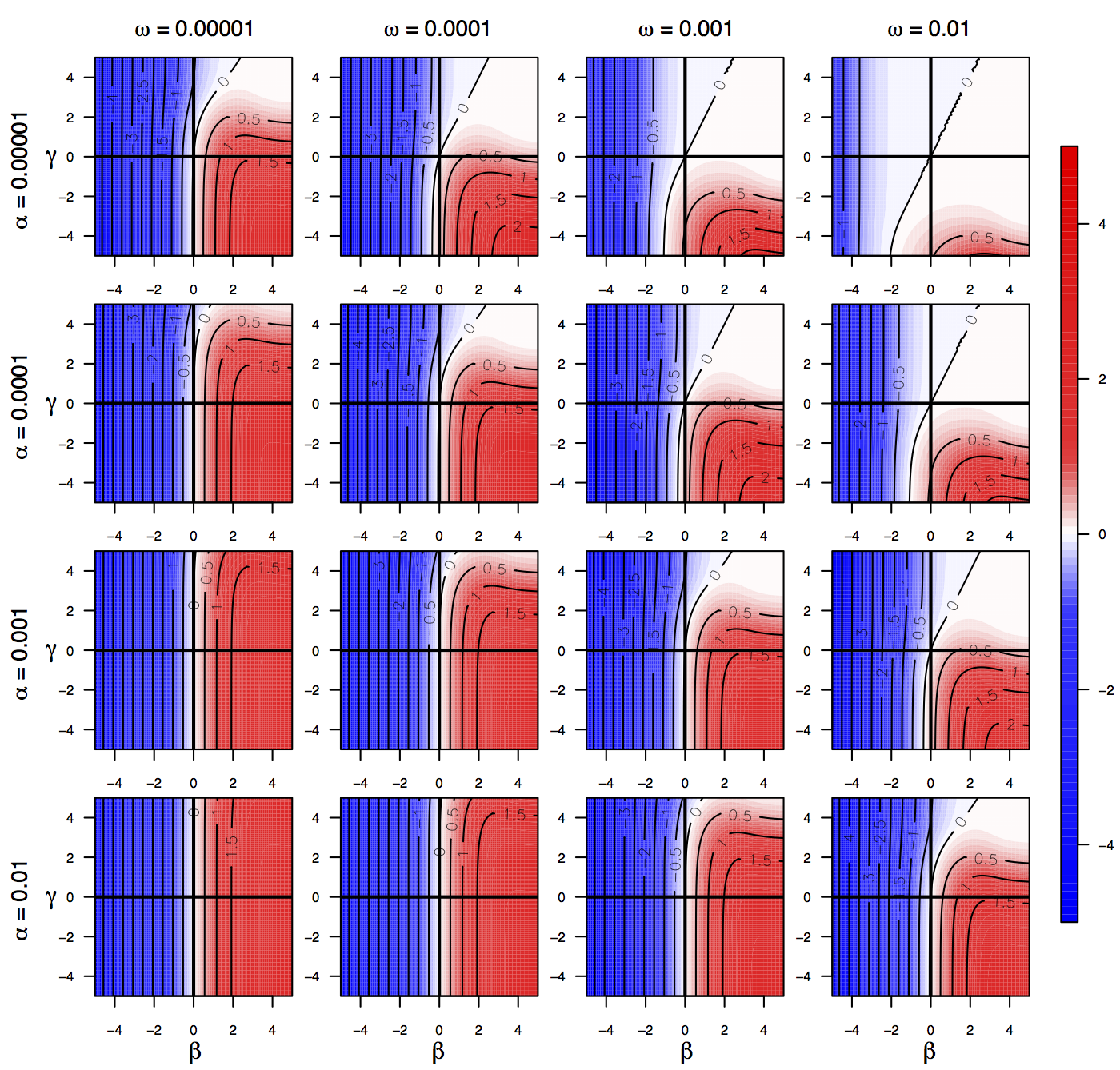}
\caption{Computed $\log[RR]$ as a function of $\beta$ and $\gamma$ in clusters of size two, when exactly one subject per cluster has a value of $x=1$. Observation time is constant and chosen such that the cumulative incidence when $\beta=0$ and $\gamma=0$ is approximately 0.15 for a given combination of $\alpha$ and $\omega$.} 
\label{supfig:exact_est}
\end{figure}

\begin{figure}
\centering
\includegraphics[scale=0.85]{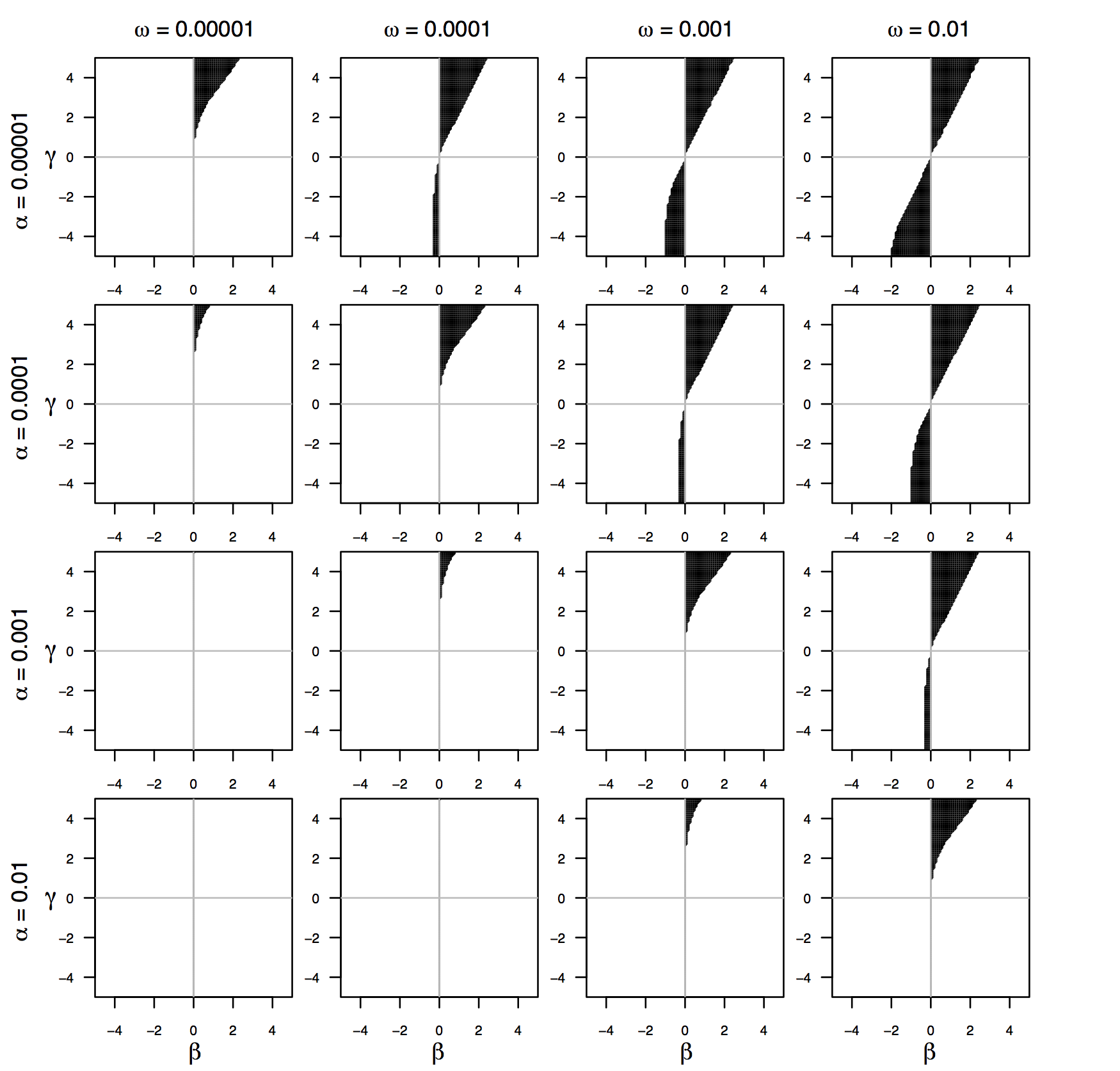}
\caption{Regions of direction bias of computed $\log[RR]$ as a function of $\beta$ and $\gamma$ in clusters of size two, when exactly one subject per cluster has a value of $x=1$.  Observation time is constant and chosen such that the cumulative incidence when $\beta=0$ and $\gamma=0$ is approximately 0.15 for a given combination of $\alpha$ and $\omega$.} 
\label{supfig:exact_dbias}
\end{figure}

%%%%%%%%%%  Simulation results  %%%%%%%%%%
\subsection*{Simulation results}
Exact expression for the expectation of the risk ratio is intractable in general case. This section provides a summary of the simulation results for different study designs and combinations of epidemiologic parameters. In simulations we vary: 
\begin{itemize} 
\setlength\itemsep{0em}
	\item Distribution of covariate $x$: block randomization, independent Bernoulli, cluster randomization;
	\item Cluster size distribution: fixed size, Poisson distributed; 
	\item Observation period: constant at different values, variable; 
	\item Subjects infected at baseline: $Pr[Y(0)=1]=0$; $Pr[Y(0)=1]>0$.
	\item Ratio $\omega / \alpha $. 
\end{itemize}

%%%%%%%% Distribution of x and constant cluster size %%%
\subsubsection *{Distribution of covariate $x$ when cluster size is constant}

\begin{figure}
\centering
\includegraphics[scale=0.7]{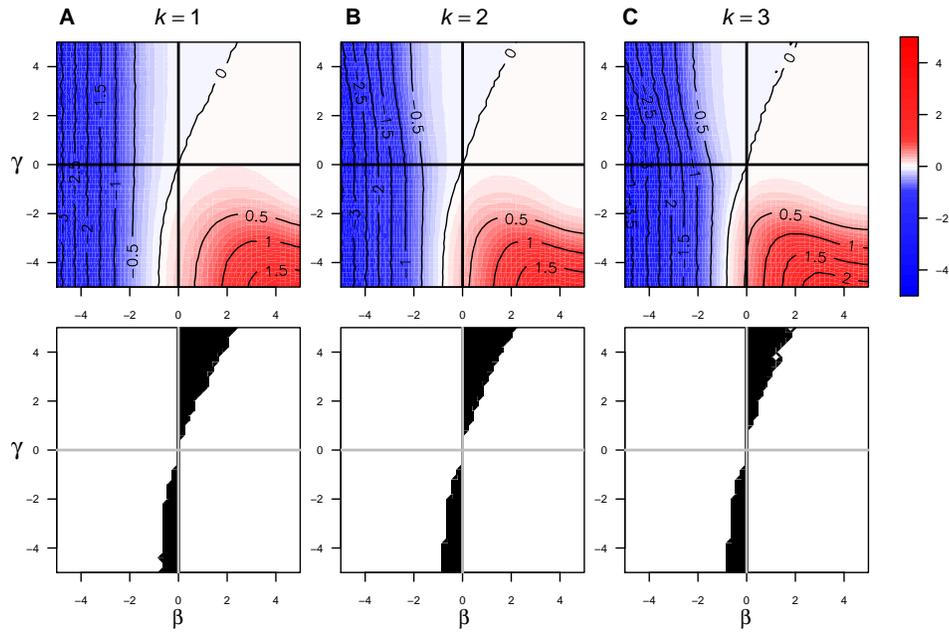}
\caption{$\log[RR]$ (top row) and region of direction bias (bottom row) as a function of $\beta$ and $\gamma$ when cluster size is constant and $x$ is block randomized: $\sum_{j=1}^{n_i} x_{ij}= k$.} 
\label{fig:x_block_fix_ni}
\end{figure}

\begin{figure}
\centering
\includegraphics[scale=0.7]{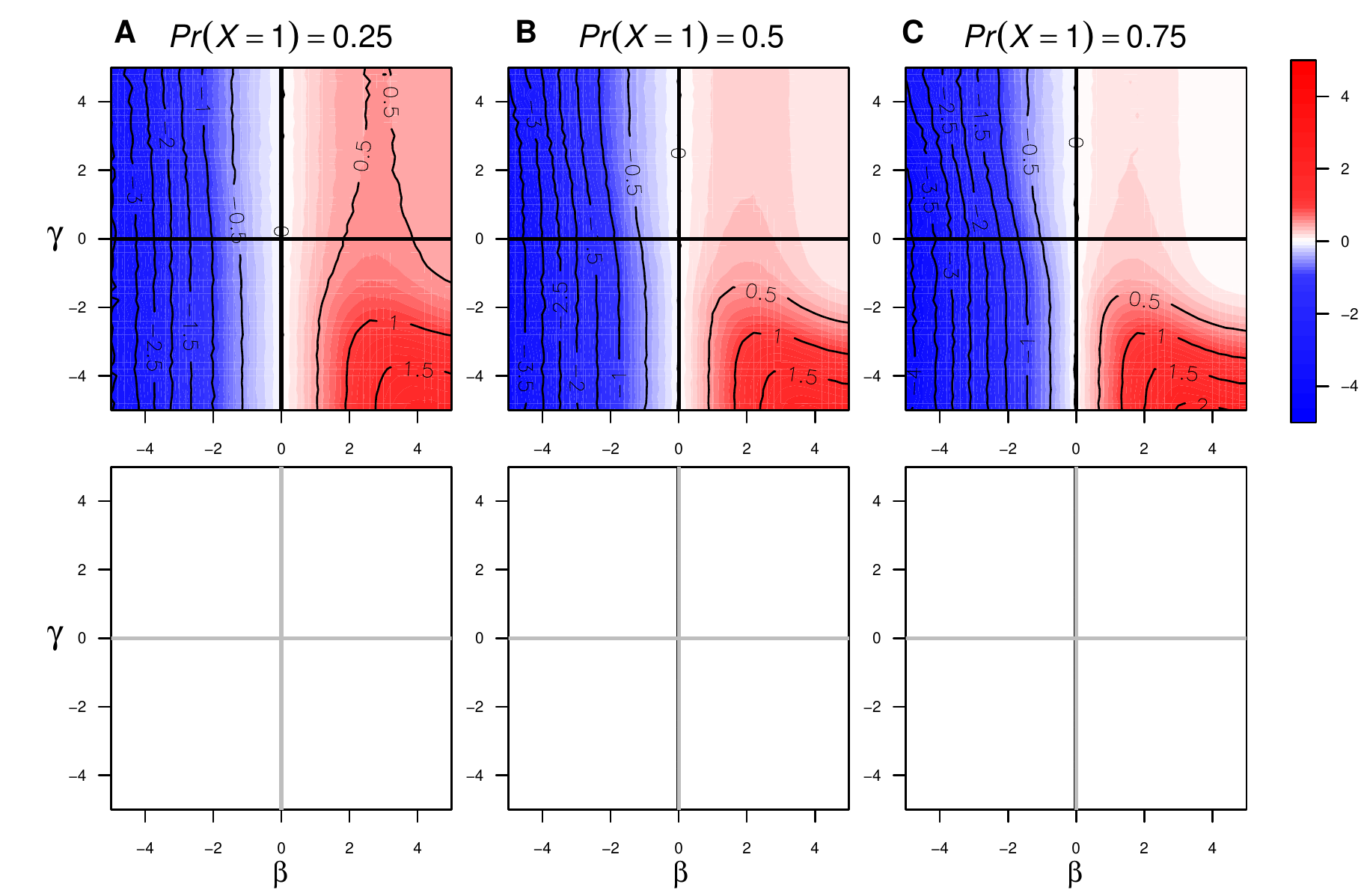}
\caption{$\log[RR]$ (top row) and region of direction bias (bottom row) as a function of $\beta$ and $\gamma$ when cluster size is constant and $x$ has independent Bernoulli distribution with varying $Pr[x=1]$.} 
\label{fig:x_bernoulli_fix_ni}
\end{figure}

\begin{figure}
\centering
\includegraphics[scale=0.7]{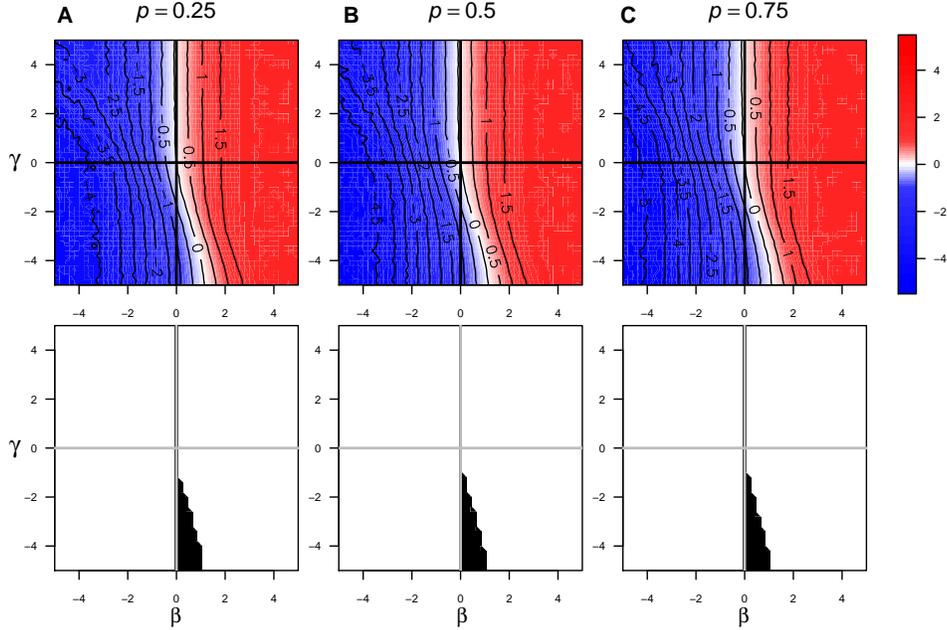}
\caption{$\log[RR]$ (top row) and region of direction bias (bottom row) as a function of $\beta$ and $\gamma$ when cluster size is constant and $x$ is cluster randomized: proportion $p$ of clusters have $\sum_{j=1}^{n_i} x_{ij}= 4$, and remaining $1-p$ have $\sum_{j=1}^{n_i} x_{ij}= 0$.} 
\label{fig:x_cluster_fix_ni}
\end{figure}

First, we look at the behavior of the bias of the risk ratio as an approximation of the hazard ratio for different types of the distribution of covariate $x$ when cluster size is constant. Figure \ref{fig:x_block_fix_ni} shows simulation results for block randomized distribution of $x$, Figure \ref{fig:x_bernoulli_fix_ni} - for independent Bernoulli distribution of $x$, and Figure \ref{fig:x_cluster_fix_ni} - for cluster randomized distribution of $x$. 
In all simulations presented in this subsection (Figures \ref{fig:x_block_fix_ni} - \ref{fig:x_cluster_fix_ni}) the following parameters are held constant:
\begin{itemize} 
\setlength\itemsep{0em}
	\item Force of infection parameters: $\alpha = 0.0001$, $\omega = 0.01$;
	\item Cluster size: $n_i = 4$ for $i=1,\ldots, N$;
	\item Observation time: $T_i = 450$ for $i=1,\ldots, N$ (giving cumulative incidence of approximately 0.15 when $\beta=0$ and $\gamma=0$);
	\item All subjects uninfected at baseline: $Y_{ij}(0) = 0$ for $i = 1,\ldots, N$ and $j = 1, 2, 3, 4$;
	\item Simulation parameters: number of clusters $N=500$, number of simulations per combination of parameters $=200$. 
\end{itemize} 
As demonstrated analytically in the Result \ref{prop:biasindependentx}, independent Bernoulli distribution of $x$ results in the direction-unbiased risk ratio (Figure \ref{fig:x_bernoulli_fix_ni}). Lack of joint independence in the distribution of $x$, however, generally results in the risk ratio exhibiting direction bias in some regions of the $(\beta, \gamma)$ parameter space. Figure \ref{fig:x_block_fix_ni} shows that bias patterns under block randomization and constant cluster size mimic that of the simple two-person cluster case. Figure \ref{fig:x_cluster_fix_ni} shows that cluster randomized distribution of $x$ leads to the direction bias in the regions where $\beta$ and $\gamma$ have opposite signs, and when the risk ratio is direction unbiased, it is not necessarily biased towards the null, but may be biased away from the null.

%%%%%%%% Distribution of cluster size n_i and distribution of x %%%
\subsubsection* {Variable cluster size under different distributions of covariate $x$}

\begin{figure}
\centering
\includegraphics[scale=0.85]{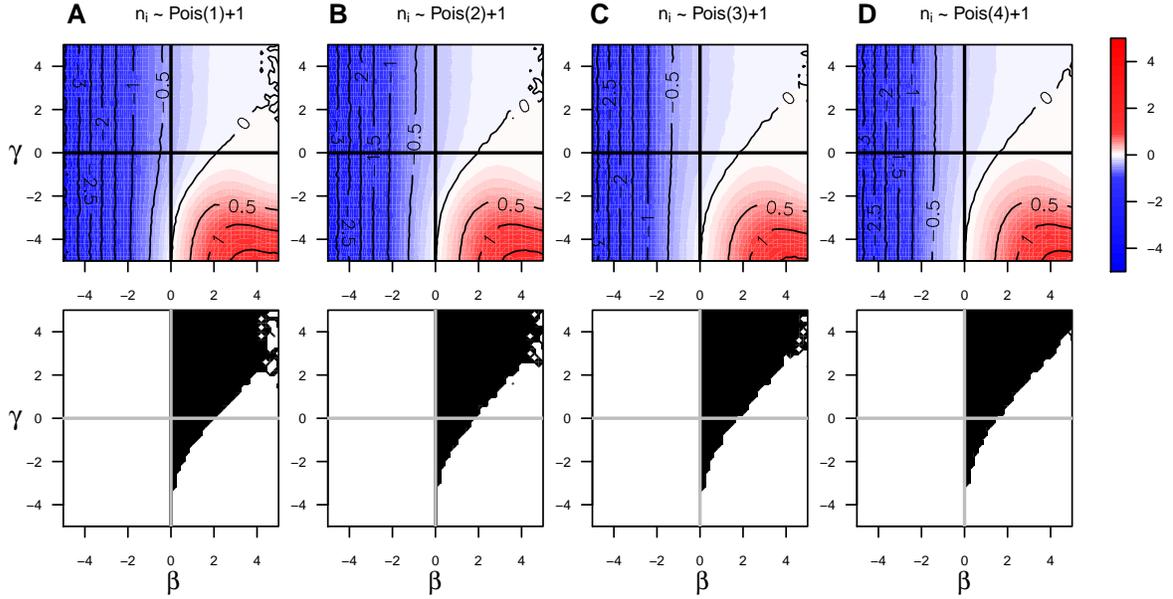}
\caption{$\log[RR]$ (top row) and region of direction bias (bottom row) as a function of $\beta$ and $\gamma$ when cluster size $n_i \sim \text{Pois}(\mu)+1$ and $x$ is block randomized such that $\sum_{j=1}^{n_i} x_{ij}=1$ for all $i$.} 
\label{fig:x_block1_var_ni}
\end{figure}

\begin{figure}
\centering
\includegraphics[scale=0.85]{x_block_half_var_ni}
\caption{$\log[RR]$ (top row) and region of direction bias (bottom row) as a function of $\beta$ and $\gamma$ when cluster size $n_i \sim \text{Pois}(\mu)+1$ and $x$ is block randomized such that $\sum_{j=1}^{n_i} x_{ij}=\lfloor n_i/2 \rfloor$ for all $i$.} 
\label{fig:x_block_half_var_ni}
\end{figure}

\begin{figure}
\centering
\includegraphics[scale=0.85]{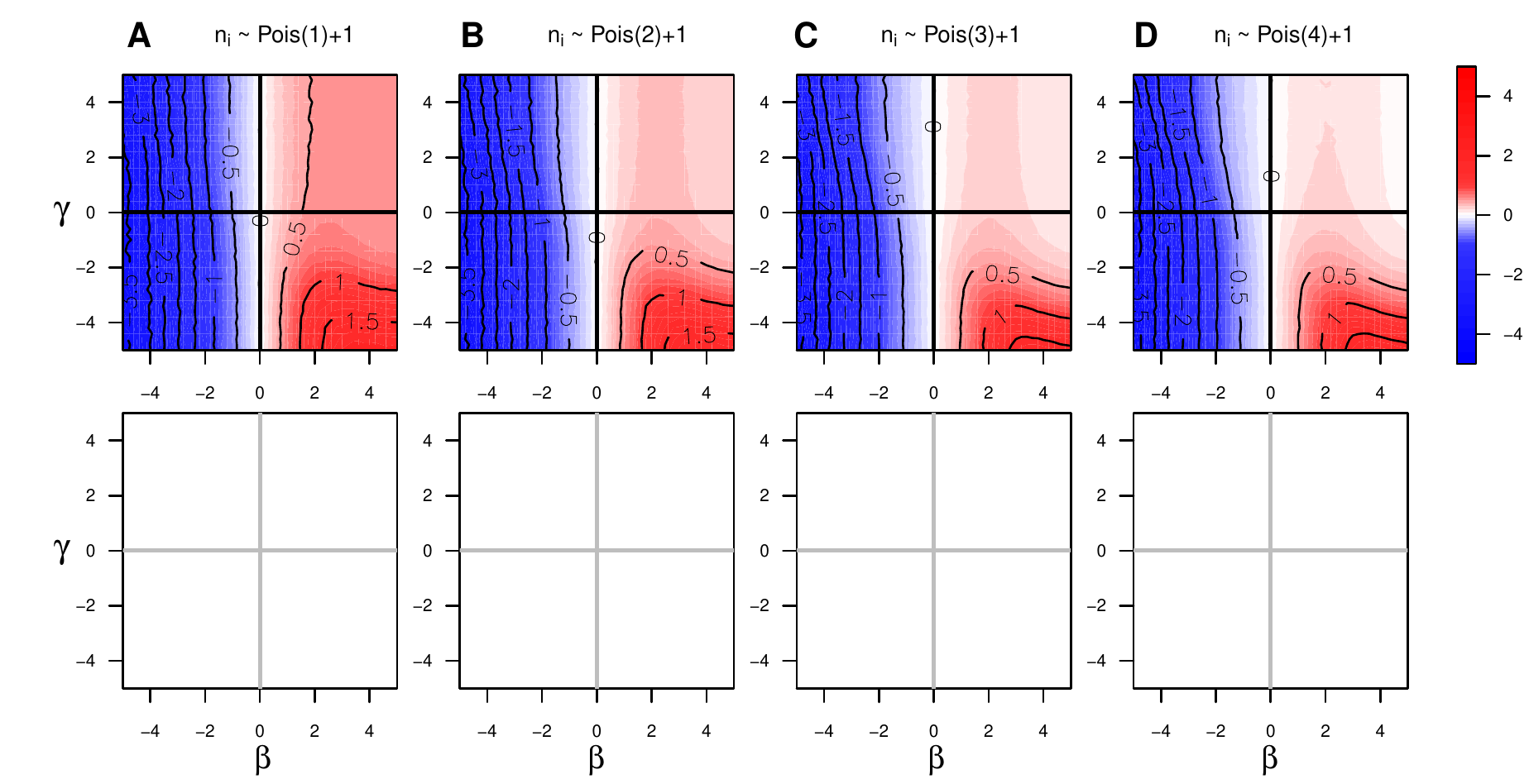}
\caption{$\log[RR]$ (top row) and region of direction bias (bottom row) as a function of $\beta$ and $\gamma$ when cluster size $n_i \sim \text{Pois}(\mu)+1$ and $x$ has independent Bernoulli distribution with $Pr[x=1]=0.5$. } 
\label{fig:x_bernoulli_var_ni}
\end{figure}

\begin{figure}
\centering
\includegraphics[scale=0.85]{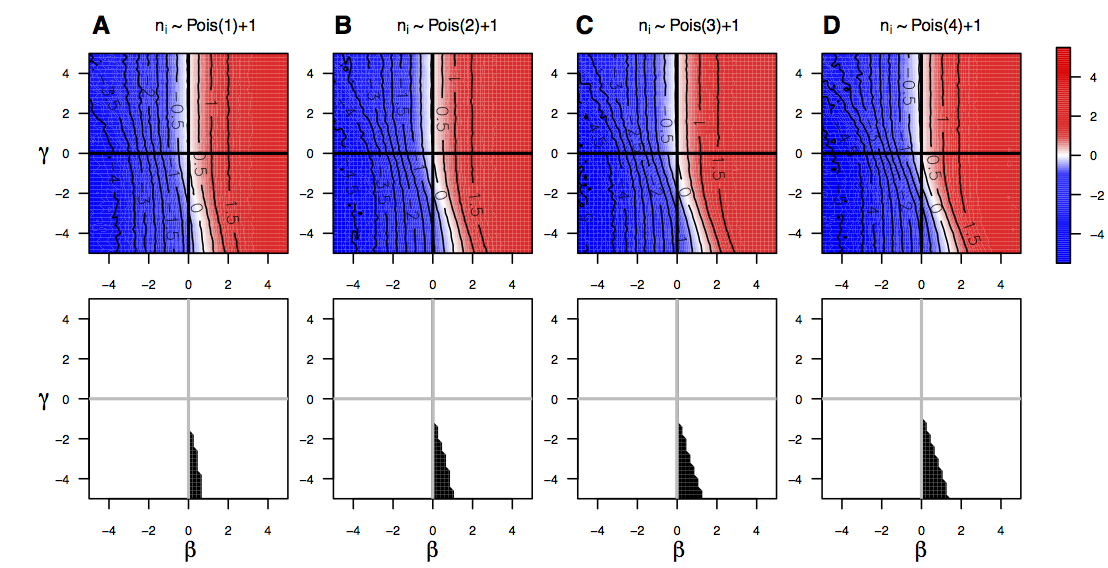}
\caption{$\log[RR]$ (top row) and region of direction bias (bottom row) as a function of $\beta$ and $\gamma$ when cluster size $n_i \sim \text{Pois}(\mu)+1$ and $x$ is cluster randomized: half of clusters have $\sum_{j=1}^{n_i} x_{ij}= n_i$, and remaining half have $\sum_{j=1}^{n_i} x_{ij}= 0$. } 
\label{fig:x_cluster_var_ni}
\end{figure}

In this subsection we explore the behavior of the risk ratio bias under variable cluster size, which follows Poisson distribution with different means. Figures \ref{fig:x_block1_var_ni} - \ref{fig:x_cluster_var_ni} show simulation results for the average cluster size between two and five, under different distributions of covariate $x$. In Figure \ref{fig:x_block1_var_ni} covariate $x$ is block randomized such that for any cluster size only one subject per cluster has $x=1$; Figure \ref{fig:x_block_half_var_ni} shows block randomization of $x$, when in any cluster half of the subjects have $x=1$; in Figure \ref{fig:x_bernoulli_var_ni} covariate $x$ has Bernoulli distribution with $Pr[x=1]=0.5$, and Figure \ref{fig:x_cluster_var_ni} shows the results for cluster randomized distribution of $x$ such that in half of clusters all subjects have $x=1$, and in the remaining half everyone has $x=0$.
In all simulations presented in this subsection (Figures \ref{fig:x_block1_var_ni} - \ref{fig:x_cluster_var_ni}) the following parameters are held constant:
\begin{itemize} 
\setlength\itemsep{0em}
	\item Force of infection parameters: $\alpha = 0.0001$, $\omega = 0.01$;
	\item Observation time: $T_i = 750$, when $n_i \sim \text{Pois}(1)+1$; $T_i = 525$, when $n_i \sim \text{Pois}(2)+1$; $T_i = 450$, when $n_i \sim \text{Pois}(3)+1$; and $T_i = 330$, when $n_i \sim \text{Pois}(4)+1$ (giving cumulative incidence of approximately 0.15 when $\beta=0$ and $\gamma=0$);
	\item All subjects uninfected at baseline: $Y_{ij}(0) = 0$ for $i = 1,\ldots, N$ and $j = 1, \ldots, n_i$;
	\item Simulation parameters: number of clusters $N=500$, number of simulations per combination of parameters $=200$. 
\end{itemize} 

When covariate $x$ is block randomized, the behavior of risk ratio bias changes substantially when we allow cluster size to vary compared to holding it constant. Figures  \ref{fig:x_block1_var_ni}, \ref{fig:x_block_half_var_ni} and \ref{fig:x_block_fix_ni} demonstrate very different patterns of bias behavior, while all having block randomized distribution of $x$. When cluster size is fixed (Figure \ref{fig:x_block_fix_ni}), risk ratio is direction-unbiased when $\gamma=0$, and bias in direction requires $\gamma$ being more extreme than and having the same sign as $\beta$. However, when cluster size varies under block randomized $x$, the risk ratio is not necessarily direction-unbiased when $\gamma=0$, or when $\gamma$ and $\beta$ have opposite signs.  Figures  \ref{fig:x_block1_var_ni} and \ref{fig:x_block_half_var_ni} show that under variable cluster size and block randomized $x$, bias behaves very differently depending on proportion of subjects with $x=1$ per cluster. Increasing imbalance in the distribution of $x$ generally makes things worse under such study design (compare Figure \ref{fig:x_block1_var_ni} to Figure \ref{fig:x_block_half_var_ni}).

When $x$ has independent Bernoulli or cluster randomized distribution,  bias of the risk ratio as an approximation of hazard ratio generally behaves similarly for constant or variable cluster size (compare Figure \ref{fig:x_bernoulli_var_ni} to Figure \ref{fig:x_bernoulli_fix_ni} for Bernoulli distributed $x$, and Figure \ref{fig:x_cluster_var_ni} to Figure \ref{fig:x_cluster_fix_ni} for cluster randomized distribution of $x$).

%%%%%%%% Duration and variability of observation time T_i %%%
\subsubsection *{Duration and variability of observation time $T_i$}

\begin{figure}
\centering
\includegraphics[scale=0.85]{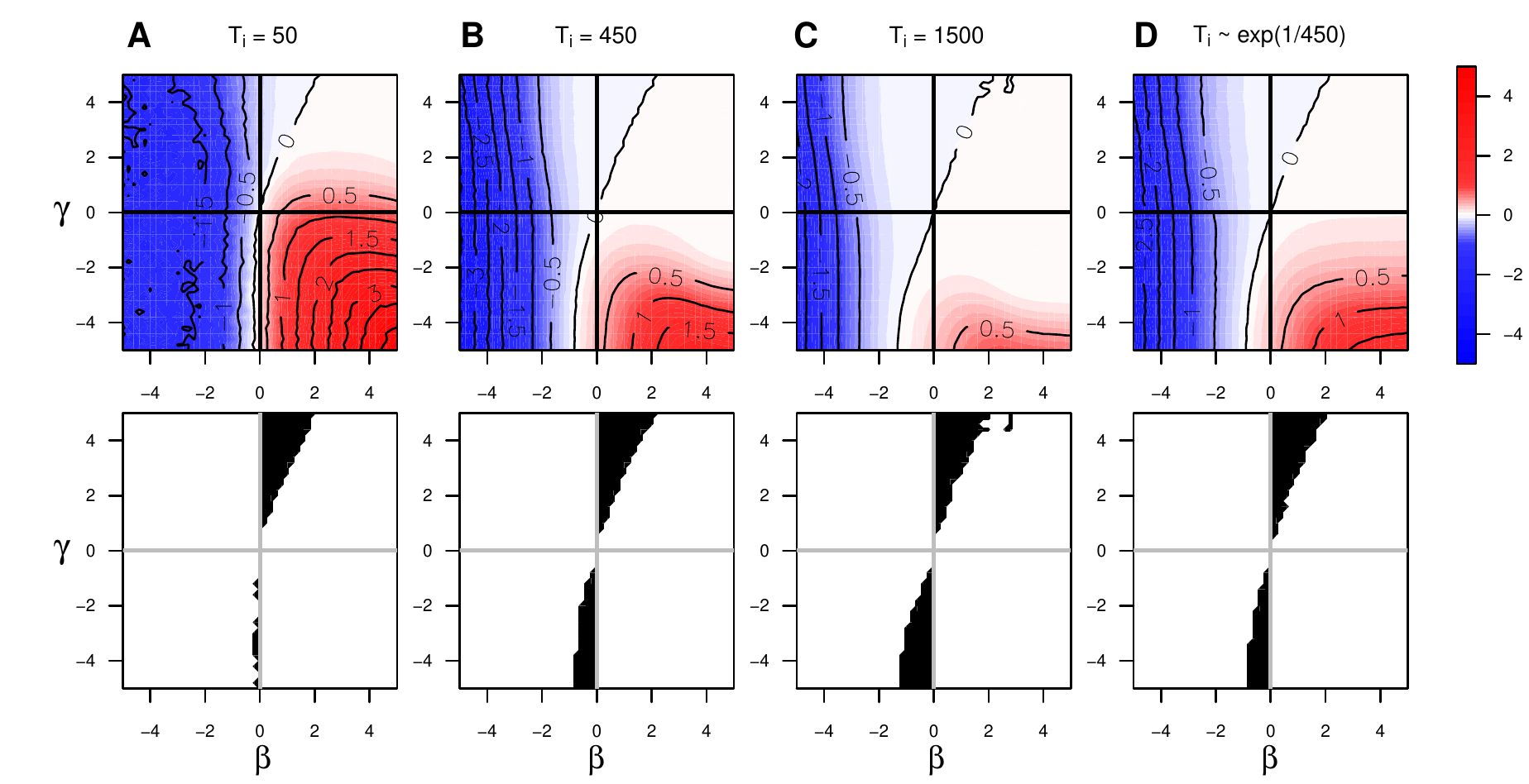}
\caption{$\log[RR]$ (top row) and region of direction bias (bottom row) as a function of $\beta$ and $\gamma$ for different observation time $T_i$, when cluster size is constant ($n_i=4$ for all $i$), and $x$ is block randomized such that $\sum_{j=1}^{n_i} x_{ij}= 2$.}
\label{fig:x_block_fix_ni_by_Ti}
\end{figure}

\begin{figure}
\centering
\includegraphics[scale=0.85]{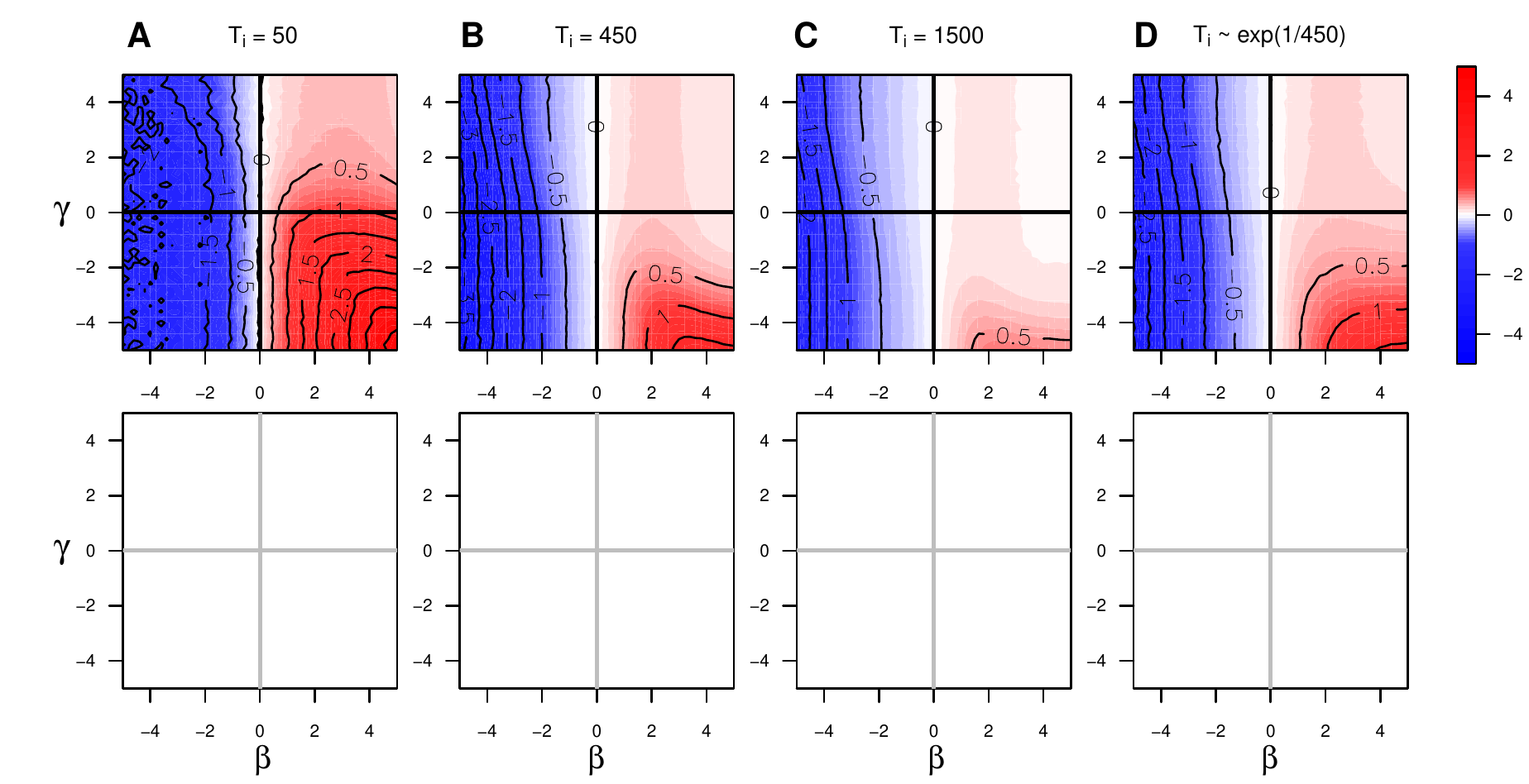}
\caption{$\log[RR]$ (top row) and region of direction bias (bottom row) as a function of $\beta$ and $\gamma$ for different observation time $T_i$, when cluster size $n_i \sim \text{Pois}(3)+1$ and $x$ has independent Bernoulli distribution with $Pr[x=1]=0.5$.} 
\label{fig:x_bernoulli_var_ni_by_Ti}
\end{figure}

\begin{figure}
\centering
\includegraphics[scale=0.85]{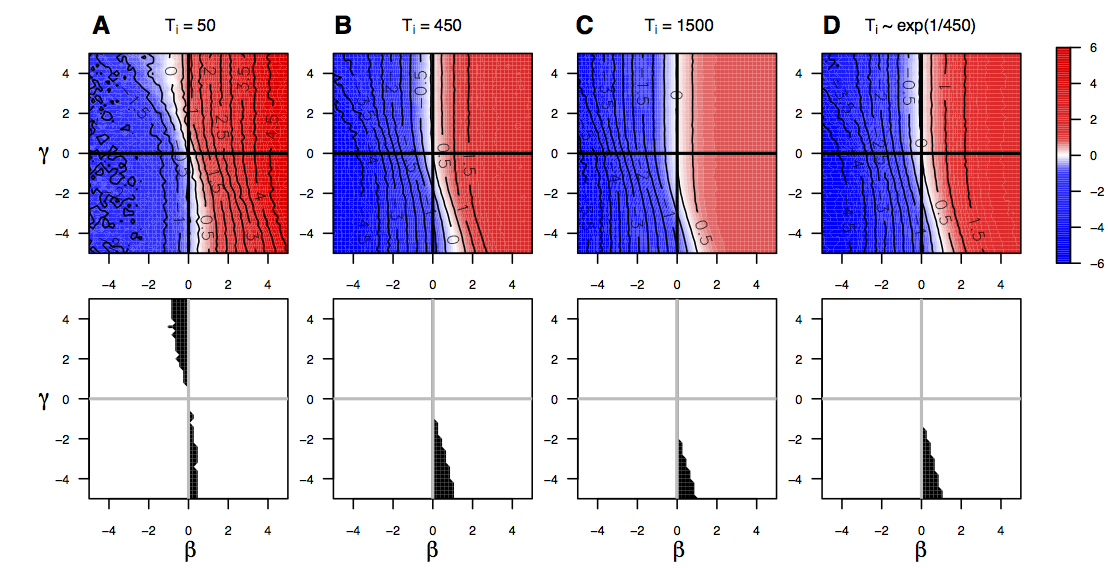}
\caption{$\log[RR]$ (top row) and region of direction bias (bottom row) as a function of $\beta$ and $\gamma$ for different observation time $T_i$, when cluster size is constant ($n_i=4$ for all $i$), and $x$ is cluster randomized: half of clusters have $\sum_{j=1}^{n_i} x_{ij}= 4$, and remaining half have $\sum_{j=1}^{n_i} x_{ij}= 0$.} 
\label{fig:x_cluster_fix_ni_by_Ti}
\end{figure}

This subsection looks at the impact of duration and variability of observation time on the risk ratio bias under different distributions of covariate $x$. Figures \ref{fig:x_block_fix_ni_by_Ti} - \ref{fig:x_cluster_fix_ni_by_Ti} show simulation results for three constant durations of observation ($T_i = 50$, $450$ and $1500$) and one, where observation time is exponentially distributed with rate $=1/450$.  Figure \ref{fig:x_block_fix_ni_by_Ti} shows the results for constant cluster size and block randomized distribution of $x$, Figure \ref{fig:x_bernoulli_var_ni_by_Ti} - for variable cluster size and independent Bernoulli distribution of $x$, and Figure \ref{fig:x_cluster_fix_ni_by_Ti} - for constant cluster size and cluster randomized distribution of $x$.
In all simulations presented in this subsection (Figures \ref{fig:x_block_fix_ni_by_Ti} - \ref{fig:x_cluster_fix_ni_by_Ti}) the following parameters are held constant:
\begin{itemize} 
\setlength\itemsep{0em}
	\item Force of infection parameters: $\alpha = 0.0001$, $\omega = 0.01$;
	\item All subjects uninfected at baseline: $Y_{ij}(0) = 0$ for $i = 1,\ldots, N$ and $j = 1, \ldots, n_i$;
	\item Simulation parameters: number of clusters $N=500$, number of simulations per combination of parameters $=200$. 
\end{itemize} 
With all other parameters held the same, increasing duration of observation leads to the increase in cumulative incidence. Under block randomized (Figure \ref{fig:x_block_fix_ni_by_Ti}) and independent Bernoulli (Figure \ref{fig:x_bernoulli_var_ni_by_Ti}) distribution of $x$ higher cumulative incidence increases the bias of the risk ratio as an approximation of the hazard ratio. However, under cluster randomized distribution of $x$ (Figure \ref{fig:x_cluster_fix_ni_by_Ti}) increasing duration of observation reduces the region, where the risk ratio exhibits direction bias, but does not necessarily reduce the bias in absolute value. Under any of the three distributions of $x$ variable duration of observation does not appreciably change the behavior of risk ratio bias compared to constant $T_i$.

%%%%%%%% Infections at time 0 %%%
\subsubsection*{Infections at time zero}

\begin{figure}
\centering
\includegraphics[scale=0.85]{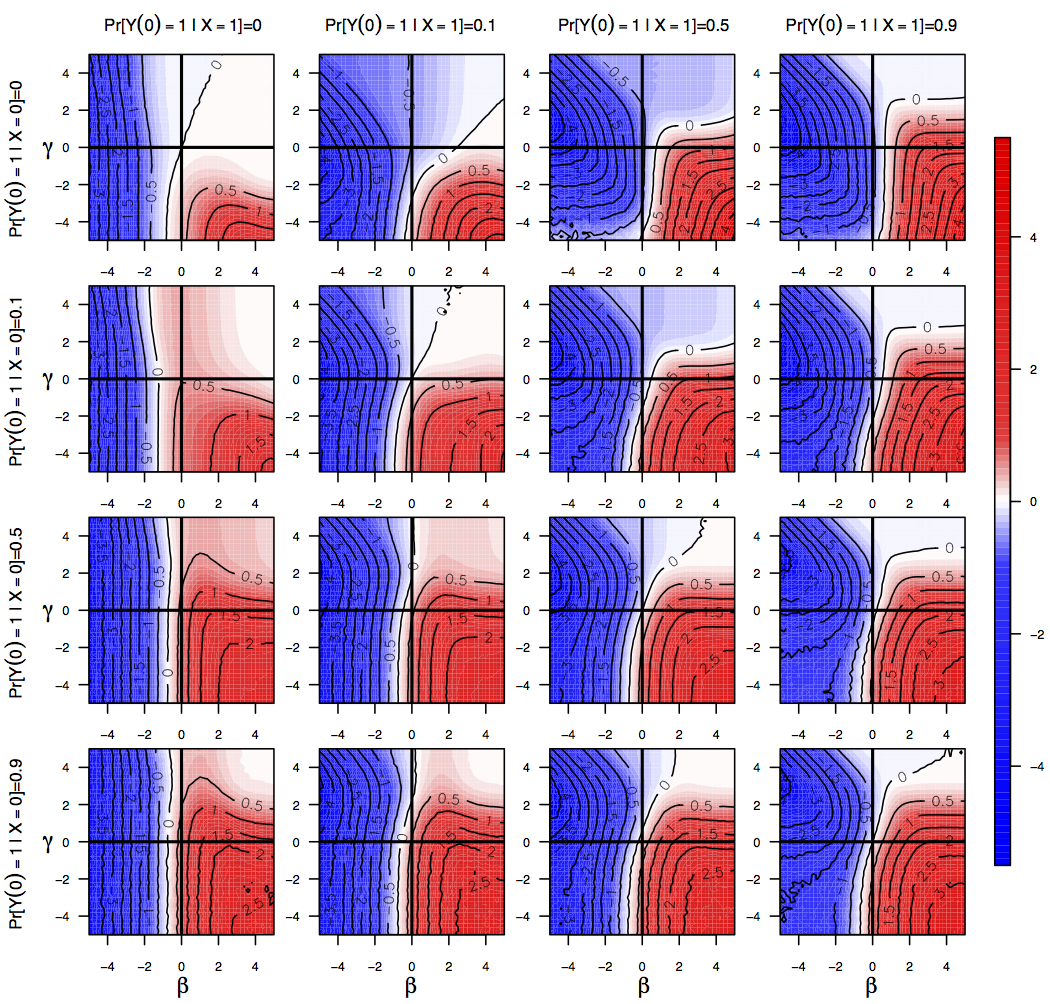}
\caption{$\log[RR]$ as a function of $\beta$ and $\gamma$ for a range of $Pr[Y(0)=1|X=1]$ and $Pr[Y(0)=1|X=0]$, when cluster size is constant ($n_i=4$ for all $i$), and $x$ is block randomized such that $\sum_{j=1}^{n_i} x_{ij}= 2$. }
\label{fig:x_block_fix_ni_blinf1_est}
\end{figure}

\begin{figure}
\centering
\includegraphics[scale=0.85]{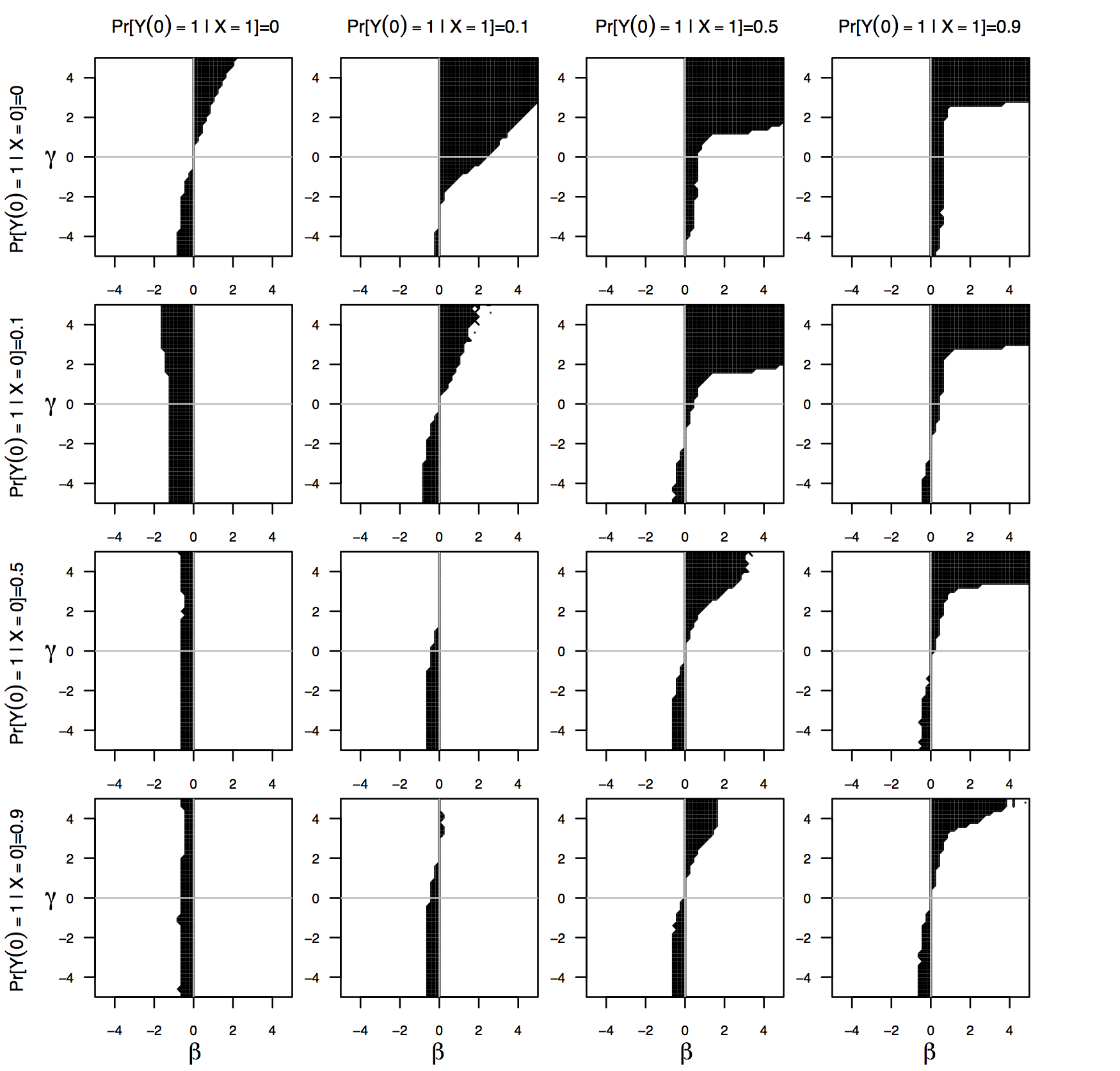}
\caption{Regions of direction bias of $\log[RR]$ as a function of $\beta$ and $\gamma$ for a range of $Pr[Y(0)=1|X=1]$ and $Pr[Y(0)=1|X=0]$, when cluster size is constant ($n_i=4$ for all $i$), and $x$ is block randomized such that $\sum_{j=1}^{n_i} x_{ij}= 2$.}
\label{fig:x_block_fix_ni_blinf1_dbias}
\end{figure}

\begin{figure}
\centering
\includegraphics[scale=0.85]{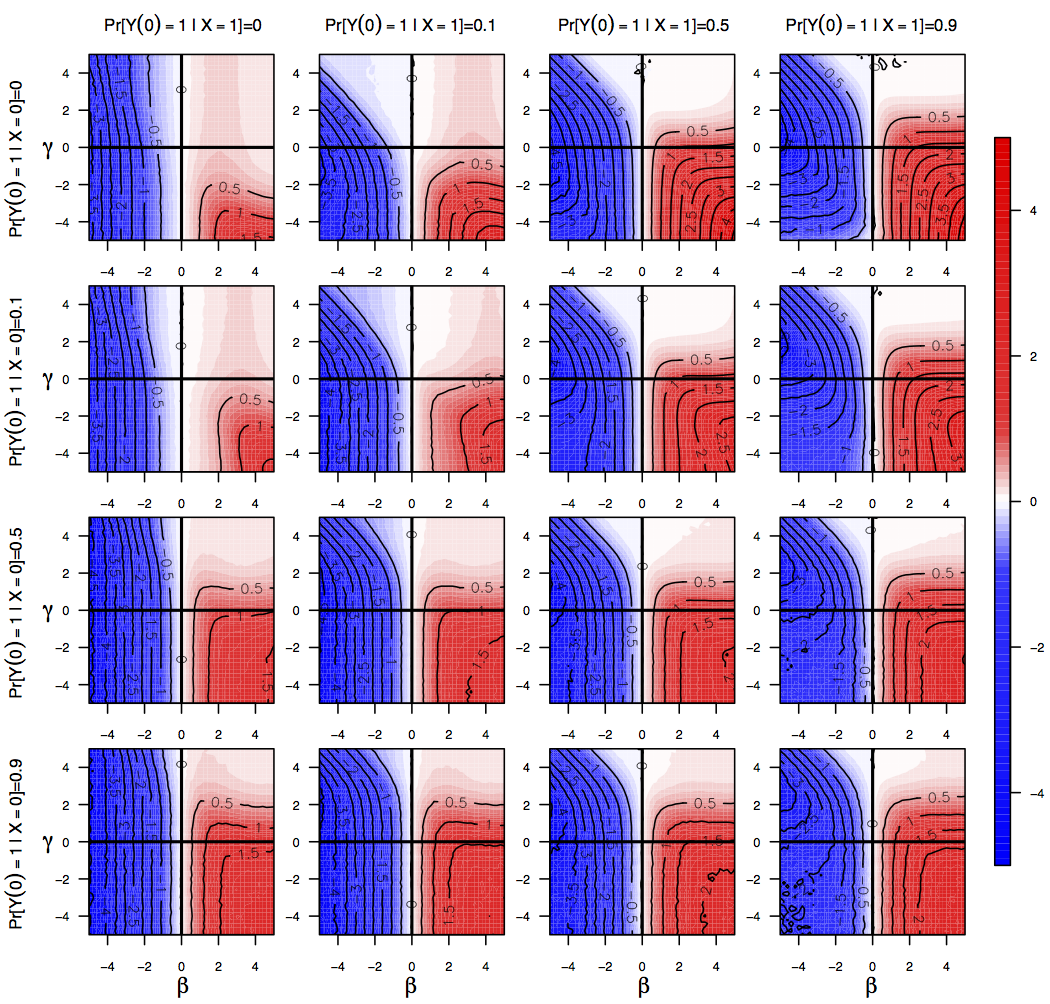}
\caption{$\log[RR]$ as a function of $\beta$ and $\gamma$ for a range of $Pr[Y(0)=1|X=1]$ and $Pr[Y(0)=1|X=0]$, when cluster size $n_i \sim \text{Pois}(3)+1$ and $x$ has independent Bernoulli distribution with $Pr[x=1]=0.5$.} 
\label{fig:x_bernoulli_var_ni_blinf1_est}
\end{figure}

\begin{figure}
\centering
\includegraphics[scale=0.85]{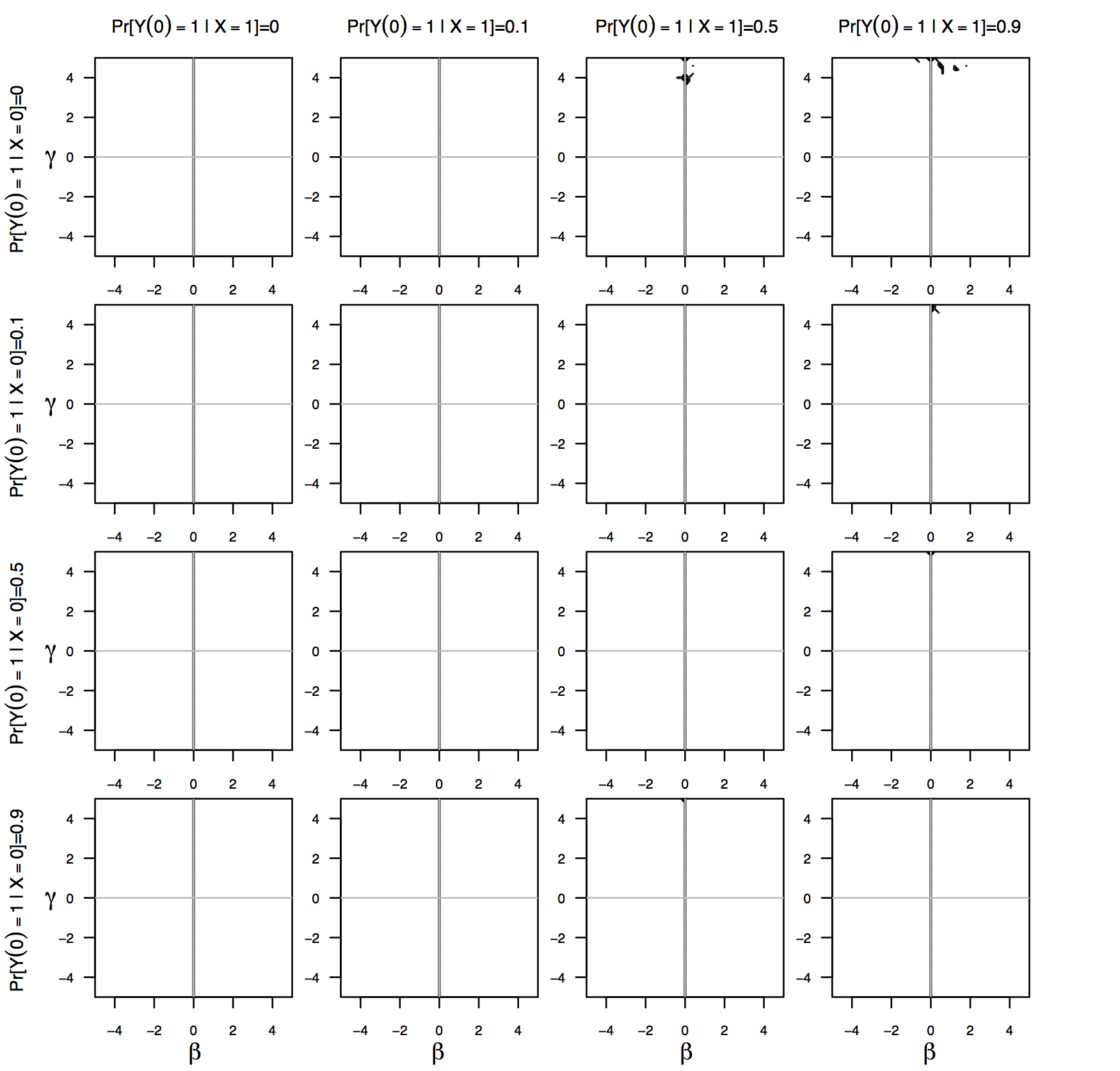}
\caption{Regions of direction bias of $\log[RR]$ as a function of $\beta$ and $\gamma$ for a range of $Pr[Y(0)=1|X=1]$ and $Pr[Y(0)=1|X=0]$, when cluster size $n_i \sim \text{Pois}(3)+1$ and $x$ has independent Bernoulli distribution with $Pr[x=1]=0.5$.} 
\label{fig:x_bernoulli_var_ni_blinf1_dbias}
\end{figure}

\begin{figure}
\centering
\includegraphics[scale=0.6]{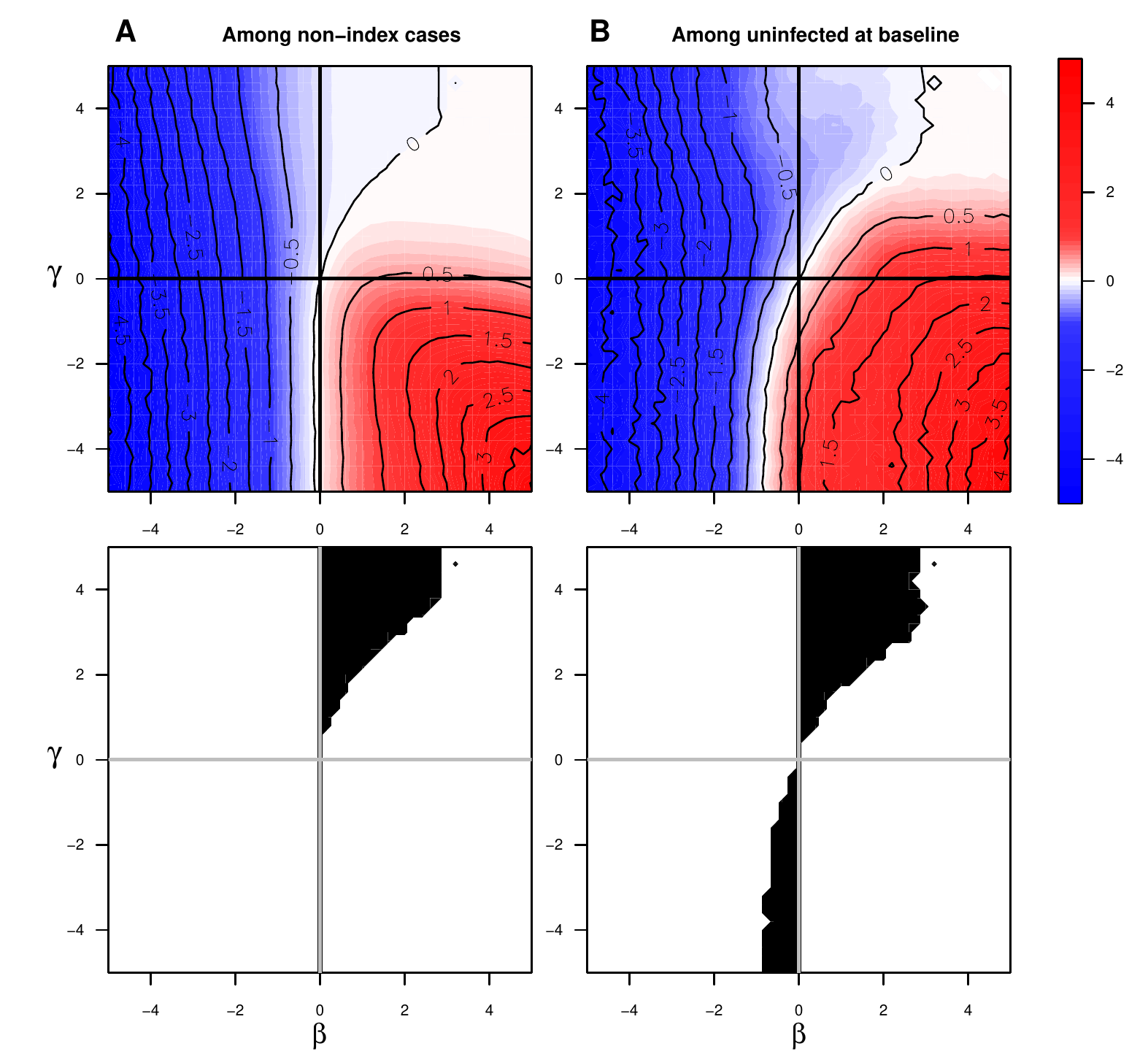}
\caption{$\log[RR]$ (top row) and region of direction bias (bottom row) as a function of $\beta$ and $\gamma$ for clusters selected based on having at least one infection at ``baseline", when cluster size is constant ($n_i=4$ for all $i$), and $x$ is block randomized such that $\sum_{j=1}^{n_i} x_{ij}= 2$. Risk ratio is calculated among all ``non-index" cases (A), and among cases uninfected at ``baseline" (B).} 
\label{fig:x_block_fix_ni_blinf2}
\end{figure}

\begin{figure}
\centering
\includegraphics[scale=0.6]{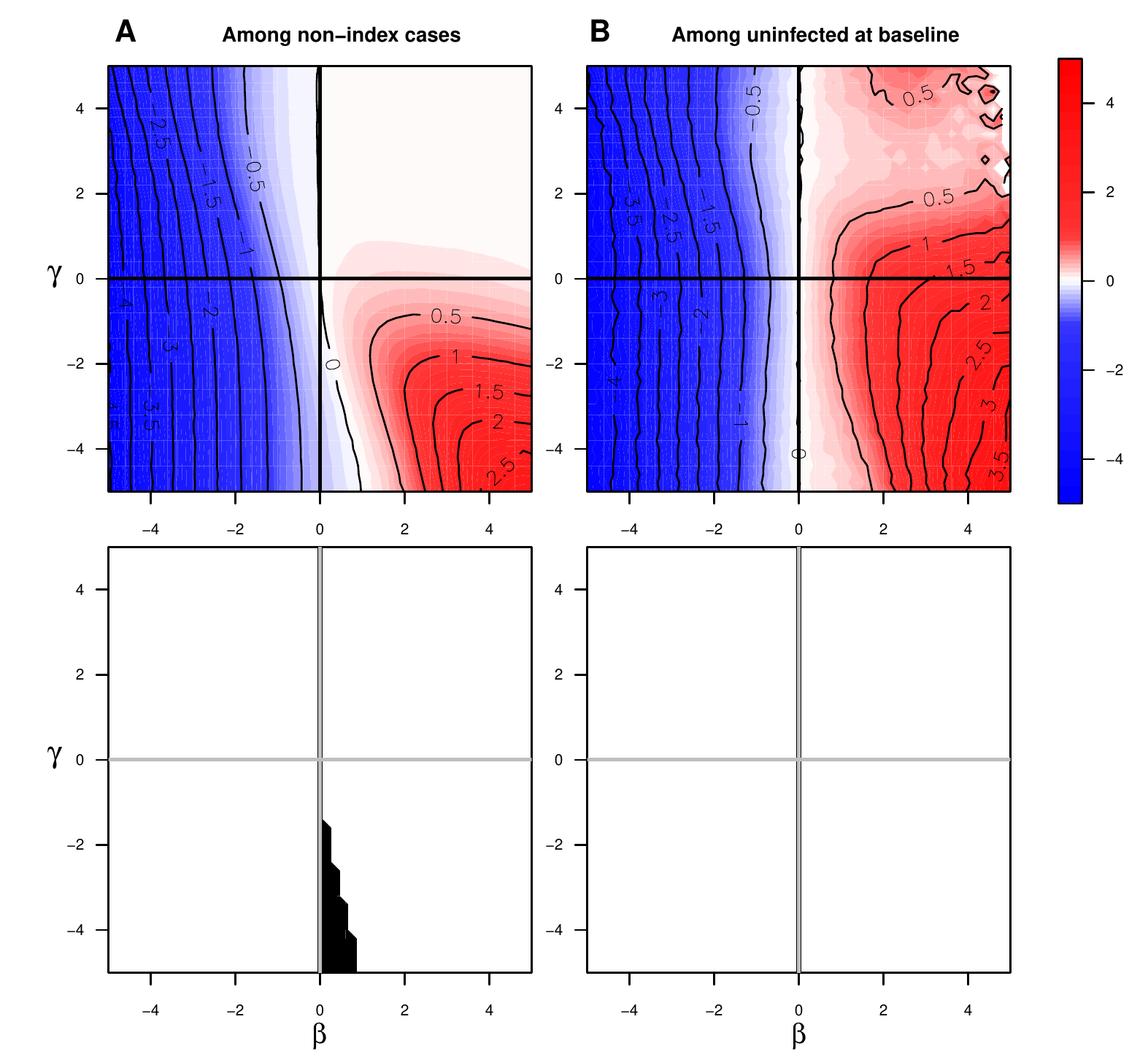}
\caption{$\log[RR]$ (top row) and region of direction bias (bottom row) as a function of $\beta$ and $\gamma$ for clusters selected based on having at least one infection at ``baseline", when cluster size $n_i \sim \text{Pois}(2)+2$ and $x$ has independent Bernoulli distribution with $Pr[x=1]=0.5$. Risk ratio is calculated among all ``non-index" cases (A), and among cases uninfected at ``baseline" (B).} 
\label{fig:x_bernoulli_var_ni_blinf2}
\end{figure}

In the simple case of clusters of size two and in all previous simulations we assumed that all subjects are uninfected at time zero (baseline). In practice, however, such study design is rarely a case. When researchers study infection transmission within clusters, they often select these clusters based on having at least one infected subject per cluster at baseline assessment (often called ``index" case). Sometimes studies would include a mix of clusters with and without infected subjects at baseline. In observational studies the distribution of covariate $x$ is given, and if $\beta$ and/or $\gamma$ is not zero, then the distribution of infections at baseline assessment is not independent of $x$. In experimental studies baseline distribution of infections may be independent of treatment $x$, and researchers can choose, whether subjects infected at baseline may or may not be assigned to treatment ($x=1$). In this subsection we explore the behavior of the risk ratio bias under the presence of infections at time zero. 

Figures \ref{fig:x_block_fix_ni_blinf1_est} and \ref{fig:x_block_fix_ni_blinf1_dbias} show the estimate of $\log[RR]$ and regions of direction bias for a range of values of $Pr[Y(0)=1|X=1]$ and $Pr[Y(0)=1|X=0]$ under block randomized distribution of $x$ and constant cluster size, and Figures \ref{fig:x_bernoulli_var_ni_blinf1_est} and \ref{fig:x_bernoulli_var_ni_blinf1_dbias} - under independent Bernoulli distribution of $x$ and variable cluster size. For every combination of parameters in these plots observation time $T_i$ was chosen such that cumulative incidence when $\beta=0$ and $\gamma=0$ is approximately 0.15. The risk ratio was computed among subjects uninfected at time zero. In most of the simulations presented in Figures \ref{fig:x_block_fix_ni_blinf1_est} - \ref{fig:x_bernoulli_var_ni_blinf1_dbias} number of clusters $N=500$. In some of the plots we increased $N$ to 1,000 and 5,000 to ensure convergence of the averages to expectations. For the same reason the number of simulations per combination of parameters varies between 100 and 1,000.

Figures \ref{fig:x_block_fix_ni_blinf2}-\ref{fig:x_bernoulli_var_ni_blinf2} summarize simulations that represent observational study design, which includes clusters based on having at least one ``index" case at baseline. These simulations were conducted as follows. We started with all subjects being uninfected and ran simulation for $T_i = 75$ (Figure \ref{fig:x_block_fix_ni_blinf2}) or $T_i = 150$ (Figure \ref{fig:x_bernoulli_var_ni_blinf2}). This time point then became the time of ``baseline'' assessment, at which we selected clusters with at least one infected subject. For different values of $\beta$ the initial number of clusters $N$ was chosen such that the number of clusters with at least one infected at ``baseline'' assessment was approximately 500. If there were more than one subject per cluster infected at baseline, an ``index" case was selected randomly from among them. We then ran simulation for $T_i=10$ (resulting in cumulative incidence of approximately 0.15 among subjects uninfected at ``baseline" when $\beta=0$ and $\gamma=0$) and calculated the risk ratio in two ways: among all subjects uninfected at ``baseline", and among ``non-index" cases. In Figure \ref{fig:x_block_fix_ni_blinf2} number of simulations per combination of parameters $=50$, and in Figure \ref{fig:x_bernoulli_var_ni_blinf2} - $200$.

In all simulations presented in this subsection (Figures \ref{fig:x_block_fix_ni_blinf1_est} - \ref{fig:x_bernoulli_var_ni_blinf2}) force of infection parameters are held constant at the following values: $\alpha = 0.0001$, $\omega = 0.01$.

Introducing subjects infected at baseline with different probabilities conditional on the value of covariate $x$ may result in substantial direction bias that generally increases with the increase of the difference in these conditional probabilities. Under constant cluster size and block randomized distribution of $x$, when $Pr[Y(0)=1|X=1] = Pr[Y(0)=1|X=0]$, bias generally behaves in the way similar to the same study design with no subjects infected at baseline (Figures \ref{fig:x_block_fix_ni_blinf1_est} and \ref{fig:x_block_fix_ni_blinf1_dbias}). Under variable cluster size and independent Bernoulli distribution of $x$ (Figures \ref{fig:x_bernoulli_var_ni_blinf1_est} and \ref{fig:x_bernoulli_var_ni_blinf1_dbias}), the risk ratio is direction-unbiased.

When study clusters are selected based on having at least one subject per cluster infected at baseline (``index" case), bias behavior under constant cluster size and block randomized distribution of $x$ is similar to having no subjects infected at baseline. Whether risk ratio is calculated among subjects uninfected at baseline, or excluding only ``index" cases the risk ratio exhibits direction bias in the same regions of the $(\beta, \gamma)$ parameter space. Under independent Bernoulli distribution of $x$, when we start with no subjects infected at time zero, the risk ratio is always direction-unbiased (Result \ref{prop:biasindependentx}). When we include clusters based on infections at ``baseline", and calculate the risk ratio excluding all subjects infected at the start of observation, we still have this nice property of the risk ratio. However, when the risk ratio is calculated excluding only the ``index" cases under the same conditions, direction-unbiasedness does not necessarily hold (Figure  \ref{fig:x_bernoulli_var_ni_blinf2}).

%%%%%%%% Ratio of omega to alpha %%%
\subsubsection *{Ratio $\omega / \alpha $} 

\begin{figure}
\centering
\includegraphics[scale=0.7]{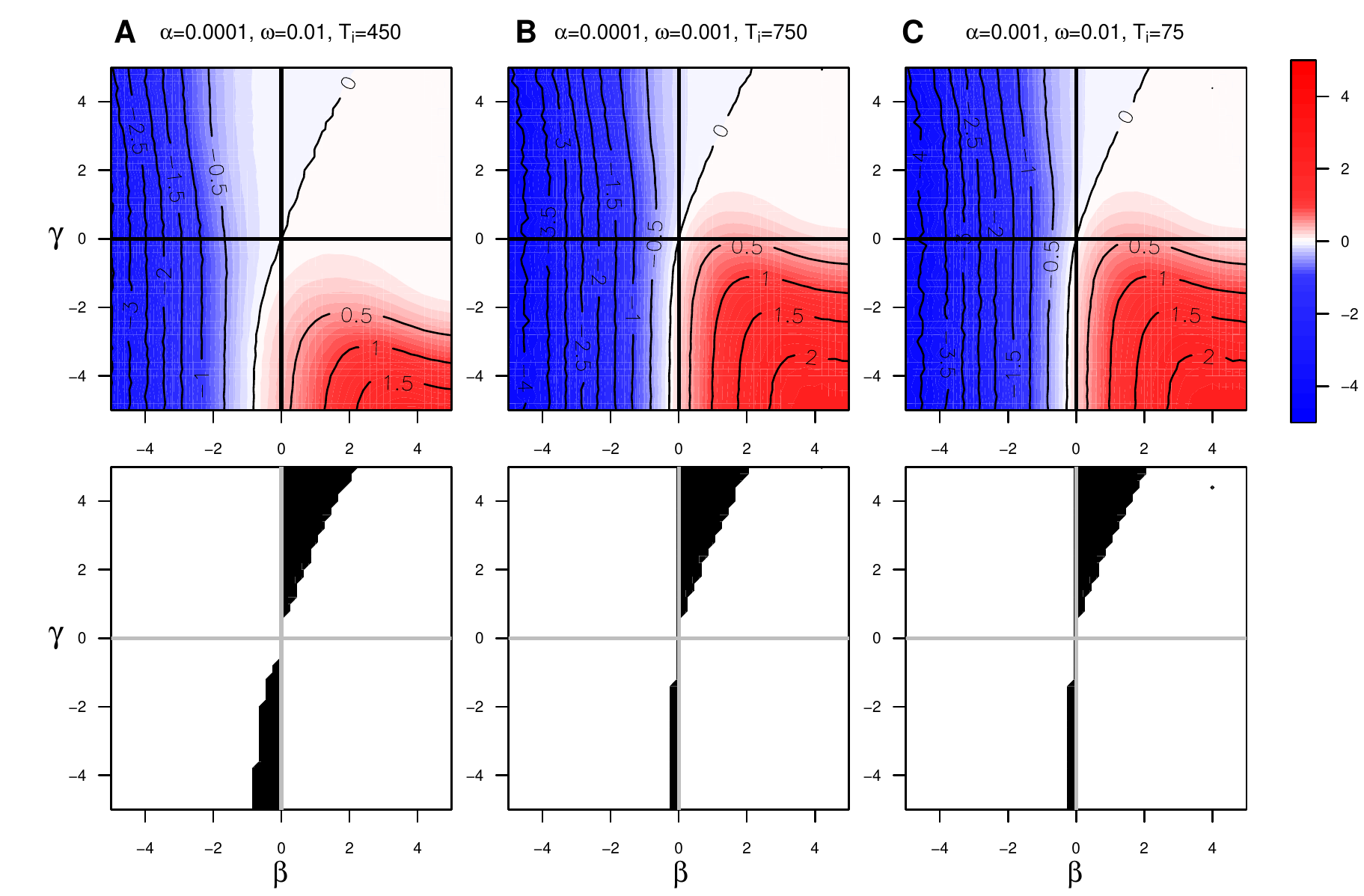}
\caption{$\log[RR]$ (top row) and region of direction bias (bottom row) as a function of $\beta$ and $\gamma$ for different combinations of ratio $\omega / \alpha$ and observation time $T_i$, when cluster size is constant ($n_i=4$ for all $i$), and $x$ is block randomized such that $\sum_{j=1}^{n_i} x_{ij}= 2$. In all plots observation time is constant and chosen such that the cumulative incidence when $\beta=0$ and $\gamma=0$ is approximately 0.15 for a given combination of $\alpha$ and $\omega$.} 
\label{fig:x_block_fix_ni_by_foi}
\end{figure}

\begin{figure}
\centering
\includegraphics[scale=0.7]{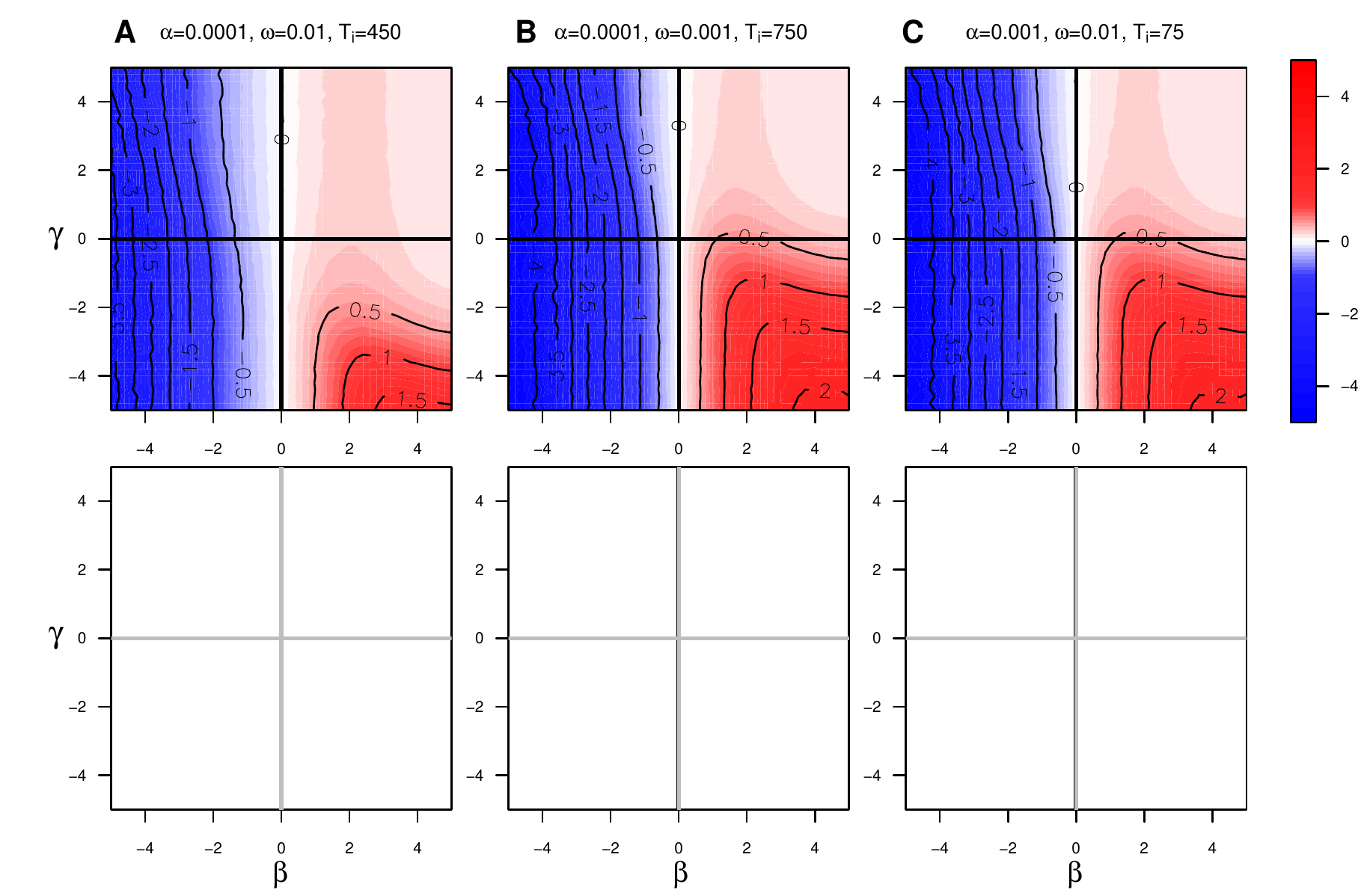}
\caption{$\log[RR]$ (top row) and region of direction bias (bottom row) as a function of $\beta$ and $\gamma$ for different combinations of ratio $\omega / \alpha$ and observation time $T_i$, when cluster size $n_i \sim \text{Pois}(3)+1$ and $x$ has independent Bernoulli distribution with $Pr[x=1]=0.5$. In all plots observation time is constant and chosen such that the cumulative incidence when $\beta=0$ and $\gamma=0$ is approximately 0.15 for a given combination of $\alpha$ and $\omega$.} 
\label{fig:x_bernoulli_var_ni_by_foi}
\end{figure}

\begin{figure}
\centering
\includegraphics[scale=0.9]{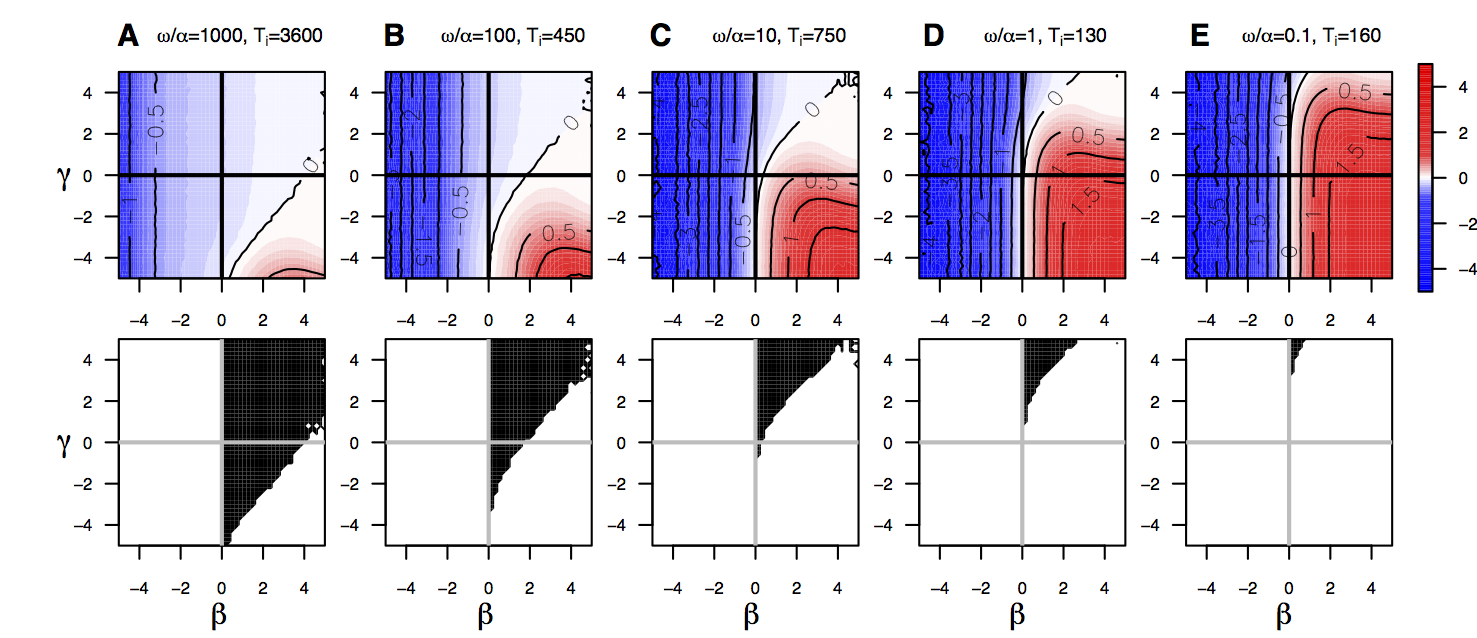}
\caption{$\log[RR]$ (top row) and region of direction bias (bottom row) as a function of $\beta$ and $\gamma$ for different combinations of ratio $\omega / \alpha$ and observation time $T_i$, when cluster size $n_i \sim \text{Pois}(3)+1$ and $x$ is block randomized such that $\sum_{j=1}^{n_i} x_{ij}=1$ for all $i$. In all plots observation time is constant and chosen such that the cumulative incidence when $\beta=0$ and $\gamma=0$ is approximately 0.15 for a given combination of $\alpha$ and $\omega$.} 
\label{fig:x_block1_var_ni_by_foi}
\end{figure}

\begin{figure}
\centering
\includegraphics[scale=0.9]{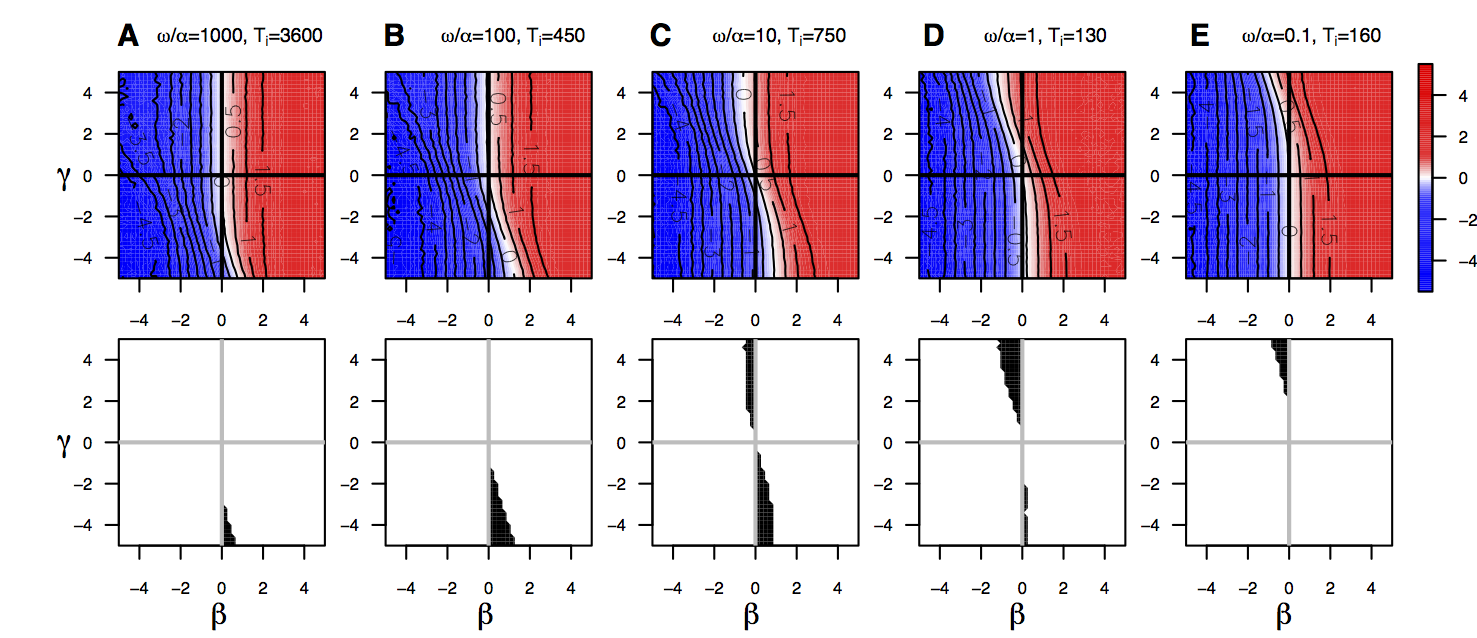}
\caption{$\log[RR]$ (top row) and region of direction bias (bottom row) as a function of $\beta$ and $\gamma$ for different combinations of ratio $\omega / \alpha$ and observation time $T_i$, when cluster size $n_i \sim \text{Pois}(3)+1$ and $x$ is cluster randomized: half of clusters have $\sum_{j=1}^{n_i} x_{ij}= n_i$, and remaining half have $\sum_{j=1}^{n_i} x_{ij}= 0$. In all plots observation time is constant and chosen such that the cumulative incidence when $\beta=0$ and $\gamma=0$ is approximately 0.15 for a given combination of $\alpha$ and $\omega$.} 
\label{fig:x_cluster_var_ni_by_foi}
\end{figure}

This subsection looks at the influence of the ratio $\omega / \alpha$ of per-subject within-cluster to exogenous force of infection. Figure \ref{fig:x_block_fix_ni_by_foi} shows simulation results for different values of $\omega$ and $\alpha$ under constant cluster size and block randomized distribution of $x$; Figure \ref{fig:x_bernoulli_var_ni_by_foi} - under variable cluster size and independent Bernoulli distribution of $x$, Figure \ref{fig:x_block1_var_ni_by_foi} - under variable cluster size and block randomized distribution of $x$, when exactly one subject per cluster has a value of $x=1$, and Figure \ref{fig:x_cluster_var_ni_by_foi} - under variable cluster size and cluster randomized distribution of $x$. Similarly to the previous subsection, in all plots the observation time $T_i$ was chosen such that the cumulative incidence when $\beta=0$ and $\gamma=0$ is approximately 0.15. In all simulations presented in this subsection (Figures \ref{fig:x_block_fix_ni_by_foi} - \ref{fig:x_cluster_var_ni_by_foi}) the following parameters are held constant:
\begin{itemize} 
	\item All subjects uninfected at baseline: $Y_{ij}(0) = 0$ for $i = 1,\ldots, N$ and $j = 1, \ldots, n_i$;
	\item Simulation parameters: number of clusters $N=500$, number of simulations per combination of parameters $=200$. 
\end{itemize} 

In the simple case of clusters of size two, for which we have derived analytic expression for the risk ratio bias, we have demonstrated that bias behavior is exactly the same for the same ratio of $\omega / \alpha$ when observation time $T_i$ is chosen such that it keeps cumulative incidence the same (Figures \ref{supfig:exact_est} and \ref{supfig:exact_dbias}). Figures  \ref{fig:x_block_fix_ni_by_foi} and \ref{fig:x_bernoulli_var_ni_by_foi} show that this property holds for more complex study designs. Figure \ref{fig:x_block1_var_ni_by_foi} shows that under the same conditions on $T_i$ and block randomized distribution of $x$, the region of the $(\beta, \gamma)$ space, where risk ratio exhibits direction bias increases with the increase of the ratio $\omega / \alpha$ as proportionally more infections get attributed to within-cluster transmission. However, under cluster randomized distribution of $x$ (Figure \ref{fig:x_cluster_var_ni_by_foi}) region, where the risk ratio is not direction-unbiased, is largest when ratio $\omega / \alpha$ gets closer to one.

\end{document}